\definecolor{webgreen}{rgb}{0,.5,0}
\definecolor{webbrown}{rgb}{.8,0,0}
\definecolor{emphcolor}{rgb}{0.95,0.95,0.95}
\ifpdf \hypersetup{pdftex,
            pdfstartview=FitH,
            bookmarksopen=true,
            bookmarksnumbered=true
} \else \hypersetup{dvips} \fi
\newcommand {\ph}{\hat{p}}
\newcommand {\ah}{\hat{\alpha}}
\newcommand {\pcheck}{\tilde{p}}
\newcommand {\acheck}{\tilde{\alpha}}
\renewcommand{\S}{\mathcal{S}}
\numberwithin{equation}{section}
\newtheorem{theorem}{Theorem}[section]
\newtheorem{proposition}{Proposition}[section]
\newtheorem{remark}{Remark}[section]
\newtheorem{lemma}{Lemma}[section]
\numberwithin{remark}{section} \numberwithin{proposition}{section}
\numberwithin{corollary}{section}
\newcommand {\R}{\mathbb{R}}
\newcommand {\F}{\mathcal{F}}
\newcommand {\p}{\mathbb{P}}
\newcommand {\E}{\mathbb{E}}
\newcommand {\PP}{\mathbb{P}}
\def\1{1{\hskip -3.3 pt}\hbox{I}}
\def\lv{{L\'{e}vy}}
\newcommand{\diff}{{\rm d}}
\newcommand{\lev}{L\'{e}vy }
\newcommand {\lap}{\zeta}
\newcommand{\lapinv}{\Phi(r)}
\title[American Step-Up and Step-Down Default Swaps]{American Step-Up and Step-Down  Default Swaps  under L\'{e}vy Models*}
\author[T. Leung]{Tim S.T. Leung$^\dag$\,}\thanks{$\dag$ \tiny{IEOR Department, Columbia University, New York NY 10027, USA. Email: \mbox{leung@ieor.columbia.edu}}}
\author[K. Yamazaki]{\,Kazutoshi Yamazaki$^\ddag$\,}\thanks{$\ddag$\,\tiny{Center for the Study of Finance and Insurance,
Osaka University, 1-3 Machikaneyama-cho, Toyonaka City, Osaka 560-8531, Japan. Email: \mbox{k-yamazaki@sigmath.es.osaka-u.ac.jp}} }
\thanks{\tiny{*{The authors would like to thank Prof.\ Goran Peskir and the anonymous referees for reading the draft and for their helpful remarks and suggestions.} This work is partially supported by NSF grant DMS-0908295, by Grant-in-Aid for Young Scientists (B) No.\ 22710143, the Ministry of Education, Culture, Sports, Science and Technology, and  by Grant-in-Aid for Scientific Research (B) No.\  2271014, Japan Society for the Promotion of Science.}}
\begin{document}
\begin{abstract} This paper studies the valuation of a class of  default swaps with
the embedded option to switch to a different premium and notional
principal anytime prior to a credit event. These are early exercisable
contracts that give the protection buyer or seller the right to step-up,
step-down, or cancel the swap position. The pricing problem is formulated
under a structural credit risk  model based on \lev processes. This leads to
the analytic and numerical studies of several  optimal stopping problems
subject to early termination due to default. In a general spectrally negative
\lev model,  we rigorously derive the  optimal exercise strategy. This allows
for instant computation of the credit spread under various specifications.
Numerical examples are provided to examine the impacts of default risk and
contractual features on the credit spread and exercise strategy. \noindent
\end{abstract}
\tiny{\date{Accepted: September 6, 2012. First submission: December 25, 2010}.}
\maketitle \noindent \small{\textbf{Keywords:}\,  optimal stopping; credit default swaps; step-up and step-down options; \lev processes; {scale functions} }\\
 \noindent \small{\textbf{JEL Classification:}\, G13,  G33, D81, C61 }\\
\noindent \small{\textbf{Mathematics Subject Classification (2010):}\, 60G51\,\,   91B25\,\,  91B70  }

\section{Introduction}
Credit default swaps (CDSs) are among the most liquid and widely used instruments for managing and transferring credit risks. Despite the recent market turbulence,  their market size still exceeds US\$30 trillions\footnote{\tiny{According to the ISDA Market Survey, the total CDS outstanding volume in 2009 is US\$30,428 billions.}}. In a standard single-name CDS, the protection buyer pays a pre-specified periodic premium (the CDS spread) to the protection seller  to cover the loss of the face value of an asset if the reference entity defaults before expiration. The contract   stipulates that both the buyer and seller have to commit to their respective positions until the default time or expiration date. To modify the initial CDS exposure in the future, one common way is to acquire appropriate positions later from the market, but it is subject to credit spread fluctuations and market illiquidity, especially during adverse market conditions.

To provide additional flexibility to investors, credit default swaptions and other derivatives on  CDSs have emerged.  For instance, the payer (receiver) default swaption is a European option that gives the holder the right to buy (sell) protection at a pre-specified strike spread at expiry, given that default has not occurred. Otherwise, the  swaption is knocked out. See, for example, \cite{HullWhiteCDswaption03}.  By appropriately combining a default swaption with a vanilla CDS position, one can create a callable or putable default swap. A callable (putable) CDS allows the protection buyer (seller) to terminate the contract at some fixed future date. Hence, as described here, the callable/putable CDSs are in fact \emph{cancellable} CDSs.  Typically, the callable feature is   paid for through incremental premium on top of the standard CDS spread, so selling a callable CDS can enhance the yield from the seller's perspective.

In this paper, we consider a class of default swaps embedded with an option for the investor (protection buyer or seller) to adjust the premium and notional amount once for a pre-specified fee prior to default. Specifically, these non-standard contracts  equip the standard default swaps with the early exercisable rights such as (i) the \emph{step-up option} that allows the investor to increase the protection and premium at exercise, and (ii) the \emph{step-down option} to  reduce the protection and premium. By definition, these contracts are indeed generalized versions of the callable and putable CDSs mentioned above, and thus are more flexible credit risk management tools. Henceforth, we shall use the more general meaning of the terminology callable and putable default swaps, rather than limiting them to cancellable CDSs.

The main contribution of our paper is to determine the credit spread for these default swaps under a \lev model, and analyze the optimal strategy for the buyer or seller to exercise the step-up/down option. Specifically, we model the default time as the \emph{first passage time} of a \lev process representing some underlying asset value. We decompose the default swap with step-up/down option into a combination of an American-style credit default swaption and  a vanilla default swap. From the investor's perspective, this gives rise to an optimal stopping problem subject to possible sudden early termination from default risk.  Our formulation is based on a general \lev process, and then we solve analytically for a general \emph{spectrally negative} \lev process. By employing the \emph{scale function} and other properties of \lev processes, we derive analytic characterization for the optimal exercising strategy.  This in turn allows for a highly efficient computation of the credit spread for these  contracts. We provide a series of numerical examples to illustrate the credit spread behavior and optimal exercising strategy under various contract specifications and scenarios.

We adopt a \lv-based structural credit risk model that extends the original approach introduced by Black and Cox \cite{BlackCox76}  where the asset value follows a geometric Brownian motion.  Other structural  default models based on \lev and other jump processes can  also be found in  \cite{schoutensCDS07,helberenkrogers,zhou2001}. To our best knowledge, the valuation of American step-up and step-down default swaps has not been studied elsewhere.  For \lv-based  pricing  models for other credit derivatives, such as European credit default swaptions and collateralized debt obligations (CDOs), we highlight \cite{AsmussenMadanPistorius07,EberleinKlugeSchb06,schoutenjosson}, among others.

\lev processes have been widely applied in  derivatives pricing. Some well-known examples of \lev pricing models include the variance gamma (VG) model   \cite{Madan_1998},  the normal inverse Gaussian (NIG) model \cite{Barndorff_1998}, the CGMY model \cite{CGMY}  as well as a number of  jump diffusion models (see \cite{Kou_Wang_2004,Merton_1976}). In this paper, instead of focusing on a particular type of \lev process, we consider a general class of \lev processes with only negative jumps. This is called the spectrally negative \lev process and has been drawing much attention recently, as a generalization of the classical Cram\'er-Lundberg and other compound-Poisson type processes. A number of fluctuation identities can be expressed in terms of the scale function and are used in a number of applications.  We refer the reader to \cite{Alili2005,Avram_2004} for derivatives pricing, \cite{Kyprianou_Surya_2007} for optimal capital structure, \cite{Baurdoux2008,Baurdoux2009}  for stochastic games, \cite{Avram_et_al_2007,Kyprianou_Palmowski_2007,Loeffen_2008} for optimal dividend problem, and \cite{Egami_Yamazaki_2010} for optimal timing of capital reinforcement.  For a comprehensive account, see \cite{Kyprianou_2006}.

A key part of our analysis focuses on a non-standard American option  subject to default risk (see Proposition \ref{prop-V}).   We discuss both the perpetual and finite-maturity cases.  The former is related to some existing work on  perpetual early exercisable options under various \lev models, for example \cite{Asmussen_2004,Avram_2004,Boyarchenko_QF04,levendorski_QF04,mordecki_FS02}. The infinite horizon nature  provides significant convenience for analysis and sometimes leads to  explicit solutions.  Working under a general {spectrally negative} \lev model, we provide  analytic  results for  the timing strategies and contract values.   For numerical examples, we select the phase-type (and hyperexponential) fitting approach by Egami and Yamazaki \cite{Egami_Yamazaki_2010_2} to illustrate the cases when the process is a mixture of Brownian motion and a compound Poisson process with Pareto-distributed jumps. We then apply our formulation and  results to study the finite-maturity case.  For finite-maturity American options under \lev models,  the pricing problem  typically requires  numerical solutions to the underlying partial integral differential equation (PIDE), or other simulation methods; see, among others, \cite{avramchanusabel02,hirsaMadan, jaimungalSukov08}. In our paper, we illustrate how to approximate the finite-maturity case  using our analytical solutions to the perpetual case.

The rest of the paper is organized as follows. In Section \ref{sect-overview}, we formulate the default swap valuation problems under a general \lev model. In Section \ref{sect-buyer-sol}, we focus on the spectrally negative \lev model and provide a complete solution and detailed analysis. Section \ref{section_numerical} provides the numerical results. In Section  \ref{section_finite_maturity}, we apply the results to the finite-maturity case.  Section \ref{section_conclusion} concludes the paper and presents some extensions of our model.  Most proofs are included in the Appendix.

\section{Problem Overview}\label{sect-overview}
Let $(\Omega, \F, \p)$ be a complete probability space, where  $\p$ is the \emph{risk-neutral} measure used for pricing. We assume there exists a L\'{e}vy process $X=\{X_t;\, t\ge 0\}$, and denote by $\mathbb{F}=(\F_t)_{t\ge 0}$ the filtration generated by $X$. The value of the reference entity (a company stock or other assets) is assumed to evolve according to an \emph{exponential \lev process} $S_t = e^{X_t}$, $t\geq0$. Following the Black-Cox \cite{BlackCox76} structural approach, the default event is triggered by $S$ crossing a lower level $D$, so the default time is given by the first passage time: $\theta_D :=\inf\{\,t\ge 0\,:\, X_t\,  \leq \,\log D\,\}$. Without loss of generality, we can take $\log D =0$ by shifting the initial value $x$. Henceforth, we shall work with the default time:
\[\theta:=\inf\{\,t\ge 0\,:\, X_t\,  \leq \,0\,\},\]
where we assume $\inf \emptyset = \infty$.  Throughout this paper, we denote by  $\p^x$ the probability law and $\E^x$ the expectation under which $X_0=x$.

\subsection{Credit Default Swaps and Swaptions} In preparation for default swaps with step-up/down options, let us start with the basic concepts of credit default swaps and swaptions. Under  a $T$-year CDS on a unit face value, the protection buyer pays a constant premium payment \$$p$ continuously over time until default time $\theta$ or maturity $T$, whichever comes first. If default occurs before  $T$, the buyer will receive the default payment $\alpha := 1-R$ at time $\theta$, where  $R$ is the assumed constant recovery rate (typically 40\%).   From the buyer's perspective, the expected discounted payoff is given by
 \begin{align}\bar{C}(x, T;p,\alpha) :=\E^x \left[ -\int_0^{\theta\wedge T} e^{-rt} p\, \diff t + \alpha e^{-r \theta} 1_{\{ \theta \le T\}}   \right],  \label{def_C_bar}
\end{align}
where $r > 0$ is the positive constant risk-free interest rate. The quantity $\bar{C}(x, T;p,\alpha)$ can be viewed as the market price for the buyer to enter (or long) a CDS with an agreed premium $p$, default payment $\alpha$ and maturity $T$. On the opposite side of the trade, the protection seller's expected cash flow is  $-\bar{C}(x,T;p,\alpha) =\bar{C}(x,T;-p,-\alpha) \in\R$.

In standard practice, the CDS spread $\bar{p}$ is determined at inception such that $\bar{C}(x,T;\bar{p},\alpha) =0$,
yielding zero expected cash flows for both  parties. Direct calculations show that the credit spread can be expressed as
\begin{align}
\bar{p}(x, T;\alpha)= \frac{\alpha \, r\,\lap_T(x) }{1-\lap_T(x) - e^{-rT} \p^x\{\theta >T\} },\quad \,\text{where}\quad \lap_T(x) := \E^x \left[  e^{-r\theta} 1_{\{\theta \leq T\}}\right].
\label{def_p_bar} \end{align}

For most \lev models, due to the lack of explicit formulas, the computation of the CDS spread is  based on  simulation or other approximation methods  (see, for example, \cite{schoutensCDS07}). Alternatively, one can consider the perpetual case as an approximation  and to obtain analytic or explicit bounds. This is a popular approach adopted for equity derivatives, especially American options, for which the finite-maturity contracts do not admit closed-form solutions while the perpetual versions often do (see \cite{Boyarchenko_QF04, levendorski_QF04, mordecki_FS02} for some examples under  \lev models).

To illustrate, we set $T=+\infty$ and express the buyer's CDS price as
  \begin{align}C(x;p,\alpha) &:=\E^x \left[ -\int_0^{\theta } e^{-rt} p\, \diff t + \alpha\, e^{-r \theta}     \right] \label{Cxp}  \\
  &~\,=  \left(\frac{p}{r} +\alpha\right) \lap(x) -\frac{p}{r},\notag
\end{align}
where
\begin{align}
 \lap(x) := \E^x \left[  e^{-r\theta}\right]. \label{zeta}
\end{align}
The seller's CDS price is $-C(x;p,\alpha) =C(x;-p,-\alpha) \in \R$. Solving $C(x;p,\alpha) =0$ yields the credit spread:
\begin{align}
p(x; \alpha)=  \frac{\alpha\, r\,\lap(x) }{1-\lap(x) }.
\label{def_p_pertpetual} \end{align}Therefore, the credit spread calculation reduces to computing the Laplace transform $\lap(x)$, which admits an explicit analytic formula under some well-known \lev models (see \eqref{laplace_theta} below for the spectrally negative case). It is clear from \eqref{def_p_pertpetual} that the CDS spread scales linearly in $\alpha$: $p(x; \alpha) =\alpha\,  p(x; 1)$.

Next, we introduce a  \emph{perpetual} American payer and receiver default swaptions, which give the holder the right to, respectively, buy and sell protection on a perpetual CDS with default payment $a$ at a pre-specified spread $\kappa$ for the strike price $K$ upon exercise. If default occurs prior to exercise, then the swaption is knocked out and becomes worthless.   The payer and receiver swaption holder is required to pay an upfront fee, which is given by respectively
 \begin{align}v(x;\kappa,a, K) &:=\sup_{\tau \in \S}\E^x \left[ e^{-r \tau}  \left(C(X_{\tau}; \kappa,a) -K\right)^+  1_{\{ \tau < \theta \}}  \right], \quad \text{ and }\label{Amer_v}\\
u(x;\kappa,a,K) &:=\sup_{\tau \in \S}\E^x \left[ e^{-r \tau}  \left(-C(X_{\tau}; \kappa,a)-K\right)^+  1_{\{ \tau < \theta \}}  \right], \label{Amer_u}
\end{align}where
\begin{align}
\S : = \left\{ \mathbb{F}\textrm{-stopping time } \tau: \tau \leq \theta \; \text{ a.s. }\right\} \label{def_S}
\end{align}  is the set of all $\mathbb{F}$-stopping times \emph{smaller than or equal to} the default time. The two price functions are related by
\begin{align}
v(x;\kappa,a, K) = u(x;-\kappa,-a, K). \label{equal_u_v}
\end{align}
  In summary, $v(x;\kappa,a, K)$ is the payer default swaption price when $\kappa, a \geq0$, and it is the receiver default swaption price when $\kappa, a \leq 0$.

  \begin{remark} We remark that the \emph{perpetual} American payer and receiver default swaptions introduced above are non-standard option contracts, but they bear similarity to the traditional European-style default swaptions. In Section \ref{section_finite_maturity}, we will discuss the finite-maturity version of these contracts.
  \end{remark}

\subsection{American Callable Step-Up and Step-Down Default Swaps}\label{sect-callableCDS}
Next, we consider a default swap contract with an embedded option that permits
the protection buyer to change the face value and premium once for a fee. We   discuss   the perpetual case here, and  the finite-maturity case in Section \ref{section_finite_maturity}.
Beginning from initiation, the buyer pays a premium $p$ for a protection of
a unit face value. At any time prior to default, the buyer can select a time
$\tau$ to switch to a new contract with a new premium $\ph$ and face
value $q$ for a fee $\gamma \geq 0$. The default payment then changes from
$\alpha$ to $\ah = q \alpha$ after the exercise time $\tau$. Here,
$p, \alpha, \ph, \ah$, and $\gamma$ are constant non-negative parameters
pre-specified at time zero. The  buyer's maximal expected cash flow is given
by
\begin{align}&V(x; p,\ph, \alpha, \ah, \gamma)  \label{valuefn}\\
 &:=\sup_{\tau \in \S} \E^x \left[ -\int_0^\tau e^{-rt} p\,\diff t + 1_{\{\tau < \infty\}}\bigg( -\int_\tau^\theta e^{-rt}\ph\,\diff t -e^{-r \tau} \gamma  1_{\{\tau <\theta \}} + e^{-r \theta}(\ah 1_{\{\tau <\theta\}} +\alpha 1_{\{\tau =\theta \}}) \bigg) \right], \notag
\end{align}
with  $\S$   defined   in \eqref{def_S}.

This formulation covers default swaps with the following provisions:
\begin{enumerate}\item \emph{Step-up Option}: if $\ph> p$ and $\ah >\alpha$, then the buyer is allowed to increase the coverage once from $\alpha$ to $\ah$ by paying the fee $\gamma$ and a higher premium $\ph$ thereafter.
\item \emph{Step-down Option}: when $\ph <p$ and $\ah <\alpha$, then the buyer can reduce the coverage once from $\alpha$ to $\ah$ by paying the fee $\gamma$ and a reduced premium $\ph$ thereafter.
    \item \emph{Cancellation Right}: as a special case of the step-down
        option with $\ph = \ah =0$, the resulting contract allows the buyer
        to terminate the default swap at time $\tau$.
\end{enumerate}
In addition,  the perpetual vanilla CDS corresponds to the case with $\gamma=0$, $p=\ph$ and $\alpha =\ah$, and the CDS spread is given by \eqref{def_p_pertpetual}. We ignore the  contract specifications with $(\ph - p)(\ah -\alpha)\leq 0$  since they would mean paying more (less) premium in exchange for a reduced (increased) protection after exercise.  In summary, we study the valuation of the (perpetual) \emph{American callable step-up/down} default swaps. For any fixed parameters $(p,\ph, \alpha, \ah, \gamma)$, the value $V(x)$ is referred to as the buyer's price, so the seller's price is $-V(x)$. The credit spread $p^*$ is determined from the equation $V(x;p^*,\ph, \alpha, \ah, \gamma) =0$ so that no cash transaction occurs at inception.

In preparation for our solution procedure, we first provide a useful representation of the buyer's value $V$. Define
\begin{align}
\acheck := \alpha - \ah \quad \textrm{and} \quad \pcheck := p - \ph. \label{acheck_pcheck}
\end{align}
Here,   $\acheck > 0$ and $\pcheck > 0$ hold for a step-down default swap and $\acheck < 0$ and $\pcheck < 0$ for a step-up default swap.
\begin{proposition}\label{prop-V} The perpetual American callable step-up/down default swap can be decomposed into a perpetual vanilla CDS plus a perpetual American payer/receiver swaption. Precisely, we have
\begin{align}V(x; p,\ph, \alpha, \ah, \gamma) = C(x;p,\alpha)+v(x;-\pcheck, -\acheck,\gamma),\label{valfn_lemma}\end{align}where  $C(\cdot)$ and $v(\cdot)$  are given in \eqref{Cxp} and  \eqref{Amer_v}, respectively.
\end{proposition}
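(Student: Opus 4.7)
The plan is to show the decomposition pathwise inside the expectation: starting from the integrand that defines $V$, I will split the cash flows into (i) the cash flows the buyer would receive if she never exercised the switch and simply held a vanilla CDS with parameters $(p,\alpha)$, plus (ii) an incremental cash flow attributable to exercising at time $\tau$. Taking expectation, (i) produces $C(x;p,\alpha)$, which does not depend on $\tau$ and comes out of the $\sup$; applying the strong Markov property at $\tau$ to (ii) will yield exactly the integrand defining $v(x;-\pcheck,-\acheck,\gamma)$, up to removing the positive part, which I handle at the end.

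For the algebraic step, on $\{\tau<\theta\}$ I rewrite
\[
-\!\int_0^\tau\! e^{-rt}p\,\diff t-\!\int_\tau^\theta\! e^{-rt}\ph\,\diff t
=-\!\int_0^\theta\! e^{-rt}p\,\diff t+\frac{\pcheck}{r}\bigl(e^{-r\tau}-e^{-r\theta}\bigr),
\]
and I use the identity $\ah\,1_{\{\tau<\theta\}}+\alpha\,1_{\{\tau=\theta\}}=\alpha-\acheck\,1_{\{\tau<\theta\}}$ on the default payoff. Combined with the fact that $\tau=\theta$ forces the incremental term $\frac{\pcheck}{r}(e^{-r\tau}-e^{-r\theta})$ to vanish, the payoff under $\E^x$ becomes
\[
\Bigl[-\!\int_0^\theta\! e^{-rt}p\,\diff t+\alpha e^{-r\theta}\Bigr]
+1_{\{\tau<\theta\}}\Bigl[\tfrac{\pcheck}{r}\bigl(e^{-r\tau}-e^{-r\theta}\bigr)-e^{-r\tau}\gamma-\acheck e^{-r\theta}\Bigr].
\]
The cases $\tau=\infty$ (which forces $\theta=\infty$ by $\tau\in\S$) and $\{\tau<\infty,\theta=\infty\}$ need a brief check, but the $1_{\{\tau<\theta\}}$ factor absorbs them cleanly.

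Next I apply the strong Markov property at $\tau$ on $\{\tau<\theta\}$, where $X_\tau>0$ and, conditionally on $\F_\tau$, $\theta-\tau$ is distributed as the first passage time to $(-\infty,0]$ from $X_\tau$. This gives $\E^x\!\bigl[e^{-r\theta}1_{\{\tau<\theta\}}\mid\F_\tau\bigr]=1_{\{\tau<\theta\}}e^{-r\tau}\lap(X_\tau)$, so after collecting coefficients and using the closed form $C(y;-\pcheck,-\acheck)=(-\tfrac{\pcheck}{r}-\acheck)\lap(y)+\tfrac{\pcheck}{r}$ from \eqref{Cxp}, the bracketed incremental term reduces to $e^{-r\tau}\bigl(C(X_\tau;-\pcheck,-\acheck)-\gamma\bigr)1_{\{\tau<\theta\}}$. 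Thus
\[
V(x;p,\ph,\alpha,\ah,\gamma)=C(x;p,\alpha)+\sup_{\tau\in\S}\E^x\Bigl[e^{-r\tau}\bigl(C(X_\tau;-\pcheck,-\acheck)-\gamma\bigr)1_{\{\tau<\theta\}}\Bigr].
\]

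Finally I must upgrade this to $v(x;-\pcheck,-\acheck,\gamma)$, which has a $(\cdot)^+$ inside. Given any $\tau\in\S$, the set $A=\{C(X_\tau;-\pcheck,-\acheck)\geq\gamma\}$ lies in $\F_\tau$, so $\tilde\tau:=\tau\,1_A+\theta\,1_{A^c}\in\S$, and $1_{\{\tilde\tau<\theta\}}e^{-r\tilde\tau}(C(X_{\tilde\tau};\cdot)-\gamma)=1_{\{\tau<\theta\}}e^{-r\tau}(C(X_\tau;\cdot)-\gamma)^+$; this shows the two suprema coincide. The main obstacle is keeping the pathwise bookkeeping honest across the three disjoint events $\{\tau<\theta\}$, $\{\tau=\theta<\infty\}$, and $\{\tau=\theta=\infty\}$ so that the indicators and $e^{-r\theta}$ (with the usual convention $e^{-r\infty}=0$) combine correctly; the strong-Markov step and the $(\cdot)^+$ promotion are then largely routine.
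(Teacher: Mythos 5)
Your proposal is correct and follows essentially the same route as the paper's proof: rearrange the cash flows into the vanilla CDS payoff plus an incremental exercise payoff, apply the strong Markov property at $\tau$ to identify the latter with $e^{-r\tau}\bigl(C(X_\tau;-\pcheck,-\acheck)-\gamma\bigr)1_{\{\tau<\theta\}}$, and then argue that inserting the positive part does not change the supremum. Your explicit construction of $\tilde\tau=\tau 1_A+\theta 1_{A^c}$ is a slightly more careful rendering of the paper's remark that it is never optimal to exercise when the payoff is negative (the paper uses $\theta\in\S$ with $h(X_\theta)=0$), but it is the same idea.
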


\begin{proof}First, by a rearrangement of integrals, the expression inside the expectation in \eqref{valuefn} becomes
\begin{align*}
&1_{\{ \tau < \infty \}} \left(   \int_\tau^\theta e^{-rt} \pcheck\, \diff t - \int_0^\theta e^{-rt} p\, \diff t - e^{-r \tau} \gamma 1_{\{\tau < \theta\}} - e^{-r \theta} \acheck 1_{\{\tau < \theta\}}  + e^{-r \theta} \alpha\right) + 1_{\{ \tau = \infty \}} \left( - \int_0^\infty e^{-rt} p\, \diff t\right) \\
&= 1_{\{ \tau < \infty \}} \left(\int_\tau^\theta e^{-rt} \pcheck \,\diff t - e^{-r \tau} \gamma 1_{\{\tau < \theta\}} - e^{-r \theta} \acheck 1_{\{\tau < \theta\}} \right)  - \int_0^\theta e^{-rt} p\, \diff t + e^{-r \theta} \alpha
\end{align*}
since $\tau = \infty$ implies $\theta = \infty$ by the definition of $\mathcal{S}$. Because the last two terms do not depend on $\tau$, we can rewrite the buyer's value function as
\begin{align}V(x) &= \underbrace{\sup_{\tau \in \S} \E^x \left[1_{\{\tau < \infty \}} \left(\int_\tau^\theta e^{-rt}\pcheck \,\diff t - e^{-r \tau} \gamma  1_{\{\tau <\theta \}} -  e^{-r \theta} \acheck  1_{\{\tau <\theta\}} \right) \right]}_{=:f(x)}-\E^x \left[ \int_0^\theta e^{-rt} p\, \diff t \right]  +\alpha\, \E^x \left[ e^{-r \theta}  \right]. \label{V_x_decomposition}
\end{align}
Here, the last two terms in fact constitute $C(x;p,\alpha)$. Next, using the fact   $\{ \tau  < \theta,  \tau < \infty \} =\{ X_\tau >  0, \; \tau < \infty \}$ for every $\tau \in \mathcal{S}$ and the strong Markov property of
$X$ at time $\tau$, we rewrite the first term as
\begin{align}
{f(x)}= \sup_{\tau \in \S}\E^x \left[ e^{-r \tau} h(X_\tau) 1_{\{\tau < \infty\}}\right],\label{vx_inproof}
\end{align}
with
\begin{align}h(x) := 1_{\{x > 0 \}}\left( \E^x \left[ \int_0^\theta e^{-rt} \pcheck \, \diff t\,  - e^{-r \theta} \acheck \, \right] - \gamma \right)= 1_{\{x > 0 \}} \big(C(x;-\pcheck,-\acheck)- \gamma \big).\label{hx_inproof}\end{align}

Since $\theta\in\S$ and $h(X_\theta)=0$ a.s.\ on $\{\theta < \infty\}$, it follows from  \eqref{vx_inproof} that $f(x) \geq0$. Therefore, it is never optimal to exercise at any $\tau$ if $h(X_\tau)<0$. Consequently, we can replace $h(x)$ with $(h(x))^+$ in \eqref{vx_inproof}.  As a result,  with $-\pcheck, -\acheck  > (< )\, 0$, the function $f(x)$ is indeed the price of a perpetual \emph{American payer (receiver) default swaption} written on the buyer's (seller's) CDS price with strike $\gamma\geq 0$. This implies that $f(x) = v(x; - \pcheck, - \acheck, \gamma)$ $\forall x\in\R$, and therefore \eqref{valfn_lemma} follows.\end{proof}

The decomposition \eqref{valfn_lemma} in Proposition \ref{prop-V} yields a
\emph{static replication} of  the American callable step-up/down default swap. To
this end, one may also verify the result by a no-arbitrage argument. We
summarize the buyer's and seller's positions in the American callable
step-up/down default swaps in Table \ref{table:4kinds}.  As we shall discuss in Section  \ref{section_finite_maturity} below, this also holds for the finite-maturity case.

 As an  illustrative example, let us consider the step-up case where the
premium and protection are doubled after exercise, i.e. $\hat{p}=2p$ and
$\ah = 2 \alpha$. For any candidate exercise time $\tau$, the observable
market prevailing vanilla CDS spread  is given by $p(X_\tau;\alpha)$ in
\eqref{def_p_pertpetual}, and $C(X_\tau; p(X_\tau; \alpha), \alpha)  =0$ by
definition. Hence, if $p(X_\tau; \alpha) \le -\tilde{p} = p$ at $\tau$, then
$h(X_\tau) \le -\gamma \le 0$, and the buyer will not exercise. This is
intuitive because the buyer is better off giving up the step-up option and doubling his protection by entering a separate CDS at the lower market
spread $p(X_\tau; \alpha)$ at time $\tau$.

\subsection{The American Putable Step-Up and Step-Down Default Swaps} \label{sect-putableCDS} Applying  the ideas from the previous subsection, we  formulate the pricing problem for  the perpetual \emph{American putable step-up/down} default swaps. These default swaps allow the protection seller (and not the buyer)  to change the protection premium and default payment  for a fee  anytime prior to default.  Let $p$ and $\alpha$  be the initial premium and default payment. The seller may select a time $\tau$ to switch to a new premium $\ph$ and default payment $\ah$ for a switching fee $\gamma \geq 0$. The seller's maximal expected cash flow is
\begin{align}U&(x; p, \ph, \alpha, \ah, \gamma) \label{valuefn_put}\\
&:=\sup_{\tau \in \S} \E^x \left[ \int_0^\tau e^{-rt} p\,\diff t + 1_{\{\tau < \infty\}}\left( \int_\tau^\theta e^{-rt}\ph\,\diff t -e^{-r \tau} \gamma  1_{\{\tau <\theta \}} - e^{-r \theta}(\ah 1_{\{\tau <\theta\}} +\alpha 1_{\{\tau =\theta \}}) \right) \right]\notag  \\
&\,= \sup_{\tau \in \S} \E^x \left[1_{\{\tau < \infty \}} \left(-\int_\tau^\theta e^{-rt}\pcheck \,\diff t - e^{-r \tau} \gamma  1_{\{\tau <\theta \}} +  e^{-r \theta} \acheck  1_{\{\tau <\theta\}} \right) \right] \notag \\
&\,~\,~  +\E^x \left[ \int_0^\theta e^{-rt} p\, \diff t \right]  -\alpha\, \E^x \left[ e^{-r \theta}  \right]. \notag
\end{align}

In particular, we will study the American putable default swap   with a step-up option (i.e. $p<\ph$ and $\alpha <\ah$) or step-down option (i.e. $p>\ph$ and $\alpha >\ah$). Again, the credit spread $p^*$ is chosen so that the seller's value function  is zero, i.e.  $U(x;p^*,\ph, \alpha, \ah, \gamma) =0$.

Following the procedure in the proof of Proposition \ref{prop-V} or by a no-arbitrage argument, we can simplify the seller's value $U$ as follows:

\begin{proposition}\label{prop-U} The perpetual American putable step-up/down default swap  can be decomposed into a short perpetual vanilla CDS and a long perpetual American receiver/payer default swaption. Precisely, we have
\begin{align}U(x; p,\ph, \alpha, \ah, \gamma) = - C(x;p,\alpha)+u(x;-\pcheck,- \acheck,\gamma),\label{valfn_U}\end{align}where  $C(\cdot)$ and $u(\cdot)$  are given in \eqref{Cxp} and  \eqref{Amer_u}, respectively.
\end{proposition}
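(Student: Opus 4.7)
The plan is to mimic the proof of Proposition \ref{prop-V}, since the second equality in the display \eqref{valuefn_put} already performs most of the algebraic work: it isolates the $\tau$-dependent piece from the two terms $\E^{x}[\int_{0}^{\theta} e^{-rt}p\,\diff t] - \alpha\,\E^{x}[e^{-r\theta}]$, and this latter pair is exactly $-C(x;p,\alpha)$ by \eqref{Cxp}. So what remains is to identify the first supremum in \eqref{valuefn_put} with the American receiver/payer default swaption value $u(x;-\pcheck,-\acheck,\gamma)$.

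First, I would use the fact that $\tau\in\S$ forces $\tau\le\theta$ a.s., so $\{\tau<\theta,\tau<\infty\}=\{X_{\tau}>0,\tau<\infty\}$. On this set, the strong Markov property of $X$ at time $\tau$, combined with the observation that under $\p^{X_{\tau}}$ the time to default equals the distribution of $\theta$ shifted from the new starting point $X_{\tau}$, gives
\begin{align*}
\E^{x}\!\left[\,1_{\{\tau<\infty\}}\!\left(-\!\int_{\tau}^{\theta}\!e^{-rt}\pcheck\,\diff t+e^{-r\theta}\acheck\,1_{\{\tau<\theta\}}\right)\right]
=\E^{x}\!\left[e^{-r\tau}\,g(X_{\tau})\,1_{\{\tau<\infty\}}\right],
\end{align*}
where $g(x):=1_{\{x>0\}}\bigl(-\E^{x}[\int_{0}^{\theta}e^{-rt}\pcheck\,\diff t]+\acheck\,\E^{x}[e^{-r\theta}]\bigr)$. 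By \eqref{Cxp}, the quantity inside the indicator is precisely $-C(x;-\pcheck,-\acheck)$, so after subtracting the fee $\gamma$ at exercise the payoff on $\{X_{\tau}>0\}$ equals $-C(X_{\tau};-\pcheck,-\acheck)-\gamma$.

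Next, setting $\tilde{h}(x):=1_{\{x>0\}}\bigl(-C(x;-\pcheck,-\acheck)-\gamma\bigr)$, the first supremum in \eqref{valuefn_put} becomes $\sup_{\tau\in\S}\E^{x}[e^{-r\tau}\tilde{h}(X_{\tau})1_{\{\tau<\infty\}}]$. Since $\theta\in\S$ and $\tilde{h}(X_{\theta})=0$ a.s.\ on $\{\theta<\infty\}$ (because $X_{\theta}\le 0$), choosing $\tau=\theta$ gives a value of at least zero, so it is suboptimal to exercise whenever $\tilde{h}(X_{\tau})<0$; hence we may replace $\tilde{h}$ by $\tilde{h}^{+}$ without altering the supremum. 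The resulting expression is exactly the definition of $u(x;-\pcheck,-\acheck,\gamma)$ in \eqref{Amer_u}. Combining with $-C(x;p,\alpha)$ yields \eqref{valfn_U}.

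There is essentially no hard step here: the argument is a sign-flipped replay of the proof of Proposition \ref{prop-V}, and the only item requiring a moment of care is the sign bookkeeping that turns the buyer's payer swaption into the seller's receiver swaption (and vice versa). Alternatively, one could invoke the no-arbitrage/static replication reading suggested after Proposition \ref{prop-V}: the seller of a putable step-up/down default swap is long the short-CDS cash flows $-C(x;p,\alpha)$ and long an American option to switch to the post-exercise seller position, which by \eqref{equal_u_v} and the definitions \eqref{Amer_v}--\eqref{Amer_u} is worth $u(x;-\pcheck,-\acheck,\gamma)$.
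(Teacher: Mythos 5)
Your proposal is correct and follows exactly the route the paper intends: the paper gives no separate proof of Proposition \ref{prop-U}, stating only that it follows "the procedure in the proof of Proposition \ref{prop-V} or a no-arbitrage argument," and your sign-flipped replay (strong Markov property at $\tau$, identification of the payoff with $-C(\cdot;-\pcheck,-\acheck)-\gamma$, replacement of the payoff by its positive part since $\tau=\theta$ yields zero) is precisely that procedure.
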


\begin{table}[t]\begin{small}
\begin{tabular}{|l||l|l|}
  \hline
 \emph{ Default Swap Types} & \emph{Protection Buyer's Position:} & \emph{Protection Seller's Position:} \\
 &$(+)$ a vanilla CDS and   &   $(-)$ a vanilla CDS and    \\
  \hline
   \emph{Callable Step-Up}   &  $(+)$ an American payer default swaption &  $(-)$ an  American payer default swaption  \\
   \hline
 \emph{Callable Step-Down}   & $(+)$ an American receiver default swaption &   $(-)$ an  American receiver default swaption  \\
  \hline
  \emph{Putable Step-Up}  & $(-)$ an  American receiver default swaption &  $(+)$ an American receiver default swaption  \\
   \hline
 \emph{Putable Step-Down}   & $(-)$ an American payer default swaption &   $(+)$ an American payer default swaption  \\
  \hline
\end{tabular}\end{small}\vspace{4pt}\caption{\small Positions of American callable/putable  step-up/down default swaps and their decompositions ($(+)/(-)$ stands for long/short). The seller's position is the opposite of the buyer's.  These decompositions hold for both the perpetual and finite-maturity cases (see  Propositions \ref{prop-V} and \ref{value_function_decomposition_finite} below). } \label{table:4kinds}
\end{table}

We summarize the buyer's and seller's positions in the American putable step-up/down default swaps in Table \ref{table:4kinds}.  To gain intuition on the seller's exercise decision, let us  look at the step-down  case where $\ph=0.5p$ and $\ah = 0.5 \alpha$. Recall that  $C(x; p, \alpha)$ is decreasing in $p$ and $C(X_\tau, p(X_\tau;\alpha), \alpha) = 0$ for any stopping time  $\tau$. If the market prevailing CDS spread is $p(X_\tau; \alpha) \le  p$ at some $\tau$, then the seller's default swaption payoff  is $-C(X_\tau; \pcheck, \acheck) -\gamma \leq -\gamma \leq 0$. The seller will not exercise  at $\tau$ since the protection of $0.5\alpha$ can be purchased from  a separate CDS at the lower prevailing spread $0.5 p(X_\tau;\alpha) \leq 0.5 p$.

\subsection{Symmetry Between Callable and Putable Default Swaps} \label{subsection_symmetry}By Propositions \ref{prop-V} and \ref{prop-U}, along with \eqref{equal_u_v}, we observe the following ``put-call parity" and symmetry identities:
\begin{align*}V(x;p, \ph, \alpha, \ah, \gamma ) - U(x;p, 2p-\ph, \alpha, 2\alpha-\ah, \gamma ) &= 2\, C(x;p, \alpha),\\
V(x;p, \ph, \alpha, \ah, \gamma ) + U(x;p, 2p-\ph, \alpha, 2 \alpha-\ah, \gamma )&= 2\,v(x;\ph-p, \ah-\alpha, \gamma ).
\end{align*}
The first equality means a long position in  an American callable step-up (step-down)  default swap and a short position in an American putable step-down (step-up)  default swap result in a double long position in a vanilla CDS. From the second equality,  a long position in both an American callable step-up (step-down)  default swap and  an American putable step-down (step-up)  default swap yields a double long position in an American payer (receiver) default swaption.  As we see in Section  \ref{section_finite_maturity}, this also holds for the finite-maturity case.

Furthermore, according to  \eqref{valfn_lemma} and \eqref{valfn_U}, the optimal exercise times for $V(x)$ and $U(x)$   are determined from $v(x)$ and $u(x)$ which depend on the triplet $(\pcheck, \acheck, \gamma)$ but not directly on $p$ and $\alpha$. Consequently, by \eqref{equal_u_v}, the same optimal exercising strategy applies for both
 \begin{enumerate}\item the protection buyer of an American callable default swap with a step-up (step-down) option with $(-\pcheck, -\acheck, \gamma)$, and
 \item the protection seller of an American putable default swap with a step-down (step-up)  option with $(\pcheck, \acheck, \gamma)$.
 \end{enumerate}
This observation means that it suffices to solve for two cases instead of four. Specifically, we shall solve for (i) the buyer's callable step-down case in (\ref{valfn_lemma}) and (ii) the seller's putable step-down case in (\ref{valfn_U}), both with $\pcheck > 0$ and $\acheck > 0$. In view of \eqref{vx_inproof} and the proof of Proposition \ref{prop-V}, this amounts to solving the following optimal stopping problems:
\begin{align}
v(x) &:= v(x; - \pcheck, - \acheck, \gamma) = \sup_{\tau \in \S}\E^x \left[ e^{-r \tau} h(X_\tau) 1_{\{\tau < \infty\}}\right], \label{ux1} \\
u(x) &:= u(x; - \pcheck, - \acheck, \gamma) = \sup_{\tau \in \S}\E^x \left[ e^{-r \tau} g(X_\tau) 1_{\{\tau < \infty\}}\right], \label{seller_problem}
\end{align}
where $\pcheck , \acheck > 0$ and
\begin{align}
h(x) &:= \left(\left( \frac \pcheck r - \gamma \right) - \left( \frac \pcheck r + \acheck \right) \zeta(x) \right) 1_{\{x > 0 \}}, \label{hx} \\
g(x) &:= \left( \left( -\frac \pcheck r - \gamma \right) + \left( \frac \pcheck r + \acheck \right) \zeta(x) \right) 1_{\{x > 0 \}}, \label{def_G}
\end{align}
for $x \in \R$.  Here,  $h(x)$ and $g(x)$ are computed using formula \eqref{Cxp}.

By inspecting (\ref{ux1}), it follows from \eqref{hx} that $h(x)\le 0$ $\,\forall x \in \R$ if $\gamma \ge  \pcheck/ {r}$. Financially, this means that the fee $\gamma$ to be paid exceeds the maximum benefit of stepping down, i.e. perpetual annuity with premium $p-\ph>0$. It is clear that choosing $\tau = \theta$ is optimal and the protection buyer will never exercise the step-down option.   Hence, we only need to study the non-trivial case with  the condition
\begin{align} 0 \leq \gamma < \frac \pcheck {r}.  \label{assumption_basic}\end{align}

For \eqref{seller_problem}, we have $g(x) \le 0$ $\,\forall x \in \R$ if  $g(0+) \leq 0$  because $g$ is decreasing in $x$ on $(0,\infty)$. Again, this means that $\theta$ is automatically optimal for the protection seller. Therefore, we shall focus on the case with $g(0+) > 0$  which also implies \begin{align} 0 \le \gamma <\acheck .\label{assumption_a}\end{align}
The intuition behind this is that the fee should not exceed the reduction in liability.

 .

\subsection{Solution Methods via Continuous and Smooth Fit}
We conclude this section by describing our solution procedure for the optimal stopping problems  under a general \lev model.  In the next section, we shall focus on the spectrally negative \lev model and derive an analytical solution.

For our first problem  (\ref{ux1}),   the protection buyer has an incentive to step-down when default  is less likely, or equivalently when $X$ is sufficiently high. Following this intuition, we denote the threshold strategy
\begin{align}
\tau_B^+ := \inf \left\{ t \geq 0: X_t \notin (0,B) \right\}, \quad B \geq 0. \label{definition_nu_b}
\end{align}Clearly, $\tau_B^+ \in \S$.
The corresponding expected payoff is given by
\begin{align}
v_B(x) := \E^x \left[ e^{-r \tau_B^+} h(X_{\tau_B^+}) 1_{\{\tau_B^+ < \infty\}}\right], \quad x \in \R.\label{vbx}
\end{align}
Note that $v_B(x) = h(x)=0$ for $x\leq0$. Sometimes it is more intuitive to consider the difference
\begin{align*}
\Delta_B(x) :=v_B(x)- h(x), \quad x \in \R.
\end{align*}

One common solution approach for many optimal stopping problems is \emph{continuous and smooth fit} (see \cite{Peskir_2001,Peskir_2006,Peskir_2000,Peskir_2002}). Applying to our problem, it involves two main steps:
\begin{enumerate}
\item[(a)] obtain $B^*$ that satisfies the continuous or smooth fit condition: $\Delta_{B^*}(B^*-) = 0$ or $\Delta_{B^*}'(B^*-) = 0$, and
 \item[(b)] verify the optimality of $\tau_{B^*}^+$ by showing (i) $v_{B^*}(x) \geq h(x)$ for $x \in \R$ and (ii) the process $M_t := e^{-r (t \wedge \theta)}v_{B^*}(X_{t \wedge \theta})$, $t \geq 0$, is a supermartingale.
\end{enumerate} To this end,  an analytical expression  for $v_{B}$ or  $\Delta_B$ would be useful.
\begin{lemma} \label{lemma_delta_b} Fix $B > 0$. The function $\Delta_B$ is given by
\begin{align}
\Delta_B(x)  =\begin{cases}\displaystyle \, \left( \frac \pcheck r - \gamma  \right)  \Lambda_1(x;B) + \left(\frac \pcheck r + \acheck \right) \Lambda_2(x;B)  + \gamma - \frac {\pcheck} r, & x \in (0,B),\\
\displaystyle \, 0, & x \notin  (0,B),\end{cases}\label{DeltaB}
\end{align}where $\Lambda_1(x;B):=\E^x\left[  e^{-r \tau_B^+}1_{\{\tau_B^+ < \theta, \; \tau_B^+ < \infty\}} \right]$ and  $\Lambda_2(x;B)  :=\E^x\left[  e^{-r \tau_B^+}1_{\{\tau_B^+ = \theta < \infty\}} \right]$.
\end{lemma}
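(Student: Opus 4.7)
The plan is to compute $v_B(x)$ directly from \eqref{vbx}, then subtract $h(x)$ and simplify. For $x\notin(0,B)$, we have $\tau_B^+=0$ a.s., so $v_B(x)=h(X_0)=h(x)$. When $x\le 0$ the indicator in \eqref{hx} forces $h(x)=0$, and when $x\ge B$ trivially $v_B(x)=h(x)$; either way $\Delta_B(x)=0$, which settles the second case in \eqref{DeltaB}.

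For $x\in(0,B)$, the event $\{\tau_B^+<\infty\}$ splits disjointly into the upward exit $\{\tau_B^+<\theta\}$ (on which $X_{\tau_B^+}\ge B>0$) and the downward exit $\{\tau_B^+=\theta\}$ (on which $X_{\tau_B^+}\le 0$, so $h(X_{\tau_B^+})=0$ by \eqref{hx}). Only the former contributes, and substituting the explicit form of $h$ gives
\begin{align*}
v_B(x)=\left(\tfrac{\pcheck}{r}-\gamma\right)\Lambda_1(x;B)-\left(\tfrac{\pcheck}{r}+\acheck\right)\E^x\!\left[e^{-r\tau_B^+}\zeta(X_{\tau_B^+})\,1_{\{\tau_B^+<\theta,\,\tau_B^+<\infty\}}\right].
\end{align*}

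The next step is to evaluate the residual expectation by decomposing $\zeta(x)=\E^x[e^{-r\theta}]$ along the same partition, noting $\tau_B^+\le\theta$. On $\{\tau_B^+<\theta\}$ the strong Markov property at $\tau_B^+$ yields $\E^x[e^{-r\theta}1_{\{\tau_B^+<\theta\}}]=\E^x[e^{-r\tau_B^+}\zeta(X_{\tau_B^+})1_{\{\tau_B^+<\theta\}}]$; on $\{\tau_B^+=\theta\}$ the overshoot $X_{\tau_B^+}\le 0$ gives $\zeta(X_{\tau_B^+})=1$, so this piece contributes exactly $\Lambda_2(x;B)$. Hence $\E^x[e^{-r\tau_B^+}\zeta(X_{\tau_B^+})1_{\{\tau_B^+<\theta,\,\tau_B^+<\infty\}}]=\zeta(x)-\Lambda_2(x;B)$. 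Plugging back and subtracting $h(x)=(\pcheck/r-\gamma)-(\pcheck/r+\acheck)\zeta(x)$ (valid for $x>0$), the two $\zeta(x)$ terms cancel and formula \eqref{DeltaB} drops out.

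The only delicate point is the strong Markov decomposition in the previous paragraph; in particular, one must use that $\zeta\equiv 1$ on $(-\infty,0]$ so that the contribution from the (possibly strict) L\'evy overshoot on $\{\tau_B^+=\theta\}$ reduces cleanly to the factor $\Lambda_2(x;B)$, rather than involving the law of $X_\theta$ itself. Everything else is routine algebra plus a case split on the sign of $X_{\tau_B^+}$.
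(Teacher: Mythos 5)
Your proof is correct, and it reaches \eqref{DeltaB} by a route that differs in a worthwhile way from the paper's. The paper never touches the closed form \eqref{hx} of $h$ in terms of $\zeta$: it unwinds $h(X_{\tau_B^+})$ back into the raw cash-flow integrals via the strong Markov property, combines $v_B$ and $-h$ into a single expectation, cancels $\int_{\tau_B^+}^\theta$ against $\int_0^\theta$ to leave $-\int_0^{\tau_B^+}e^{-rt}\pcheck\,\diff t$, and then evaluates that annuity term as $\tfrac{\pcheck}{r}\bigl(1-\E^x[e^{-r\tau_B^+}1_{\{\tau_B^+<\infty\}}]\bigr)$ while sorting the indicator events $\{\tau_B^+<\theta\}$ and $\{\tau_B^+=\theta\}$. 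You instead keep $h$ in its $\zeta$-form and push the computation through the single strong-Markov identity $\E^x\bigl[e^{-r\tau_B^+}\zeta(X_{\tau_B^+})1_{\{\tau_B^+<\theta,\;\tau_B^+<\infty\}}\bigr]=\zeta(x)-\Lambda_2(x;B)$, after which the $\zeta(x)$ terms cancel against $h(x)$. Both arguments rest on the same two facts --- the disjoint split of $\{\tau_B^+<\infty\}$ into upward exit and $\{\tau_B^+=\theta\}$, and the vanishing of the payoff (equivalently $\zeta\equiv 1$) on $(-\infty,0]$ so that the overshoot law never enters --- so the content is equivalent; your version is a bit more transparent algebraically at the cost of one extra appeal to the strong Markov property, while the paper's is more elementary, using only rearrangement of integrals. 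Your handling of the boundary cases $x\notin(0,B)$ via $\tau_B^+=0$ is also fine and matches \eqref{v_cand2}.
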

As we shall see in Section \ref{sect-buyer-sol}, the functions $\Lambda_1(\cdot\,;B)$ and $\Lambda_2(\cdot\,;B)$ can be computed via the scale functions for a spectrally negative \lev model; see \eqref{property_scale_function} below.

In our second problem  (\ref{seller_problem}), the protection seller tends to  exercise the step-down option when default is likely, or equivalently when $X$ is sufficiently small.  Suppose the seller exercises at the first time  $X$ \emph{reaches or goes below} some fixed threshold $A \geq 0$; namely,
\begin{align*}
\tau_A^-:=\inf\{t\ge 0: X_t \leq A\}.
\end{align*}
Then, the corresponding expected payoff is given by
\begin{align*}
u_A(x) := \E^x \left[ e^{-r \tau_A^-} g(X_{\tau_A^-}) 1_{\{\tau_A^- < \infty\}}\right], \quad x \in \R.
\end{align*}
Again, we denote the difference between continuation and exercise by $\Delta_A(x) := u_A(x)- g(x)$ for $x \in \R$.

For this problem, the continuous and smooth fit solution approach is to
\begin{enumerate}
\item[(a)] obtain $A^*$ that satisfies the continuous or smooth fit condition: $\Delta_{A^*}(A^*+) = 0$ or $\Delta_{A^*}'(A^*+) = 0$, and
\item[(b)] verify the optimality of $\tau_{A^*}^-$ by showing (i) $u_{A^*}(x) \geq g(x)$  for $x \in \R$ and (ii) the process $\widetilde{M}_t := e^{-r (t \wedge \theta)}u_{A^*}(X_{t \wedge \theta})$, $t \geq 0$, is a supermartingale.
\end{enumerate}
This  method requires some expression for $\Delta_A$, which is summarized as follows:
\begin{lemma} \label{lemma_delta_A}
Fix $A > 0$. The function $\Delta_A$ is given by\begin{align} \label{delta_seller}
\Delta_A(x) =\begin{cases} \displaystyle\, \left( \gamma + \frac \pcheck r \right) \left( 1-\zeta(x-A) \right) - (\acheck-\gamma)\, \Gamma(x;A),&  x > A,\\
\displaystyle\,  0, & x \leq A, \end{cases}
\end{align}
where
\begin{align} \label{def_gamma}
\Gamma(x;A) := \E^x \left[ e^{-r \tau_A^-} 1_{\{X_{\tau_A^-} < 0, \, \tau_A^- < \infty \}}\right].
\end{align}
\end{lemma}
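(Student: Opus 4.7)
My plan is to compute $u_A(x)$ in closed form by first using the indicator $1_{\{y>0\}}$ built into $g$ in \eqref{def_G} to restrict attention to the event $\{X_{\tau_A^-}>0\}$, and then to apply the strong Markov property of $X$ at $\tau_A^-$ together with the spatial homogeneity of the L\'evy process. The trivial case $x\le A$ is handled first: on this set $\tau_A^-=0$ $\p^x$-a.s., so $u_A(x)=g(x)$, giving $\Delta_A(x)=0$ and matching the lower branch of \eqref{delta_seller}.

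For $x>A$, since $A\ge 0$ forces $\tau_A^-\le \theta$, the event $\{X_{\tau_A^-}\le 0,\,\tau_A^-<\infty\}$ coincides with $\{\tau_A^-=\theta<\infty\}$ up to a creeping set of zero mass, and on this event $g(X_{\tau_A^-})$ vanishes. Consequently,
\begin{align*}
u_A(x)=\left(\tfrac{\pcheck}{r}+\acheck\right)\E^x\!\left[e^{-r\tau_A^-}\zeta(X_{\tau_A^-})1_{\{X_{\tau_A^-}>0,\,\tau_A^-<\infty\}}\right]-\left(\tfrac{\pcheck}{r}+\gamma\right)\E^x\!\left[e^{-r\tau_A^-}1_{\{X_{\tau_A^-}>0,\,\tau_A^-<\infty\}}\right].
\end{align*}
Spatial homogeneity gives $\E^x[e^{-r\tau_A^-}1_{\{\tau_A^-<\infty\}}]=\zeta(x-A)$ (viewing $\tau_A^-$ as $\theta$ for the shifted process $X-A$), so the second expectation equals $\zeta(x-A)-\Gamma(x;A)$. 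By the strong Markov property at $\tau_A^-$ combined with $\zeta(y)=\E^y[e^{-r\theta}]$ for $y>0$, the first expectation equals $\E^x[e^{-r\theta}1_{\{X_{\tau_A^-}>0,\,\theta<\infty\}}]=\zeta(x)-\Gamma(x;A)$, using again that $\tau_A^-=\theta$ on the complementary event.

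Substituting these two identities into the expression for $u_A(x)$, subtracting $g(x)=(\pcheck/r+\acheck)\zeta(x)-\pcheck/r-\gamma$ (valid for $x>A\ge 0$), and collecting terms yields $(\gamma+\pcheck/r)(1-\zeta(x-A))-(\acheck-\gamma)\Gamma(x;A)$, which is precisely \eqref{delta_seller}. The main care-point is the measure-theoretic bookkeeping around the partition $\{\tau_A^-<\infty\}=\{X_{\tau_A^-}>0\}\sqcup\{X_{\tau_A^-}\le 0\}$: one must justify that the strict-inequality event in the definition \eqref{def_gamma} of $\Gamma$ captures the full $\tau_A^-=\theta$ contribution, i.e.\ that the creeping set $\{X_{\tau_A^-}=0\}$ carries no relevant mass. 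Apart from this, the derivation parallels that of \eqref{DeltaB} in Lemma \ref{lemma_delta_b} and is a short algebraic reduction.
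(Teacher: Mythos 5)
Your proof is correct and follows essentially the same route as the paper's (the paper defers the proof of this lemma to the argument for Lemma \ref{lemma_delta_b}, which is likewise a direct computation using the strong Markov property at the exercise time and a partition according to whether that time coincides with default). One small terminological slip: for $A>0$ the null event $\{X_{\tau_A^-}=0\}$ is an overshoot landing exactly at $0$ rather than a creeping event --- creeping at $\tau_A^-$ lands at $A$, not at $0$ --- but your observation that it carries no mass (under the standing atomless assumption on $\Pi$) is exactly what is needed to pass from $1_{\{X_{\tau_A^-}\le 0\}}$ to the strict inequality in \eqref{def_gamma}.
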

The function $\Gamma(\cdot\,; A)$ and Laplace transform $\zeta(\cdot)$  can be also expressed in terms of the scale function for a spectrally negative \lev model; see \eqref{laplace_theta} and Lemma \ref{remark_gamma_x_a} below.

\section{Solution Methods under the Spectrally Negative \lev Model}\label{sect-buyer-sol}
We proceed to solve the optimal stopping problems $v(x)$ and $u(x)$ in  (\ref{ux1}) and (\ref{seller_problem}) for spectrally negative \lev processes. Our main results are Theorems \ref{optimality_buyer} and \ref{proposition_optimality_case_1} which provide the optimal solutions  for $v(x)$ and $u(x)$, respectively. In turn, the American callable/putable step-up/down default swap can be immediately priced in view of Propositions \ref{prop-V} and \ref{prop-U}.

\subsection{The Spectrally Negative \lev Process and  Scale Function}
Let $X$ be a spectrally negative \lev process with  the \emph{Laplace exponent}
\begin{align}
\psi(s) := \log \E^0 \left[ e^{s X_1} \right] =c s +\frac{1}{2}\sigma^2 s^2 +\int_{(
0,\infty)}(e^{-s x}-1+s x 1_{\{0 <x<1\}})\,\Pi(\diff x), \quad  {s \in \mathbb{C}},  \label{laplace_spectrally_negative}
\end{align}
where $c \in\R$, $\sigma\geq 0$ is called the Gaussian coefficient, and $\Pi$ is  a measure on $\R$ such that $\Pi(-\infty,0]=0$ and
\begin{align*}
\int_{(0,\infty)} (1  \wedge x^2) \Pi( \diff x)<\infty.
\end{align*}
See, e.g. Theorem 1.6 of \cite{Kyprianou_2006}.  The risk neutral condition requires that $\psi(1)=r$ so that the discounted value of the reference entity is a $\p$-martingale. By Lemma 2.12 of \cite{Kyprianou_2006}, if further we have
\begin{align}
\int_{(0,\infty)} (1 \wedge x)\, \Pi(\diff x) < \infty,  \label{cond_bounded_variation}
\end{align}
then  the Laplace exponent can be expressed as
\begin{align*}
\psi(s) =\mu  s+  \frac{1}{2}\sigma^2 s^2 + \int_{(
0,\infty)}(e^{-s x}-1)\,\Pi(\diff x), \quad s \in \mathbb{C},
\end{align*}
where $\mu := c + \int_{(0,1)}x\, \Pi(\diff x)$. Recall that the process has paths of bounded variation if and only if $\sigma = 0$ and \eqref{cond_bounded_variation} holds. A special example is a compound Poisson process with $\Pi(\R)=\lambda$, where $\lambda$ is the finite rate of jumps. We ignore the  negative subordinator case ($X$ decreasing a.s.). This means that we require $\mu$ to be strictly positive when $X$ is of bounded variation.

By Theorem 8.1 of \cite{Kyprianou_2006}, for any spectrally negative \lev process, there exists an \emph{(r-)scale
function} $W^{(r)}: \R \mapsto \R$, $r\ge 0$ such that $W^{(r)}(x)=0$ on $(-\infty,0)$, and is characterized on $[0,\infty)$ by the Laplace transform:
\begin{align}\label{eq:scale}
\int_0^\infty e^{-s x} W^{(r)}(x) \diff x = \frac 1
{\psi(s)-r}, \qquad s > \lapinv,
\end{align}
where $\lapinv :=\sup\{\lambda \geq 0: \psi(\lambda)=r\}$.

The properties of the scale function  \cite[Theorem 8.1]{Kyprianou_2006}  allow us to derive the analytic formulas for $\Lambda_1(x;B)$ and $\Lambda_2(x;B)$ from Lemma \ref{lemma_delta_b}. Precisely, for $0 < x < B$,
\begin{align}
\begin{split}
\Lambda_1(x;B) &= \E^x \left[ e^{-r \tau_B^+} 1_{\{\tau_B^+ < \theta, \, \tau_B^+ < \infty \}}\right] = \frac {W^{(r)}(x)} {W^{(r)}(B)}, \\
\Lambda_2(x;B) &= \E^x \left[ e^{-r \tau_B^+} 1_{\{\tau_B^+ = \theta  < \infty \}}\right] = Z^{(r)} (x) - Z^{(r)} (B) \frac {W^{(r)}(x)} {W^{(r)}(B)},
\end{split} \label{property_scale_function}
\end{align}
where $Z^{(r)} (x) := 1 + r \int_0^x W^{(r)} (y) \diff y$, $x \in \R$. Notice that $Z^{(r)}(x) = 1$ for   $x \in (-\infty,0]$. The Laplace transform of $\theta$ in (\ref{zeta}) is given by
\begin{align}
\zeta(x) = Z^{(r)}(x) - \frac r {\lapinv} W^{(r)}(x), \quad x \in \R \backslash \{0\}. \label{laplace_theta}
\end{align}

Henceforth, we assume that $\Pi$ does not have atoms, which  guarantees that $W^{(r)}$ is $C^1$ on $(0,\infty)$ (see  \cite{Chan_2009}).  Moreover, as in (8.18) of \cite{Kyprianou_2006},
\begin{align}
 \frac {W^{(r)'}(y)} {W^{(r)}(y)} \leq \frac {W^{(r)'}(x)} {W^{(r)}(x)},  \quad y > x > 0. \label{assumeW}
\end{align}
From Lemmas 4.3 and 4.4 of \cite{Kyprianou_Surya_2007}, we also summarize the behavior in the neighborhood of zero.
\begin{lemma} \label{lemma_zero}
For every $r \geq 0$, we have
\begin{align*}
W^{(r)} (0) = \left\{ \begin{array}{ll} 0, & \textrm{unbounded variation} \\ \frac 1 {\mu}, & \textrm{bounded variation} \end{array} \right\} \quad \textrm{and} \quad
W^{(r)'} (0+) = \left\{ \begin{array}{ll}  \frac 2 {\sigma^2}, & \sigma > 0 \\   \infty, & \sigma = 0 \; \textrm{and} \; \Pi(0,\infty) = \infty \\ \frac {r + \Pi(0,\infty)} {\mu^2}, & \textrm{compound Poisson} \end{array} \right\}.
\end{align*}
\end{lemma}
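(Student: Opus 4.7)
The plan is to read off both quantities from the Laplace transform characterization \eqref{eq:scale}, using the initial value theorem (the Tauberian relation $f(0+) = \lim_{s \to \infty} s \hat f(s)$ valid for nice enough $f$). Since
$$\int_0^\infty e^{-sx} W^{(r)}(x) \, \diff x = \frac{1}{\psi(s) - r}, \qquad s > \lapinv,$$
the initial value theorem gives
$$W^{(r)}(0+) = \lim_{s \to \infty} \frac{s}{\psi(s) - r}.$$
The whole lemma then reduces to asymptotic analysis of $\psi(s)$ at infinity, split according to path variation.

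For $W^{(r)}(0+)$: from \eqref{laplace_spectrally_negative}, in the unbounded variation case $\psi(s)/s \to \infty$ as $s \to \infty$ (either $\sigma > 0$ makes $\psi(s) \sim \tfrac12 \sigma^2 s^2$, or, if $\sigma = 0$ but $\int_{(0,1)} x\,\Pi(\diff x) = \infty$, the compensator term forces super-linear growth), whence the ratio vanishes. In the bounded variation case \eqref{cond_bounded_variation} holds with $\sigma = 0$, so $\psi(s) = \mu s + \int_{(0,\infty)}(e^{-sx}-1)\Pi(\diff x)$; dominated convergence sends the integral to $-\Pi(0,\infty)$ (finite in the compound Poisson subcase, but in any event $o(s)$), giving $\psi(s)/s \to \mu$ and hence $W^{(r)}(0+) = 1/\mu$.

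For $W^{(r)'}(0+)$: since $W^{(r)}$ is $C^1$ on $(0,\infty)$ by our atomless assumption on $\Pi$, integration by parts yields
$$\int_0^\infty e^{-sx} W^{(r)'}(x) \, \diff x = \frac{s}{\psi(s) - r} - W^{(r)}(0+),$$
and another application of the initial value theorem produces
$$W^{(r)'}(0+) = \lim_{s \to \infty} s \left( \frac{s}{\psi(s) - r} - W^{(r)}(0+) \right).$$
When $\sigma > 0$ we have $W^{(r)}(0+) = 0$ and $\psi(s) - r \sim \tfrac12 \sigma^2 s^2$, giving $2/\sigma^2$. When $\sigma = 0$ and $\Pi(0,\infty) = \infty$ but $X$ is of unbounded variation, $W^{(r)}(0+) = 0$ while $\psi(s)/s^2 \to 0$, so the limit diverges. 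In the compound Poisson subcase, $W^{(r)}(0+) = 1/\mu$ and $\psi(s) = \mu s - \Pi(0,\infty) + \int_{(0,\infty)} e^{-sx}\Pi(\diff x)$; a short algebraic manipulation gives
$$s \left( \frac{s}{\psi(s) - r} - \frac{1}{\mu} \right) = \frac{s\bigl(\Pi(0,\infty) + r - \int_{(0,\infty)} e^{-sx}\Pi(\diff x)\bigr)}{\mu(\psi(s) - r)},$$
whose numerator is asymptotic to $s(\Pi(0,\infty) + r)$ and denominator to $\mu^2 s$, yielding $(r + \Pi(0,\infty))/\mu^2$.

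The main subtlety lies in justifying the Tauberian step: a priori $\lim_{s \to \infty} s\hat f(s)$ need only equal $f(0+)$ when $f$ is known to be well-behaved at the origin (e.g., of locally bounded variation or monotone near zero). For the scale function this follows from general structural results (cf.\ Chapter 8 of \cite{Kyprianou_2006}), and similarly the derivative $W^{(r)'}$ is monotone near zero in the regimes we consider, which validates the second Tauberian pass. Beyond that, the proof is a careful bookkeeping of which term in $\psi$ dominates, split into the three regimes of the Gaussian coefficient and the total jump mass.
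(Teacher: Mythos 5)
First, note that the paper does not actually prove this lemma: it simply cites Lemmas 4.3 and 4.4 of Kyprianou and Surya \cite{Kyprianou_Surya_2007}. Your argument --- reading $W^{(r)}(0)$ and $W^{(r)\prime}(0+)$ off the Laplace transform \eqref{eq:scale} via the initial value theorem and then doing the asymptotic bookkeeping on $\psi(s)$ as $s\to\infty$ --- is essentially the proof of those cited lemmas, so in that sense you are supplying the argument the paper outsources. Your asymptotics for $\psi$ are correct in every regime you treat, the integration by parts is legitimate (here one should note $W^{(r)\prime}\geq 0$ is locally integrable near $0$ since $W^{(r)}$ is increasing and $C^1$ on $(0,\infty)$), and the first pass is fully rigorous: since $W^{(r)}$ is monotone, $W^{(r)}(0+)$ exists a priori and the Abelian direction of the initial value theorem identifies it; equivalently one can write $s/(\psi(s)-r)=\int_{[0,\infty)}e^{-sx}\,\diff W^{(r)}(x)$ and let $s\to\infty$ to pick out the atom at zero.

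Two points in the derivative computation need attention. First, a missing sub-case: the lemma's middle line covers all processes with $\sigma=0$ and $\Pi(0,\infty)=\infty$, which includes bounded-variation processes of infinite jump activity (where $W^{(r)}(0+)=1/\mu>0$, not $0$). Your analysis only handles the unbounded-variation instance of that line and the compound Poisson case. The gap is easy to fill: for bounded variation one gets $s\bigl(s/(\psi(s)-r)-1/\mu\bigr)=s\bigl(r+\int_{(0,\infty)}(1-e^{-sx})\Pi(\diff x)\bigr)/\bigl(\mu(\psi(s)-r)\bigr)$, and monotone convergence sends $\int(1-e^{-sx})\Pi(\diff x)\uparrow\Pi(0,\infty)=\infty$ while the denominator is $\sim\mu^2 s$, so the limit is $+\infty$ as claimed --- but as written this case is not covered. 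Second, and more substantively, the justification of the second initial-value-theorem pass is asserted rather than proved. The claim that ``$W^{(r)\prime}$ is monotone near zero'' does not follow from anything in the paper: \eqref{assumeW} gives monotonicity of $W^{(r)\prime}/W^{(r)}$ (log-concavity), not of $W^{(r)\prime}$, and the product argument $W^{(r)\prime}=(W^{(r)\prime}/W^{(r)})\cdot W^{(r)}$ is indeterminate precisely in the unbounded-variation case where $W^{(r)}(0+)=0$. Without either a priori existence of $\lim_{x\downarrow 0}W^{(r)\prime}(x)$ in $[0,\infty]$ or a genuine Tauberian condition, the relation $\lim_{s\to\infty}s\bigl(s/(\psi(s)-r)-W^{(r)}(0+)\bigr)=W^{(r)\prime}(0+)$ only yields a Ces\`aro-type statement (Karamata applied to the nonnegative density $W^{(r)\prime}$ gives $\frac1x\int_0^x W^{(r)\prime}(y)\,\diff y\to\ell$, not $W^{(r)\prime}(0+)=\ell$). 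This is exactly the technical content of the cited Lemma 4.4, so pointing back to ``general structural results'' is circular; you should either establish the existence of the one-sided limit directly (e.g., via the representation of $W^{(r)}$ through the tilted $0$-scale function) or invoke a monotone density theorem with a verified hypothesis.
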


\subsection{Callable Step-Down Default Swap} \label{subsection_callable_step_down}
We proceed to solve for $v(x)$ in (\ref{ux1}) for the callable step-down default swap. First, we consider the expected payoff function $v_B(x)$ in \eqref{vbx} with some threshold $B$:
\begin{align}\label{v_cand2}
v_{B} (x) = \left\{ \begin{array}{ll} h(x),   & x \in [B,\infty),\\ h(x) + \Delta_{B} (x), & x \in (0,B), \\ 0, & x \in (-\infty,0],\end{array}\right.
\end{align} for $0 \leq B < \infty$. If $B = \infty$, then $v_{B}(x)=0$, $x\in \R$. If $B = 0$, then $v_{B}(x)= h(x)$, $x \in \R$.
Applying  \eqref{property_scale_function} and \eqref{laplace_theta} and Lemma \ref{lemma_delta_b} to the stopping value  $h(x)$ and difference function $\Delta_B(x)$, we can express them in terms of the scale function, namely,
\begin{align}
h(x) &=  \left[ \pcheck \left( \frac 1 r \left( 1- Z^{(r)}(x) \right) + \frac 1 {\lapinv} W^{(r)}(x) \right)   -  \acheck \left(Z^{(r)}(x) - \frac r {\lapinv} W^{(r)}(x) \right) - \gamma \right] 1_{\{x > 0\}}, \label{hx_scale}\\
\Delta_B(x) &=  \left[\frac \pcheck r \left( Z^{(r)} (x) - 1 \right) + \acheck Z^{(r)} (x) - \frac {W^{(r)}(x)} {W^{(r)}(B)} G^{(r)}(B)+ \gamma \right] 1_{\{0 < x < B\}},\label{delta_scale}
\end{align}
where
\begin{align}
\label{GrB}G^{(r)}(B) := \frac \pcheck r  \left( Z^{(r)} (B) - 1 \right) + \acheck Z^{(r)} (B) + \gamma, \quad B \geq 0.
\end{align}

\begin{remark} \label{remark_continuity_v_B}
From \eqref{delta_scale}, we observe that $\Delta_B(B-) = 0$. This implies that  \textbf{continuous fit} $v_B(B-) = v_B(B)$ must hold  for all $B > 0$.
\end{remark}

To obtain  the candidate optimal  threshold, we consider the  \textbf{smooth fit condition} $\Delta_B'(B-)=0$. To this end, we compute from \eqref{delta_scale} the derivatives
\begin{align}
\label{varrho1}\varrho(B) &:= \Delta_B'(B-)  =  \left( \pcheck + \acheck r \right) W^{(r)} (B) - \frac {W^{(r)'}(B)} {W^{(r)}(B)} G^{(r)}(B), \\
\label{varrho2}\varrho'(B) &=  - \left(\frac \partial {\partial B} \frac {W^{(r)'}(B)} {W^{(r)}(B)} \right) G^{(r)}(B), \quad B>0.
\end{align}
Here $\varrho(B)$ is continuous on $(0,\infty)$ and  \eqref{varrho2} holds at which the second derivative of $W^{(r)}(B)$ exists (which holds for Lebesgue-a.e.\ $B > 0$).

Observing from \eqref{GrB}  that $G^{(r)}(B) \geq \acheck + \gamma > 0$ for $B \geq 0$ and by \eqref{assumeW},  we deduce that   $\varrho(B)$ is   increasing   in $B$. Therefore, there exists  \emph{at most} one $B^*\in(0,\infty)$ satisfying the smooth fit condition, which by \eqref{varrho1} is equivalent to
\begin{align}
\varrho(B^*) =0.
\label{definition_b_star}
\end{align}
If it exists, then this is our candidate optimal threshold, and $v_{B^*}(x)$ is the candidate value function for \eqref{ux1}.

The smooth fit condition fails if \,(a) $\varrho(B) \geq 0$ \,$\forall B > 0$, or\, (b) $\varrho(B) < 0$\, $\forall B > 0$. Under each of these scenarios, we need another way to deduce the candidate optimal threshold. To this end, let us consider the derivative of $v_B(x)$ with respect to $B$. For $0 < x < B$,
\begin{align}
\frac \partial {\partial B} v_B(x) = \frac \partial {\partial B} \Delta_B(x)= -\frac {W^{(r)}(x)} {W^{(r)}(B)}\left[ \left( r \acheck + \pcheck \right) W^{(r)} (B) - \frac {W^{(r)'}(B)} {W^{(r)}(B)} G^{(r)}(B) \right] =  -\frac {W^{(r)}(x)} {W^{(r)}(B)} \varrho(B).\label{vb_derv}
\end{align}

Under scenario (a),  $\varrho(B) \geq 0$ in \eqref{vb_derv} implies that  $v_B(x)$ is decreasing in $B$ for any $x < B$, so we choose $B^*=0$ as our candidate optimal threshold. In this case, the buyer will stop immediately ($\tau_{0}^+ = 0$), and the corresponding expected payoff is
$v_{B^*}(x)=h(x)$ (see \eqref{vbx}). As we show next, $B^*=0$ is possible only when $X$ is of bounded variation.
\begin{lemma} \label{lemma_b_star_zero}   We have $B^* = 0$ if and only if   $\sigma = 0$ and   $\pcheck - r \gamma -  (\acheck + \gamma)\Pi(0,\infty) \geq 0$.
\end{lemma}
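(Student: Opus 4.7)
The starting observation is that $B^*=0$ corresponds exactly to scenario (a) identified just before the lemma, namely $\varrho(B)\geq 0$ for all $B>0$. Since we have already established that $\varrho$ is nondecreasing on $(0,\infty)$ (via \eqref{varrho2}, together with \eqref{assumeW} and $G^{(r)}(B)\geq \acheck+\gamma>0$), this condition collapses to the single inequality $\varrho(0+)\geq 0$. The plan is therefore to evaluate $\varrho(0+)$ explicitly in each of the regimes classified by Lemma \ref{lemma_zero} and read off the equivalence.

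From \eqref{varrho1} and \eqref{GrB}, using $Z^{(r)}(0)=1$ so that $G^{(r)}(0)=\acheck+\gamma>0$,
\[\varrho(0+) = (\pcheck+\acheck r)\,W^{(r)}(0) - \frac{W^{(r)'}(0+)}{W^{(r)}(0)}\,(\acheck+\gamma).\]
If $X$ has unbounded variation (i.e., $\sigma>0$, or $\sigma=0$ with $\int_{(0,1)}x\,\Pi(\diff x)=\infty$), Lemma \ref{lemma_zero} gives $W^{(r)}(0)=0$ while $W^{(r)'}(0+)>0$, so the ratio blows up and $\varrho(0+)=-\infty$; hence $B^*>0$ in this case. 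If $X$ has bounded variation but $\Pi(0,\infty)=\infty$, then $W^{(r)}(0)=1/\mu$ is finite while $W^{(r)'}(0+)=\infty$, again forcing $\varrho(0+)=-\infty$ and $B^*>0$.

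In the only remaining regime, namely $\sigma=0$ with $\Pi(0,\infty)<\infty$ (compound Poisson with positive drift), Lemma \ref{lemma_zero} supplies $W^{(r)}(0)=1/\mu$ and $W^{(r)'}(0+)=(r+\Pi(0,\infty))/\mu^2$. Substituting and simplifying,
\[\varrho(0+) = \frac{1}{\mu}\bigl[(\pcheck+\acheck r)-(r+\Pi(0,\infty))(\acheck+\gamma)\bigr] = \frac{1}{\mu}\bigl[\pcheck - r\gamma - (\acheck+\gamma)\Pi(0,\infty)\bigr].\]
Since $\mu>0$, the condition $\varrho(0+)\geq 0$ reduces precisely to the inequality stated in the lemma. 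Observe that this inequality automatically forces $\Pi(0,\infty)<\infty$ (its left-hand side would otherwise be $-\infty$), so that together with $\sigma=0$ it encapsulates the compound-Poisson restriction without having to state it separately.

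The argument is essentially a boundary computation for $W^{(r)}$; I anticipate no genuine obstacle beyond enumerating the three regimes of Lemma \ref{lemma_zero} and observing that only the compound-Poisson regime leaves both $W^{(r)}(0)$ and $W^{(r)'}(0+)$ finite, which is exactly what is needed for $\varrho(0+)$ to be non-negative.
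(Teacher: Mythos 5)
Your proof is correct and follows essentially the same route as the paper: reduce $B^*=0$ to $\varrho(0+)\geq 0$ via the monotonicity of $\varrho$, then evaluate $\varrho(0+)$ using the boundary values of $W^{(r)}$ from Lemma \ref{lemma_zero}. Your explicit enumeration of the three regimes (and the remark that the stated inequality automatically forces $\Pi(0,\infty)<\infty$) is slightly more careful than the paper's ``after some algebra,'' but it is the same argument.
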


As for scenario (b),  it follows from \eqref{vb_derv} that $v_B(x)$ is increasing in $B$ for any $x < B$. Therefore, we set $B^*=\infty$, meaning that the buyer will never exercise ($\tau_{\infty}^+ = \theta$), and the corresponding expected payoff is $v_{\infty}(x)=0$ (see \eqref{vbx}). In fact, this   corresponds to the   case where the payoff $h(x) \leq 0$\,$\forall x\in \R$. To see this, we deduce from \eqref{vb_derv} and continuous fit in Remark \ref{remark_continuity_v_B} that, for any arbitrarily fixed $x > 0$,  $\liminf_{B \rightarrow \infty}v_B (x) \geq v_{x}(x) = h(x)$. Then, applying  Fatou's lemma and because $\tau_B^+ \xrightarrow{B \uparrow \infty} \theta$ a.s., we obtain
$\limsup_{B \rightarrow \infty}v_B (x) \leq 0$, and thus $0\geq h(x)$.  Therefore, under   condition \eqref{assumption_basic}, we have already \emph{ruled out} scenario (b).

\begin{figure}[htbp]
\begin{center}
\centering
 \includegraphics[scale=0.55]{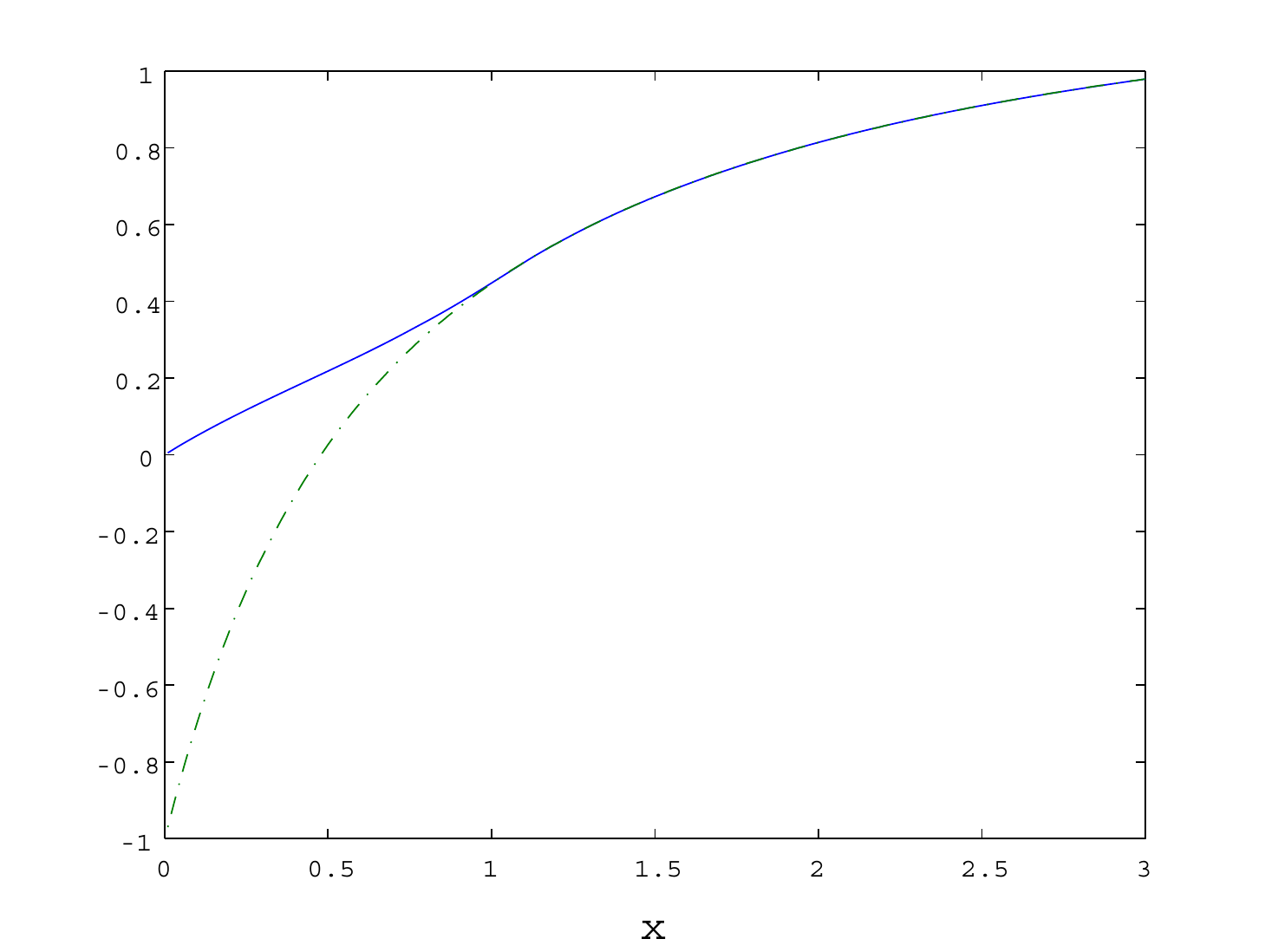} \includegraphics[scale=0.55]{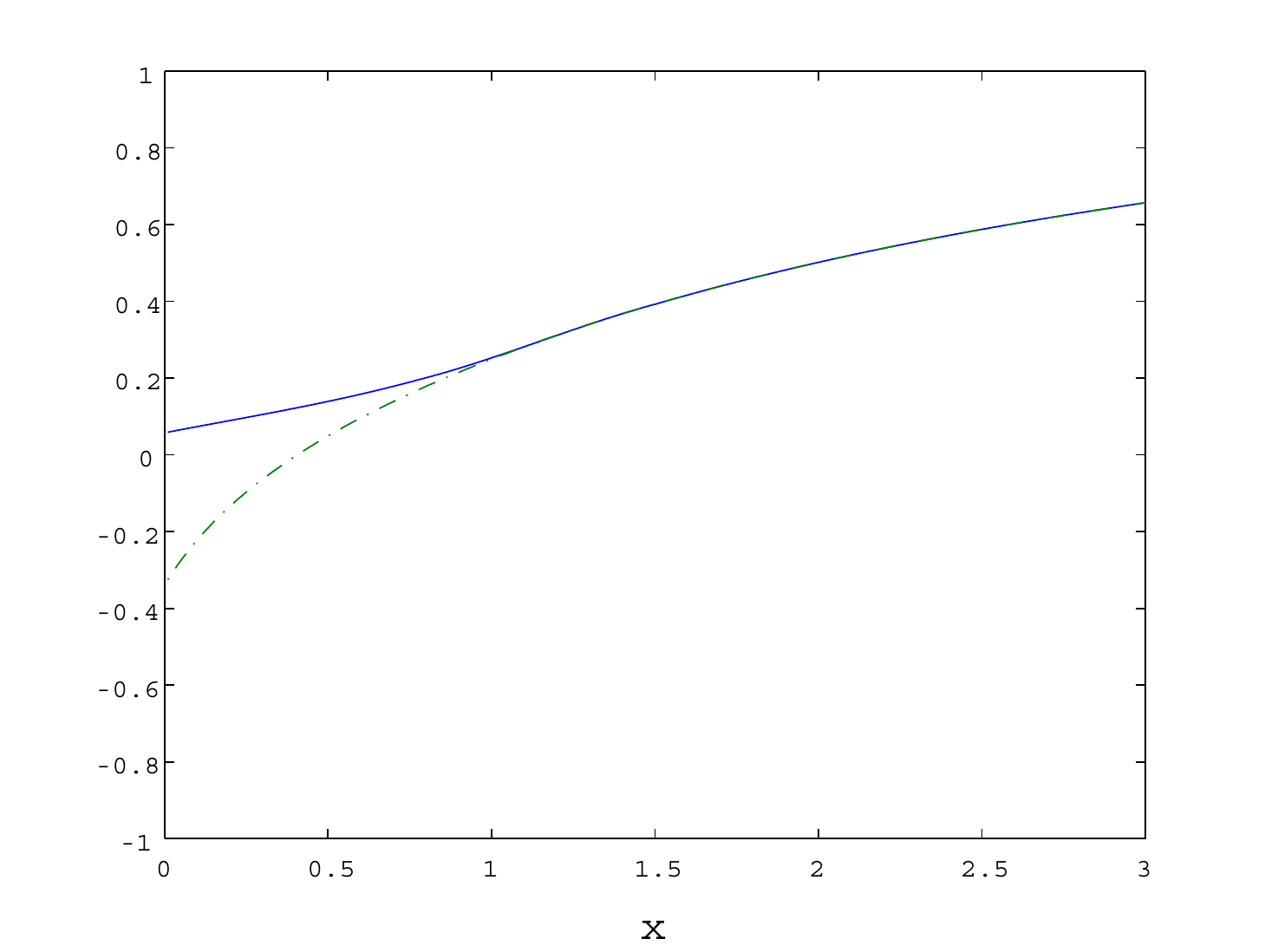}
\caption{\small{Continuous and smooth fits of the value function $v_{B^*}(\cdot)$ (solid curve) and stopping value $h(\cdot)$ (dashed curve). Notice that $v_{B^*}(0+)=0$  in the unbounded variation case (left) while  $v_{B^*}(0+)>0$ in the bounded variation case (right). }}\label{plot_v_h}
\end{center}
\end{figure}

To summarize,   we propose the  candidate value function $v_{B^*}(x)$ corresponding to the candidate optimal threshold $B^*$  from \eqref{definition_b_star} for $0 < B^* < \infty$, or $B^*=0$ otherwise. By direct computation using \eqref{v_cand2}-\eqref{delta_scale}, the value function is
\begin{align}
v_{B^*}(x)
= \left\{ \begin{array}{ll}\displaystyle  W^{(r)}(x) \left( \frac {\pcheck + \acheck r} {\Phi(r)}  - \frac {G^{(r)}(B^*)} {W^{(r)}(B^*)} \right) 1_{\{x \neq 0\}}, & -\infty < x < B^*, \\ h(x), & x \geq B^*. \end{array} \right. \label{U-smaller_than_b_star}
\end{align}
Here, on $(-\infty,0)$, $v_{B^*}(x) = 0$ because $W^{(r)}(x)=0$.

The next step is to verify the optimality of $v_{B^*}$. We shall show that (i) $v_{B^*}$ dominates $h$ and (ii) the stochastic process
\begin{align}
M_t := e^{-r (t \wedge \theta)}v_{B^*}(X_{t \wedge \theta}), \quad t \geq 0 \label{definition_m_buyer}
\end{align}
is a supermartingale.

We address the first part as follows. When $B^*=0$, it follows  from \eqref{vbx} or the arguments above that $v_{B^*}(x)=h(x)$. When $B^*\in(0,\infty)$,   $\varrho(B)$ is monotonically increasing and attains $0$ at $B^*$. Therefore, $v_{B}(x)$  is increasing in $B$ for  $B \in [x, B^*]$ by \eqref{vb_derv}. Then, by  continuous fit   in Remark \ref{remark_continuity_v_B}, for any arbitrarily fixed $x$, taking $B=x$, we have $v_{B^*}(x) \geq v_x (x) = h(x)$.  Moreover, $v_{B^*}(x) = h(x)$ on $x \leq B^*$ by definition. Hence, we conclude that:
\begin{align}v_{B^*}(x) \geq  h(x), \quad \text{ for } x \in \mathbb{R}.\label{v_greater_than_h}
\end{align}

We now pursue the supermartingale property of the process $M$ in (\ref{definition_m_buyer}).  Define the generator $\mathcal{L}$ of $X$ by
\begin{align*}
\mathcal{L} f(x) = c f'(x) + \frac 1 2 \sigma^2 f''(x) + \int_0^\infty \left[ f(x-z) - f(x) +  f'(x) z 1_{\{0 < z < 1\}} \right] \Pi(\diff z)
\end{align*}
for the unbounded variation case and
\begin{align*}
\mathcal{L} f(x) = \mu f'(x) + \int_0^\infty \left[ f(x-z) - f(x) \right] \Pi(\diff z)
\end{align*}
for the bounded variation case. The supermartingale property of $M$ is due to the following lemma (see the Appendix for a proof):

\begin{lemma} \label{generator_non_positive}
The function $v_{B^*}(\cdot)$ satisfies
\begin{align*}
(\mathcal{L}-r) v_{B^*}(x) \leq 0, \quad x \in (0,\infty) \backslash \{B^* \}.
\end{align*}
\end{lemma}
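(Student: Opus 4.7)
The plan is to verify the claimed inequality separately on the two open regions $(0, B^*)$ and $(B^*, \infty)$. Recall from \eqref{U-smaller_than_b_star} that $v_{B^*}$ is piecewise: $v_{B^*} = K\, W^{(r)}$ on $(-\infty, B^*)$ with $K := (\pcheck + \acheck r)/\Phi(r) - G^{(r)}(B^*)/W^{(r)}(B^*)$ (and the convention $W^{(r)} \equiv 0$ on $(-\infty, 0)$), while $v_{B^*} = h$ on $[B^*, \infty)$. The case $B^* = 0$ collapses the first region and is handled through Lemma~\ref{lemma_b_star_zero}.

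\emph{Continuation region.} On $x \in (0, B^*)$, $v_{B^*}(x) = K\, W^{(r)}(x)$, and because $X$ has no positive jumps, every jump from $x$ carries the process to $x - z < x < B^*$, on which $v_{B^*}$ still coincides with $K\, W^{(r)}$. Thus $\mathcal{L} v_{B^*}(x) = K\, \mathcal{L} W^{(r)}(x)$. Invoking the scale-function martingale property of $\{e^{-r(t \wedge \theta)} W^{(r)}(X_{t \wedge \theta})\}$ under $\mathbb{P}^x$ for $x > 0$ (Theorem~8.1 of \cite{Kyprianou_2006}), one obtains $(\mathcal{L} - r) W^{(r)}(x) = 0$ on $(0, \infty)$, and hence $(\mathcal{L} - r) v_{B^*}(x) = 0$ on $(0, B^*)$.

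\emph{Stopping region.} On $x > B^*$, $v_{B^*} = h$ locally, but the non-local part of $\mathcal{L}$ picks up $v_{B^*}(y) = K\, W^{(r)}(y) \geq h(y)$ at $y = x - z \in (0, B^*)$ (by \eqref{v_greater_than_h}). Writing $v_{B^*} = h + \eta$ with $\eta := \Delta_{B^*}$ on $(0, B^*)$ and $0$ elsewhere, and noting $\eta$ vanishes in a neighborhood of $x$, one gets
\begin{align*}
(\mathcal{L} - r) v_{B^*}(x) = (\mathcal{L} - r) h(x) + \int_{x - B^*}^{x} \Delta_{B^*}(x - z)\, \Pi(\diff z).
\end{align*}
Using $h(y) = (\pcheck/r - \gamma) - (\pcheck/r + \acheck)\, \zeta(y)$ for $y > 0$ with $h(y) = 0$ for $y \leq 0$, together with the martingale identity $(\mathcal{L} - r)\zeta = 0$ on $(0, \infty)$ (from $\{e^{-r(t \wedge \theta)} \zeta(X_{t \wedge \theta})\}$ being a $\mathbb{P}^x$-martingale) and the jump term compensating for the discontinuity of $h$ at $0$, I compute
\begin{align*}
(\mathcal{L} - r) h(x) = -(\pcheck - r\gamma) + (\acheck + \gamma)\, \Pi(x, \infty), \qquad x > 0,
\end{align*}
so the target inequality reduces to
\begin{align*}
(\acheck + \gamma)\, \Pi(x, \infty) + \int_{x - B^*}^{x} \Delta_{B^*}(x - z)\, \Pi(\diff z) \leq \pcheck - r\gamma, \qquad x > B^*.
\end{align*}

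\emph{The main obstacle} is establishing this last bound, whose left-hand side is non-negative while the right is positive by \eqref{assumption_basic}. My strategy is to substitute the explicit scale-function formula \eqref{delta_scale} for $\Delta_{B^*}$ and then exploit the smooth-fit identity $\varrho(B^*) = 0$ from \eqref{definition_b_star}, which should give the bound as an equality in the limit $x \downarrow B^*$, combined with the monotonicity of $W^{(r)\prime}/W^{(r)}$ from \eqref{assumeW} to propagate the inequality to all $x > B^*$. In the degenerate case $B^* = 0$, the correction integral is empty and the bound collapses exactly to $\pcheck - r\gamma \geq (\acheck + \gamma)\, \Pi(0, \infty)$, which is precisely the parameter condition supplied by Lemma~\ref{lemma_b_star_zero}. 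The most delicate aspect is expected to be the bounded-variation case, where $W^{(r)}(0) = 1/\mu > 0$ introduces jump discontinuities in several auxiliary functions near the boundary $x = B^*$.
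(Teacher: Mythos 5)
Your continuation-region argument and your reduction on the stopping region are both correct and essentially mirror the paper's own route: the paper likewise uses $(\mathcal{L}-r)W^{(r)}=(\mathcal{L}-r)Z^{(r)}=0$ on $(0,\infty)$ to get $(\mathcal{L}-r)v_{B^*}=0$ on $(0,B^*)$, and your identity $(\mathcal{L}-r)h(x)=-(\pcheck-r\gamma)+(\acheck+\gamma)\Pi(x,\infty)$ together with the correction integral $\int_{x-B^*}^{x}\Delta_{B^*}(x-z)\,\Pi(\diff z)$ is exactly the paper's decomposition $v_{B^*}=\widetilde h+\widetilde\Delta$ in disguise (your $y\mapsto(\acheck+\gamma)1_{\{y\le 0\}}+\Delta_{B^*}(y)$ is the paper's $\widetilde\Delta$ up to the additive constant $\pcheck/r-\gamma$). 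The problem is that what you label ``the main obstacle'' is precisely where the work of the lemma lives, and you leave it as an unexecuted strategy with one claim that is actually false as stated.

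Concretely, two steps are missing. First, the boundary value: writing $F(x)$ for the left side of your target inequality, one finds from $(\mathcal{L}-r)v_{B^*}(B^*-)=0$ that $F(B^*+)=(\pcheck-r\gamma)-\tfrac12\sigma^2\Delta_{B^*}''(B^*-)$. So the limit $x\downarrow B^*$ gives equality only in the bounded-variation case; when $\sigma>0$ you need the sign $\Delta_{B^*}''(B^*-)\ge 0$, which is not a consequence of smooth fit alone --- the paper proves it (its Lemma \ref{discont_generator}) by computing $\Delta''(B^*-)=-(r\acheck+\pcheck)\frac{(W^{(r)}(B^*))^2}{W^{(r)\prime}(B^*)}\frac{\partial}{\partial B}\frac{W^{(r)\prime}(B)}{W^{(r)}(B)}$ and invoking \eqref{assumeW}. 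Second, the propagation: to get $F$ decreasing on $(B^*,\infty)$ you must show that $\Delta_{B^*}$ is decreasing on $(0,B^*)$ with $\Delta_{B^*}(0+)\le\acheck+\gamma$; this does follow from $\varrho(B^*)=0$ and \eqref{assumeW} via $\Delta_{B^*}'(x)=W^{(r)\prime}(x)(r\acheck+\pcheck)\bigl[\tfrac{W^{(r)}(x)}{W^{(r)\prime}(x)}-\tfrac{W^{(r)}(B^*)}{W^{(r)\prime}(B^*)}\bigr]\le 0$, but it is a separate monotonicity lemma (the paper's Lemma \ref{lemma_delta_decreasing}), not an automatic consequence of naming \eqref{assumeW}. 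Your handling of the $B^*=0$ case via Lemma \ref{lemma_b_star_zero} is fine. Until the two items above are supplied, the inequality on $(B^*,\infty)$ --- which is the substance of the lemma --- remains unproved.
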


In summary,  inequality \eqref{v_greater_than_h} and Lemma \ref{generator_non_positive} prove the optimality of $v_{B^*} (x)$, leading to  our  main result:
\begin{theorem} \label{optimality_buyer}The candidate function $v_{B^*} (x)$, with threshold $B^*$ given by \eqref{definition_b_star}, is optimal for  \eqref{ux1}. Precisely,
\[v_{B^*}(x) = v(x) = \sup_{\tau \in \mathcal{S}} \E^x \left[ e^{-r \tau} h(X_\tau) 1_{\{\tau < \infty\}}\right],
\]
and $\tau_{B^*}^+$ is the optimal stopping time.
\end{theorem}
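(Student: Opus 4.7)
The strategy is a standard verification argument built on the two ingredients already assembled: the pointwise dominance $v_{B^*}(x) \geq h(x)$ from \eqref{v_greater_than_h} and the generator inequality $(\mathcal{L}-r)v_{B^*}(x) \leq 0$ on $(0,\infty)\setminus\{B^*\}$ from Lemma \ref{generator_non_positive}. Given these, the proof reduces to (i) upgrading the local generator inequality to the supermartingale property of the process $M_t = e^{-r(t\wedge\theta)} v_{B^*}(X_{t\wedge\theta})$, and (ii) combining that with optional sampling to sandwich $v(x)$ between $v_{B^*}(x)$ from above and from below.

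First, I would establish that $M$ is a $\p^x$-supermartingale. The natural approach is to apply the appropriate form of Itô's formula (Meyer-Itô in the unbounded variation case, the change-of-variables formula for finite variation jump processes otherwise) to $e^{-rt} v_{B^*}(X_t)$ on $[0,\theta]$. Between 0 and $\theta$, the process $X$ lives in $(0,\infty)$, on which $v_{B^*}$ coincides with a $C^1$ function of $W^{(r)}$ (and is $C^2$ a.e.\ by the smoothness properties of $W^{(r)}$ recorded after \eqref{laplace_theta}). The single possibly non-smooth point is $B^*$, where the smooth-fit condition \eqref{definition_b_star} secures $v_{B^*} \in C^1$ around $B^*$; when $B^*=0$ the issue disappears since $M$ is stopped at $\theta$. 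Itô's formula then yields the decomposition $dM_t = (\mathcal{L}-r)v_{B^*}(X_t) \ud t + dN_t$ up to $\theta$, where $N$ is a local martingale. Lemma \ref{generator_non_positive} makes the drift non-positive, so $M$ is a local supermartingale; bounded-ness of $v_{B^*}$ on bounded sets together with standard localization (and the fact that $v_{B^*}(X_\theta)=0$ on $\{\theta<\infty\}$ since $W^{(r)}\equiv 0$ on $(-\infty,0)$) upgrades it to a true supermartingale.

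Next, for any $\tau \in \mathcal{S}$ I would apply optional sampling at $\tau\wedge t$ and let $t\to\infty$. Using $\tau \leq \theta$ a.s., Fatou (or uniform integrability, which follows from the non-negativity of $v_{B^*}$ and the $e^{-r\tau}$ discounting) gives
\begin{align*}
v_{B^*}(x) \;=\; M_0 \;\geq\; \E^x\bigl[e^{-r\tau} v_{B^*}(X_\tau) 1_{\{\tau<\infty\}}\bigr] \;\geq\; \E^x\bigl[e^{-r\tau} h(X_\tau) 1_{\{\tau<\infty\}}\bigr],
\end{align*}
where the last inequality uses \eqref{v_greater_than_h}. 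Taking the supremum over $\tau \in \mathcal{S}$ yields $v_{B^*}(x) \geq v(x)$. The reverse inequality is immediate from the definition \eqref{vbx}: plugging the admissible stopping time $\tau_{B^*}^+ \in \mathcal{S}$ into \eqref{ux1} gives $v(x) \geq v_{B^*}(x)$. Equality and optimality of $\tau_{B^*}^+$ follow at once.

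The main technical obstacle will be the rigorous application of Itô's formula at the single non-smooth point and across the boundary $x=0$ where $h$ (and hence $v_{B^*}$ via the formula \eqref{U-smaller_than_b_star}) may jump in the bounded variation case (see Figure \ref{plot_v_h}). Because $X$ is stopped at $\theta=\inf\{t:X_t\leq 0\}$ before crossing into $(-\infty,0)$ strictly, and because $W^{(r)}(0)=1/\mu$ accounts correctly for the creeping behavior in the bounded variation case (cf.\ Lemma \ref{lemma_zero}), this point requires care but not new ideas; the smooth-fit condition at $B^*$ (or the corner solution $B^*=0$ characterized in Lemma \ref{lemma_b_star_zero}) ensures the drift calculation is valid almost everywhere for the occupation measure of $X$, which is all that is needed for the Itô decomposition to produce a supermartingale.
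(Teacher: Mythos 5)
Your overall strategy --- the payoff dominance \eqref{v_greater_than_h} plus a supermartingale verification via It\^o's formula and optional sampling, with the reverse inequality $v(x)\geq v_{B^*}(x)$ coming for free from $\tau_{B^*}^+\in\S$ --- is exactly the route the paper takes. The gap is in the supermartingale step at $x=0$, and it is not the routine matter you describe. In the bounded variation case $v_{B^*}$ has a genuine jump discontinuity at $0$: by \eqref{U-smaller_than_b_star} and Lemma \ref{lemma_zero}, $v_{B^*}(0)=0$ while $v_{B^*}(0+)=W^{(r)}(0)\bigl(\tfrac{\pcheck+\acheck r}{\Phi(r)}-\tfrac{G^{(r)}(B^*)}{W^{(r)}(B^*)}\bigr)>0$. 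A discontinuous function is not a difference of convex functions, so neither Meyer--It\^o nor the change-of-variables formula for finite-variation jump processes applies to $v_{B^*}$ directly. Your claim that the issue ``disappears'' because $M$ is stopped at $\theta$ is not correct: for every $t<\theta$ the compensated jump integral evaluates $v_{B^*}(X_{s-}-z)-v_{B^*}(X_{s-})$ for jumps $z$ straddling the discontinuity, and since $\Pi(x,\infty)\uparrow\infty$ as $x\downarrow 0$ when $\Pi(0,\infty)=\infty$, the drift $(\mathcal{L}-r)v_{B^*}$ is unbounded near $0$; establishing integrability of the accumulated drift is precisely what must be done before any It\^o decomposition produces a supermartingale, and ``standard localization'' does not supply it.

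The paper closes this with a specific device your sketch would need to reproduce: it approximates $v_{B^*}$ from above by \emph{continuous} functions $v_n$ that agree with $v_{B^*}$ on $(0,\infty)$ and decrease to it on $(-\infty,0)$, applies It\^o to $v_n$ localized at $\tau_\epsilon^-$ (needed even in the unbounded variation case, where $v_{B^*}$ is continuous but fails to be $C^1$ at $0$), and controls the error term $(\mathcal{L}-r)(v_n-v_{B^*})$ through the compensation-formula identity $\E^x\bigl[\int_0^{\theta}e^{-rs}\Pi(X_{s-},\infty)\diff s\bigr]=\E^x\bigl[e^{-r\theta}1_{\{X_\theta<0,\,\theta<\infty\}}\bigr]<\infty$ of \eqref{upper_bound_generator_v_n}; only then do the successive limits $M\uparrow\infty$, $\epsilon\downarrow 0$, $n\rightarrow\infty$ deliver $v_{B^*}(x)\geq\E^x[e^{-r\nu}v_{B^*}(X_\nu)1_{\{\nu<\infty\}}]$. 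A further point your sketch omits: passing from $v_n$ back to $v_{B^*}$ inside the expectation requires $v_{B^*}(0)\leq v_{B^*}(0+)$ (the event $\{X_\nu=0\}$ can have positive probability when $\sigma>0$), which the paper proves separately at the outset of its argument.
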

The proofs and related technical details are provided  in the Appendix. As a result, we have solved for the callable step-down and putable step-up default swaps in view of the decompositions by Propositions \ref{prop-V} and \ref{prop-U}.

\subsection{Putable Step-Down Default Swap}\label{sect-seller-sol}
We now turn our attention to the putable step-down default swap, which amounts to solving $u(x)$ in (\ref{seller_problem}) with $\pcheck > 0$ and $\acheck > 0$.  Under the spectrally negative \lev model, the function $\Gamma(\cdot \,;A)$ in \eqref{def_gamma} can be expressed explicitly by scale functions as shown in Lemma \ref{remark_gamma_x_a} below. This together with (\ref{laplace_theta}) expresses explicitly $\Delta_A(\cdot)$ in (\ref{delta_seller}), which is needed to apply continuous and smooth fit principle.

Define
\begin{align}
\rho(A) := \int_A^\infty \Pi (\diff u)  \left( 1 -  e^{-\lapinv (u-A)} \right), \quad A > 0. \label{varrho_A}
\end{align}
It is clear that   $\rho(A)$ decreases monotonically in $A$. Next, we express $\Gamma(\cdot\,;A)$ using  $\rho(A)$ and the scale function. For its proof, we refer to  Lemma 4.3 of \cite{Egami_Yamazaki_2010}.

\begin{lemma} \label{remark_gamma_x_a} Fix $A > 0$, we have
\begin{align*}
\Gamma(x\,;A) =
\frac 1 {\lapinv} W^{(r)}(x-A) \rho(A) - \frac 1 r \int_A^\infty \Pi (\diff u)  \left( Z^{(r)}(x-A) - Z^{(r)} (x-u) \right), \quad \text{for } x \geq A,
\end{align*}and\, $\Gamma(x\,;A) =0$\, for $x<A$.
\end{lemma}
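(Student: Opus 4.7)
The plan is to compute $\Gamma(x;A)$ by (i) reducing to a first passage of $X$ below zero via a spatial shift, (ii) observing that only jumps (never creeping) produce the event $\{X_{\tau_A^-}<0\}$, (iii) applying the compensation formula to express $\Gamma$ as an integral against the $r$-potential measure of $X$ killed at its first passage below zero, and (iv) evaluating that integral using the explicit potential density in terms of the scale function.

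After setting $y := x-A \geq 0$ and using the spatial homogeneity of $X$, one rewrites
\[ \Gamma(x;A) = \E^y\bracks{e^{-r\tau_0^-} 1_{\{X_{\tau_0^-} < -A,\,\tau_0^- < \infty\}}},\]
where $\tau_0^- := \inf\{t \geq 0 : X_t \leq 0\}$. Since $X$ has no positive jumps, any continuous (creeping) crossing of $0$ produces $X_{\tau_0^-}=0$ and so contributes nothing to $\{X_{\tau_0^-}<-A\}$. That event is therefore realized exactly when the terminal negative jump at time $\tau_0^-$ has magnitude exceeding $X_{\tau_0^--}+A$. Applying the compensation formula (equivalently, the L\'evy system identity) to the Poisson random measure of jumps of $X$, restricted to the stochastic interval $[0,\tau_0^-]$, then gives
\[ \Gamma(x;A) = \E^y\bracks{\int_0^{\tau_0^-} e^{-rs}\, \overline{\Pi}(X_s+A)\, \diff s}, \qquad \overline{\Pi}(z) := \Pi(z,\infty).\]

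At this point I would insert the standard $r$-potential density of $X$ killed at $\tau_0^-$ for spectrally negative \lev processes,
\[ \E^y\bracks{\int_0^{\tau_0^-} e^{-rs} 1_{\{X_s \in \diff z\}} \diff s} = \bigl(e^{-\lapinv z}W^{(r)}(y) - W^{(r)}(y-z)\bigr)\diff z, \quad z>0,\]
(see, e.g., Corollary~8.8 of \cite{Kyprianou_2006}), and split the resulting double integral into two pieces. The exponential part, via Fubini, gives
\[W^{(r)}(y)\int_A^\infty \Pi(\diff u)\, \frac{1-e^{-\lapinv(u-A)}}{\lapinv} = \frac{W^{(r)}(y)\,\rho(A)}{\lapinv}\]
by the very definition \eqref{varrho_A} of $\rho(A)$, yielding the first term of the claimed identity. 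The scale-function part reduces by Fubini and the substitution $s = y-z$ to $\int_A^\infty \Pi(\diff u)\int_{y-u+A}^{y}W^{(r)}(s)\diff s$, and the defining relation $Z^{(r)}(b)-Z^{(r)}(a) = r\int_a^b W^{(r)}(s)\diff s$ together with $y-u+A = x-u$ turns this into $r^{-1}\int_A^\infty \Pi(\diff u)\bigl(Z^{(r)}(x-A)-Z^{(r)}(x-u)\bigr)$, matching the second term. The case $x<A$ is trivial since then $\tau_A^-=0$ and $X_{\tau_A^-}=x\geq 0$ is impossible under the overshoot event if $x>0$, and if $x\leq 0<A$ the event again contradicts $x\geq A$.

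The main obstacle is the careful justification of the compensation-formula step on the random interval $[0,\tau_0^-]$ and the identification that only jump-type exits (not creeping, if any) contribute to $\{X_{\tau_0^-}<-A\}$; once these are settled, the killed resolvent identity is classical and the remaining bookkeeping is just Fubini together with the antiderivative relation for $Z^{(r)}$. An alternative route is to invoke the triple law of Kyprianou (Theorem~8.7 of \cite{Kyprianou_2006}) for the joint distribution of $(\tau_0^-, X_{\tau_0^--}, X_{\tau_0^-})$ and integrate out the undershoot and last-minimum variables; this arrives at the same expression and bypasses the need to manipulate Poisson random measures directly.
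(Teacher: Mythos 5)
Your argument is correct: the paper itself gives no proof of this lemma (it defers to Lemma 4.3 of Egami and Yamazaki), and the route you take --- the compensation formula applied to the terminal jump, combined with the $r$-resolvent density $\bigl(e^{-\Phi(r)z}W^{(r)}(y)-W^{(r)}(y-z)\bigr)\diff z$ of the process killed on passing below zero, followed by Fubini and the antiderivative relation $Z^{(r)}(b)-Z^{(r)}(a)=r\int_a^b W^{(r)}(s)\diff s$ --- is exactly the standard one used in that reference, and all of your bookkeeping (the identification of $\rho(A)/\Phi(r)$ and the substitution $y-u+A=x-u$) checks out. The only blemish is your final sentence on the case $x<A$: for $0\le x\le A$ the conclusion is immediate since $\tau_A^-=0$ and $X_{\tau_A^-}=x\ge 0$ kills the indicator, but for $x<0$ one actually gets $\Gamma(x;A)=1$, so the assertion ``$\Gamma(x;A)=0$ for $x<A$'' is only valid for $x\ge 0$ --- a harmless abuse in the paper, since $\Gamma(\cdot\,;A)$ is only ever used on $(A,\infty)$, but your stated justification for the subcase $x\le 0$ does not parse and should simply be replaced by this observation.
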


We proceed to consider the continuous fit condition.  First, it follows from  (\ref{delta_seller}) that, for every $A > 0$,
\begin{align} \label{delta_A_A_seller}
\begin{split}
\Delta_A(A+)
&= \left( \gamma + \frac \pcheck r \right) \left( \frac r {\lapinv} W^{(r)}(0)\right) - (\acheck - \gamma) \left( \frac 1 {\lapinv} W^{(r)}(0) \rho(A)  \right) \\
&= \frac {W^{(r)}(0)} {\lapinv}  \left( \left( r \gamma + \pcheck \right)  - (\acheck - \gamma)  \rho(A)  \right).
\end{split}
\end{align}
If $X$ is of unbounded variation, then $W^{(r)}(0)=0$ by Lemma \ref{lemma_zero}, and therefore continuous fit holds for every $A > 0$. Nevertheless, for the bounded variation case,   we can apply the \textbf{continuous fit condition}: $\Delta_A(A+) =0$, which is equivalent to
\begin{align}
(\acheck - \gamma) \rho(A) = \gamma r + \pcheck. \label{smooth_fit_seller}
\end{align}

For the unbounded variation case, we apply the smooth fit condition. By differentiation, we have
\begin{align*}
\left. \frac \partial {\partial x} \zeta(x-A) \right|_{x=A+} = - \frac r {\lapinv} W^{(r)'}(0+)  \quad \textrm{and} \quad \left. \frac \partial {\partial x} \Gamma(x;A) \right|_{x = A+} = \frac {\rho(A)} {\lapinv} W^{(r)'}(0+),
\end{align*}
where $W^{(r)'}(0+) > 0$ (in particular, $W^{(r)'}(0+) = \infty$ if $\sigma = 0$ and $\Pi(0,\infty)=\infty$  by Lemma \ref{lemma_zero}). Therefore,
\begin{align*}
\Delta'_A(A+)  = \frac {W^{(r)'}(0+)} {\lapinv} \left( (\gamma r + \pcheck) - (\acheck - \gamma) \rho(A) \right).
\end{align*}
Consequently, the\textbf{ smooth fit condition}, $\Delta'_A(A+) =0$, is also equivalent to  (\ref{smooth_fit_seller}).

In summary, we look for the solution  to  (\ref{smooth_fit_seller}), denoted by $A^*$, which will be our candidate optimal threshold. Since $\rho(A)$ is monotonically decreasing, there exists \emph{at most} one $A^*$ that satisfies (\ref{smooth_fit_seller}). If it does not exist, we set the threshold $A^* = 0$.

When $X$ has paths of bounded variation,  we must have $A^* > 0$ under assumption \eqref{assumption_a}. Indeed, if $A^* = 0$, then it follows from  (\ref{delta_A_A_seller}) that $\Delta_A(A+) > 0$ for every $A > 0$. This implies that there exists $\varepsilon > 0$ such that $u_\varepsilon (\varepsilon+) > g(0+)$. However, since $g(0+)$ attains the global maximum (because $g(0+) > 0$ by \eqref{assumption_a}), this is a contradiction.  Hence $A^*=0$ is impossible when  $X$ is of bounded variation.

In the case with $A^* > 0$,  we take $A^*$ to be our candidate optimal threshold, and the corresponding stopping time is $\tau_{A^*}^-$. The  candidate value function is given by
\begin{align*}
u_{A^*}(x) = \left\{ \begin{array}{ll} g(x) + \Delta_{A^*}(x), &  x > A^*, \\ g(x), & 0 < x \leq A^*, \\
0, & x \leq 0.\end{array} \right.
\end{align*}
For $x >0$, we can apply     (\ref{smooth_fit_seller}) to express it as
\begin{align}\label{u_tilde_a}
u_{A^*}(x) = (\acheck- \gamma) \left(\frac 1 r \int_{A^*}^\infty \Pi (\diff u)   \left[ Z^{(r)} (x-A^*) - Z^{(r)}(x-u) \right]  \right)
- \left( \gamma + \frac \pcheck r \right) Z^{(r)}(x-A^*) + \left( \frac \pcheck r + \acheck \right) \zeta(x).
\end{align}

When $A^* = 0$, we consider the candidate value function defined by (see \eqref{def_G} and \eqref{delta_seller})\begin{align}
\begin{split}
u_{A^*}(x) &:= \lim_{A \downarrow 0} u_A(x)\\
& = -(\acheck - \gamma) \Gamma(x;0)
- \left( \gamma + \frac \pcheck r \right) \zeta(x) + \left( \frac \pcheck r + \acheck \right) \zeta(x)  \\
&=(\acheck - \gamma) (\zeta(x) - \Gamma(x;0))
\end{split} \label{def_w2}
\end{align}
for all $x > 0$ and $u_{A^*}(x)=0$ for every $x \leq 0$.  Here $\Gamma(x;0) := \lim_{A \downarrow 0}\Gamma(x;A)$ is well-defined according to Lemma \ref{remark_gamma_x_a} since $\rho(0) := \lim_{A \downarrow 0}\rho(A) < \infty$ if $A^* = 0$; see \eqref{smooth_fit_seller}.

This implies a strategy by which the seller will delay until $X$ is arbitrarily close to zero, and exercise at a  sufficiently small level $\varepsilon > 0$.  This can be realized by monitoring  $X$ as it \emph{creeps downward} through zero (see Section 5.3 of \cite{Kyprianou_2006}). The seller may lose the opportunity to exercise prior to default if $X$ suddenly jumps across (below) zero.  In fact, $\zeta(x) - \Gamma(x;0) = \E^x \left[ e^{-r\theta}\right] - \E^x \left[ e^{-r\theta} 1_{\{X_\theta < 0, \, \theta < \infty\}}\right] = \E^x \left[ e^{-r\theta} 1_{\{X_\theta = 0, \, \theta < \infty \}}\right] \geq 0$ and this is strictly positive if and only if $\sigma > 0$ (see Exercise 7.6 of \cite{Kyprianou_2006}).  The optimality result below implies that $A^* = 0$ can happen only when $X$ has a diffusion component by our assumption $g(0+) > 0$.

\begin{figure}[htbp]
\begin{center}
 \centering
 \includegraphics[scale=0.55]{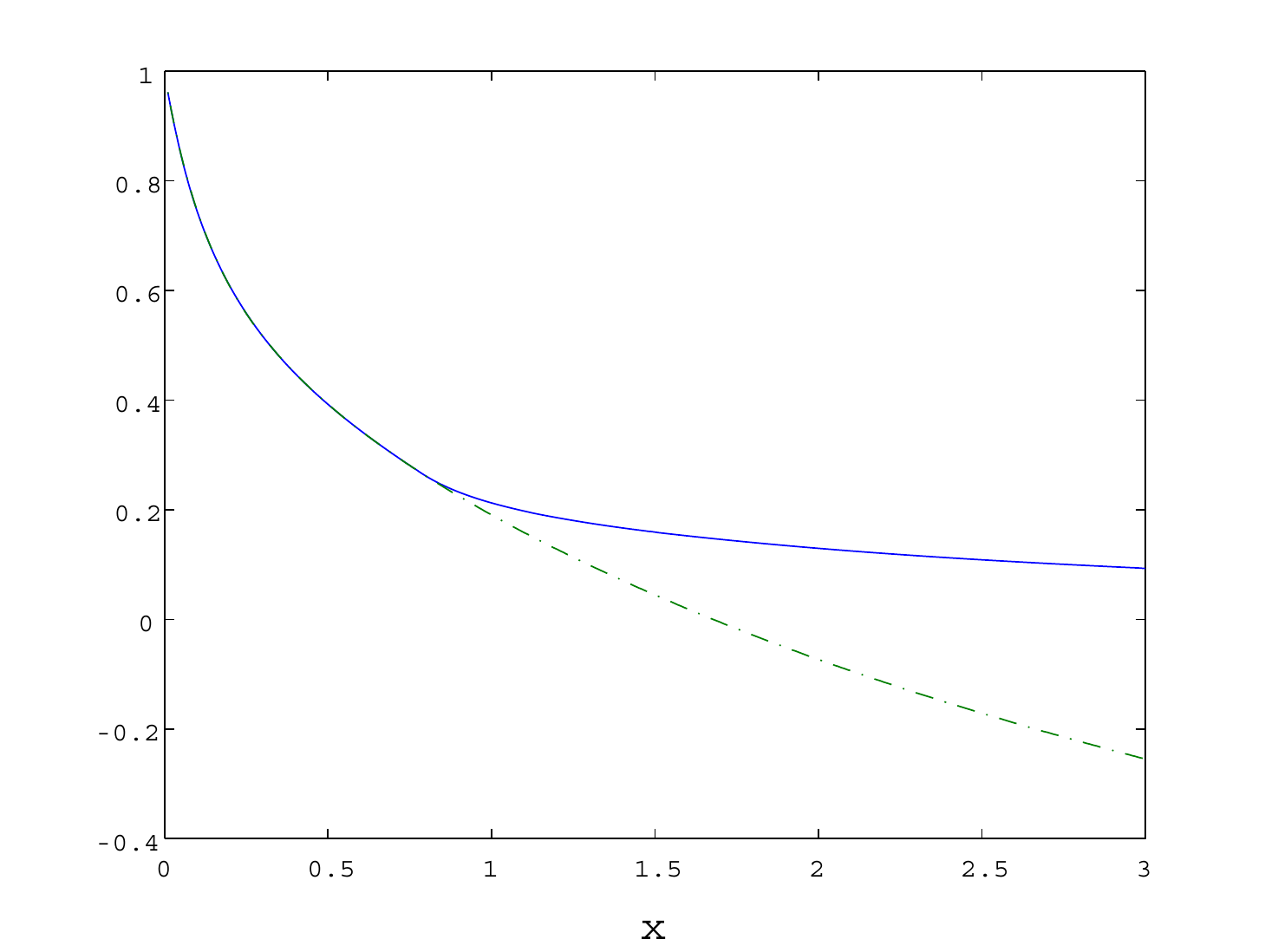}  \includegraphics[scale=0.55]{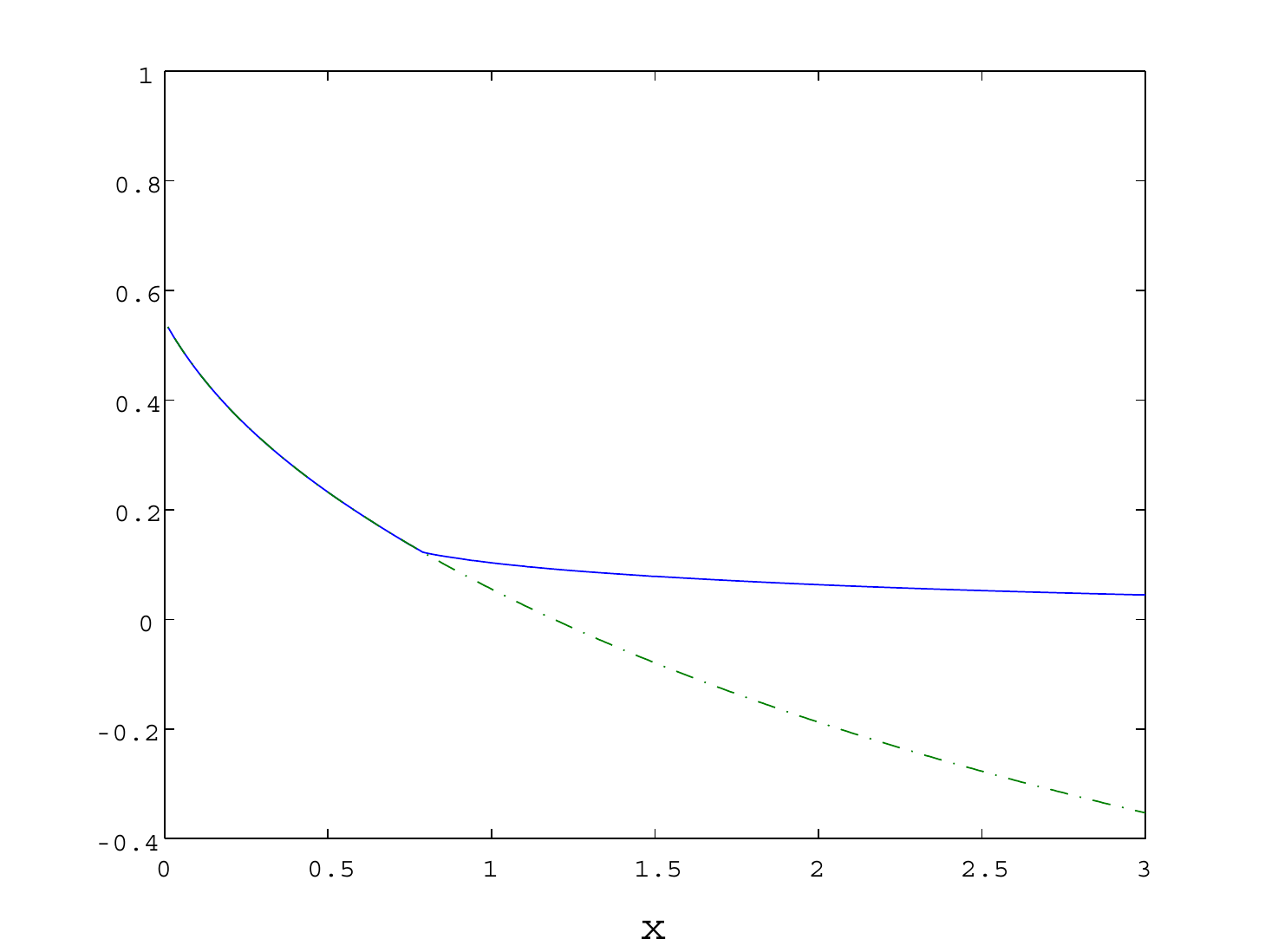}
\caption{\small{Continuous and smooth fits of the value function $u_{A^*}(\cdot)$ (solid curve) and stopping value $g(\cdot)$ (dashed curve). Notice that $u_{A^*}(\cdot)$ is $C^1$ at $A^*$ in the unbounded variation case (left) while it is $C^0$ in the bounded variation case (right).}}\label{plot_u_g}
\end{center}
\end{figure}

For optimality verification, we shall show\, (i) $u_{A^*} \geq g$ and\, (ii)  $e^{-r (t \wedge \theta)}u_{A^*}(X_{t \wedge \theta})$, $t \geq 0$, is a supermartingale. We shall first show (i) via:

\begin{lemma} \label{u_minimum_seller}
For every $0 < A < x$, we have
\begin{align}\label{signofderiv}
\frac \partial {\partial A} u_A(x) = \frac \partial {\partial A} \Delta_A(x) < 0 \quad \Longleftrightarrow \quad (\acheck-\gamma) \rho(A) - (\gamma r + \pcheck) < 0  \quad \Longleftrightarrow \quad A > A^*.
\end{align}
\end{lemma}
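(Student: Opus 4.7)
The plan is to first note that $\partial_A u_A(x) = \partial_A \Delta_A(x)$ holds trivially: on $\{x > A\}$ one has $u_A(x) = g(x) + \Delta_A(x)$ by the definition of $\Delta_A$, and $g$ does not depend on $A$.

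For the sign analysis I would differentiate the closed-form expression \eqref{delta_seller} for $\Delta_A(x)$, using \eqref{laplace_theta} to expand $\zeta(x-A)$ and Lemma \ref{remark_gamma_x_a} to expand $\Gamma(x;A)$ in terms of $W^{(r)}$ and $Z^{(r)}$. From the elementary identities $\partial_A Z^{(r)}(x-A) = -r W^{(r)}(x-A)$ and $\partial_A W^{(r)}(x-A) = -W^{(r)'}(x-A)$, I would first compute
\[\frac{\partial}{\partial A}\bigl(1-\zeta(x-A)\bigr) = r\Bigl[ W^{(r)}(x-A) - \tfrac{1}{\Phi(r)} W^{(r)'}(x-A)\Bigr].\]
The derivative $\partial_A \Gamma(x;A)$ is more delicate, since $A$ enters three places in Lemma \ref{remark_gamma_x_a}: through the lower limit of the integral, through $x-A$ in the integrand, and through the coefficient $\rho(A)$. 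Under the no-atom assumption on $\Pi$, and using the identity $\rho'(A) = -\Phi(r)(\bar\Pi(A)-\rho(A))$ obtained by differentiating \eqref{varrho_A}, the three contributions collapse cleanly into
\[\frac{\partial}{\partial A}\Gamma(x;A) = \rho(A)\Bigl[W^{(r)}(x-A) - \tfrac{1}{\Phi(r)} W^{(r)'}(x-A)\Bigr].\]
Substituting both derivatives into $\Delta_A$ produces the desired factorization
\[\frac{\partial}{\partial A}\Delta_A(x) = \Bigl[W^{(r)}(x-A) - \tfrac{1}{\Phi(r)} W^{(r)'}(x-A)\Bigr] \cdot \Bigl[(\gamma r + \pcheck) - (\acheck-\gamma)\rho(A)\Bigr].\]

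To fix the sign it remains to show that the first bracket is strictly negative for $x > A$. Inequality \eqref{assumeW} says $W^{(r)'}/W^{(r)}$ is non-increasing on $(0,\infty)$, and it is standard (cf.\ \cite{Kyprianou_2006}) that this ratio converges to $\Phi(r)$ at infinity, so $W^{(r)'}(y) \geq \Phi(r)\, W^{(r)}(y)$ for every $y > 0$, with strict inequality at finite $y$. Therefore $\partial_A \Delta_A(x) < 0$ if and only if the second bracket is strictly positive, i.e.\ $(\acheck-\gamma)\rho(A) - (\gamma r + \pcheck) < 0$, which is the first equivalence. The second equivalence is immediate from the strict monotonicity of $\rho$ visible in \eqref{varrho_A} and the definition \eqref{smooth_fit_seller} of $A^*$; assumption \eqref{assumption_a} gives $\acheck - \gamma > 0$ so that the direction of the inequality is preserved.

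The main technical hurdle is the clean cancellation in $\partial_A\Gamma(x;A)$: the three $A$-dependences must combine so that only the universal bracket $W^{(r)}(x-A) - \Phi(r)^{-1} W^{(r)'}(x-A)$ survives. The auxiliary identity $\bar\Pi(A) = \rho(A) + J(A)$ with $J(A) := \int_A^\infty e^{-\Phi(r)(u-A)}\Pi(du)$ is the algebraic step that drives the cancellation, and the no-atom hypothesis on $\Pi$ is what justifies discarding the boundary term when differentiating under the integral.
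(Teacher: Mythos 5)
Your proposal is correct and follows essentially the same route as the paper: differentiate the closed-form expression for $\Delta_A$, obtain the factorization $\bigl[W^{(r)}(x-A) - \Phi(r)^{-1}W^{(r)'}(x-A)\bigr]\bigl[(\gamma r + \pcheck) - (\acheck-\gamma)\rho(A)\bigr]$, and conclude from the sign of the first bracket together with the monotonicity of $\rho$. The only cosmetic difference is that the paper establishes the nonpositivity of the first bracket by writing it as $-\Phi(r)^{-1}e^{\Phi(r)(x-A)}W_{\Phi(r)}'(x-A)$ with $W_{\Phi(r)}(y) = e^{-\Phi(r)y}W^{(r)}(y)$ increasing, whereas you deduce $W^{(r)'}(y) \geq \Phi(r)W^{(r)}(y)$ from \eqref{assumeW} plus the limit of the log-derivative at infinity --- two equivalent formulations of the same fact.
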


Suppose $A^*>0$. For the unbounded variation case, we   apply Lemma \ref{u_minimum_seller} for  any fixed $x \geq A^*$, along with the continuous fit, to obtain the inequality: $u_{A^*}(x) \geq u_x (x) = g(x)$, $x \geq A^*$. When $X$ is of bounded variation, we observe from  (\ref{delta_A_A_seller}) that  $\Delta_x(x+) \geq 0$ for $x \geq A^*$, which implies that
\begin{align}
u_{A^*}(x) \geq \lim_{A \uparrow x}u_A(x) = g(x) + \lim_{A \uparrow x} \Delta_A(x) \geq g(x),  \quad x \geq A^*. \label{remark_bounded_variation_inequality}
\end{align}
This domination also  holds for  $A^*=0$ in the same way by its definition as a limit in \eqref{def_w2} and because \eqref{signofderiv} holds for every $A > 0$.
Finally for  $x\in(-\infty,A^*)$, the equality $u_{A^*}(x) = g(x)$ holds  by definition. As a result, we conclude that
\begin{align}
u_{A^*}(x) \geq g(x) \quad \text{ for } x \in \R. \label{lemma_u_greater_than_g}
\end{align}

For the supermartingale property, we shall use the following result (see Appendix for a proof):
\begin{lemma} \label{lemma_generator_zero_seller}
We have $(\mathcal{L}-r) u_{A^*}(x) \leq 0$ for every $x \in (0,\infty) \backslash \{ A^* \}$.
\end{lemma}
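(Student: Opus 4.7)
The plan is to split the argument according to whether $x$ lies in the continuation region $(A^*, \infty)$ or in the stopping region $(0, A^*)$, and to exploit the explicit representation of $u_{A^*}$ in terms of the scale functions and $\zeta$ together with the fit condition \eqref{smooth_fit_seller}.

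On the continuation region $x > A^*$, the claim is that $(\mathcal{L}-r) u_{A^*}(x) = 0$. By construction $u_{A^*}(x) = \E^x\bigl[ e^{-r\tau_{A^*}^-}g(X_{\tau_{A^*}^-}) 1_{\{\tau_{A^*}^- < \infty\}}\bigr]$, so the strong Markov property of $X$ ensures that $\{e^{-r(t \wedge \tau_{A^*}^- \wedge \theta)} u_{A^*}(X_{t \wedge \tau_{A^*}^- \wedge \theta})\}_{t\geq 0}$ is a martingale, and Dynkin's formula then yields $(\mathcal{L}-r) u_{A^*} \equiv 0$ on $(A^*, \infty)$. Equivalently, one can verify this directly from \eqref{u_tilde_a} via the standard identities $(\mathcal{L}-r) W^{(r)} \equiv 0$, $(\mathcal{L}-r) Z^{(r)} \equiv 0$ and $(\mathcal{L}-r) \zeta \equiv 0$ on $(0, \infty)$. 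When $A^* = 0$, the same argument, passed to the limit in \eqref{def_w2}, covers all $x > 0$.

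The stopping region $(0, A^*)$ (non-empty only if $A^* > 0$) is the more delicate case. Here $u_{A^*}(x) = g(x) = -\pcheck/r - \gamma + (\pcheck/r + \acheck)\zeta(x)$. I would introduce the natural extension $\tilde g(y) := -\pcheck/r - \gamma + (\pcheck/r + \acheck)\zeta(y)$ for $y \in \R$, using $\zeta(y) = 1$ on $(-\infty, 0]$; then $\tilde g \equiv \acheck - \gamma$ on $(-\infty, 0]$, whereas $u_{A^*} \equiv 0$ there. Since $\mathcal{L}$ annihilates constants and $(\mathcal{L}-r)\zeta \equiv 0$ on $(0,\infty)$, a direct calculation gives $(\mathcal{L}-r) \tilde g(x) = \pcheck + r\gamma$. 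On $(0, A^*)$ the functions $u_{A^*}$ and $\tilde g$ coincide, so their first and second derivatives agree pointwise at $x$, and the two generators differ only through the jump integral over $z \geq x$:
\begin{align*}
\mathcal{L} u_{A^*}(x) - \mathcal{L} \tilde g(x) = \int_{x}^\infty \bigl[ u_{A^*}(x-z) - \tilde g(x-z) \bigr] \Pi(\diff z) = -(\acheck - \gamma)\, \bar\Pi(x),
\end{align*}
where $\bar\Pi(x) := \Pi(x,\infty)$. Consequently $(\mathcal{L}-r) u_{A^*}(x) = (\pcheck + r\gamma) - (\acheck - \gamma)\,\bar\Pi(x)$.

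The final step deduces the desired inequality from the fit condition. Since $1 - e^{-\lapinv(u-A)} \leq 1$, definition \eqref{varrho_A} gives $\rho(A) \leq \bar\Pi(A)$; as $\bar\Pi$ is non-increasing, $\bar\Pi(x) \geq \bar\Pi(A^*) \geq \rho(A^*)$ for $x \in (0, A^*)$. Combined with \eqref{smooth_fit_seller}, namely $(\acheck - \gamma)\rho(A^*) = \gamma r + \pcheck$, this yields $(\acheck-\gamma)\bar\Pi(x) \geq \gamma r + \pcheck$, whence $(\mathcal{L}-r)u_{A^*}(x) \leq 0$ on $(0, A^*)$. I expect the main obstacle to be the careful bookkeeping in the stopping region: identifying the correct extension $\tilde g$, isolating the discrepancy on $(-\infty, 0]$ as a clean multiple of $\bar\Pi(x)$, and then tying the resulting inequality back to the fit condition through monotonicity — this last link is precisely what makes $A^*$ the right threshold. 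The identity on the continuation region is comparatively routine, either probabilistically via the martingale property or analytically via the scale-function identities.
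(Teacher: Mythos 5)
Your proof is correct and follows essentially the same route as the paper's: on $(0,A^*)$ the paper likewise extends $g$ through $\zeta$, isolates the jump-integral discrepancy $-(\acheck-\gamma)\Pi(x,\infty)$ coming from $(-\infty,0]$, and closes with $\Pi(x,\infty)\geq\rho(A^*)$ together with the fit condition \eqref{smooth_fit_seller}. The only cosmetic difference is on $(A^*,\infty)$, where you invoke the martingale/Dynkin argument while the paper computes $(\mathcal{L}-r)\widetilde{u}(x)=(\acheck-\gamma)\Pi(x,\infty)$ for the scale-function extension and cancels it against the same discrepancy term --- note that your ``equivalently, verify directly from \eqref{u_tilde_a}'' alternative would require that same correction, since $u_{A^*}$ vanishes on $(-\infty,0]$ while the formula does not.
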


Finally,   inequality \eqref{lemma_u_greater_than_g} and Lemma \ref{lemma_generator_zero_seller}  yield the optimality of  $u_{A^*}$ similarly to Theorem \ref{optimality_buyer}. Hence, we have
\begin{theorem} \label{proposition_optimality_case_1} The candidate function $u_{A^*} (x)$, with $A^*$ given by \eqref{smooth_fit_seller},  is  optimal for \eqref{seller_problem}. That is,
\begin{align*}
u_{A^*}(x) = u(x) = \sup_{\tau \in \S} \E^x \left[ e^{-r \tau} g(X_\tau) 1_{\left\{ \tau < \infty \right\}} \right].
\end{align*}
\end{theorem}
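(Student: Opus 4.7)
The plan is to mimic the verification argument sketched for Theorem \ref{optimality_buyer}. Two key ingredients are already in place: the pointwise domination $u_{A^*} \geq g$ from \eqref{lemma_u_greater_than_g}, and the generator inequality $(\mathcal{L}-r) u_{A^*} \leq 0$ on $(0,\infty) \setminus \{A^*\}$ from Lemma \ref{lemma_generator_zero_seller}. What remains is to (a) upgrade the generator inequality to a true supermartingale property of $\widetilde{M}_t := e^{-r(t\wedge\theta)} u_{A^*}(X_{t\wedge\theta})$, (b) combine (a) with the domination to bound $u(x)$ from above by $u_{A^*}(x)$, and (c) exhibit $\tau_{A^*}^-$ as an admissible stopping time that attains this bound.

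For (a), note that $u_{A^*}$ is built out of the $C^1$ objects $Z^{(r)}$, $W^{(r)}$, $\zeta$, and $\Gamma(\cdot\,;A^*)$ from Lemma \ref{remark_gamma_x_a}, hence is $C^1$ on $(0,\infty)\setminus\{A^*\}$; at $A^*$, the construction enforces smooth fit in the unbounded variation case and continuous fit in the bounded variation case. Apply the appropriate change-of-variables formula (It\^o in the unbounded variation case, Meyer--It\^o in the bounded variation case) to $e^{-rt} u_{A^*}(X_t)$ on $\{t < \theta\}$ to obtain the decomposition into a local martingale plus a finite-variation drift $\int_0^{t\wedge\theta} e^{-rs}(\mathcal{L}-r) u_{A^*}(X_s)\ud s$. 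Lemma \ref{lemma_generator_zero_seller} makes this drift non-increasing, so $\widetilde{M}$ is a local supermartingale; non-negativity of $u_{A^*}$ (inherited from its variational characterisation, since $\tau = \theta$ is always feasible and yields payoff $0$) upgrades local to true supermartingale via Fatou's lemma.

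For (b), apply optional sampling to an arbitrary $\tau \in \mathcal{S}$. Since $\tau \leq \theta$ and $u_{A^*}(X_\theta) = 0 = g(X_\theta)$ on the default event, the supermartingale property combined with \eqref{lemma_u_greater_than_g} gives
\begin{align*}
u_{A^*}(x) \;\geq\; \E^x\bigl[e^{-r\tau} u_{A^*}(X_\tau) 1_{\{\tau < \infty\}}\bigr] \;\geq\; \E^x\bigl[e^{-r\tau} g(X_\tau) 1_{\{\tau < \infty\}}\bigr].
\end{align*}
Taking the supremum yields $u_{A^*}(x) \geq u(x)$. For (c), when $A^* > 0$ we have $\tau_{A^*}^- \leq \theta \in \mathcal{S}$ and the identity $u_{A^*}(x) = \E^x[e^{-r\tau_{A^*}^-} g(X_{\tau_{A^*}^-}) 1_{\{\tau_{A^*}^- < \infty\}}]$ holds by definition of the candidate. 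When $A^* = 0$, this must be extracted as a limit along $A \downarrow 0$, matching the limit definition in \eqref{def_w2}, after which equality with $u(x)$ follows from (b).

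The main obstacle will be (a) in the bounded variation case, where $u_{A^*}$ is only $C^0$ at $A^*$ so the naive It\^o expansion would pick up a local-time correction at $A^*$. The resolution is that a spectrally negative Lévy process of bounded variation cannot accumulate local time at the single point $A^*$ as it must cross downward by a jump; alternatively one can split the trajectory at $\tau_{A^*}^-$ and invoke the strong Markov property, since after the first passage the candidate is simply $g(X_{\tau_{A^*}^-})$ evaluated at the post-crossing state. A parallel subtlety arises in (c) when $A^* = 0$: here the creeping property of $X$ through zero (which, by the discussion after \eqref{def_w2}, requires $\sigma > 0$) must be exploited to justify that exercising arbitrarily close to the default boundary is genuinely realisable and that $u_A(x) \to u_{A^*}(x)$ as $A \downarrow 0$ for every $x > 0$.
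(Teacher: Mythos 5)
Your overall strategy --- verification via the domination \eqref{lemma_u_greater_than_g}, the generator inequality of Lemma \ref{lemma_generator_zero_seller}, a supermartingale upgrade, and optional sampling --- is exactly the route the paper takes (it simply says the optimality follows ``similarly to Theorem \ref{optimality_buyer}'' and relies on the appendix machinery developed there). However, you have misidentified where the real difficulty lies. You treat the kink at $A^*$ as the main obstacle, but the genuinely delicate point is the \emph{jump discontinuity of $u_{A^*}$ at $0$}: for $A^*>0$ one has $u_{A^*}(x)=g(x)$ on $(0,A^*]$ with $g(0+)>0$ by \eqref{assumption_a}, while $u_{A^*}(x)=0$ for $x\le 0$, so $u_{A^*}(0+)>0=u_{A^*}(0)$ in \emph{both} the bounded and unbounded variation cases (and likewise when $A^*=0$, where $u_{A^*}(0+)=\acheck-\gamma>0$). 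Since the spectrally negative process can jump across $0$, the It\^o/Meyer--It\^o formula cannot be applied directly to $e^{-rt}u_{A^*}(X_t)$: the function is not even continuous on a neighbourhood of the range of the stopped process, and the jump part of $\mathcal{L}$ evaluates $u_{A^*}$ at points below $0$. The paper's appendix resolves this (for the analogous $v_{B^*}$) by approximating the candidate with continuous functions $v_n$ that agree with it on $(0,\infty)$ and decrease to it on $(-\infty,0)$, controlling $(\mathcal{L}-r)(v_n-v_{B^*})$ via the compensation-formula bound \eqref{upper_bound_generator_v_n}, and localising at $\tau_\epsilon^-$ because the function also fails to be $C^1$ at zero. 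Your proposal omits this entire step, and without it the claimed drift decomposition of $\widetilde{M}$ is not justified.

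A second, smaller flaw: you justify the passage from local to true supermartingale by ``non-negativity of $u_{A^*}$ inherited from its variational characterisation, since $\tau=\theta$ is feasible and yields payoff $0$.'' That is circular --- the candidate $u_{A^*}$ is defined as the expected payoff of one particular strategy, not as the supremum, so its non-negativity cannot be read off from the optimal stopping problem before optimality is proved (and note $g$ itself is negative for large $x$). The fix is easy: $g$ is bounded, hence so is $u_{A^*}$, and a local supermartingale bounded from below is a supermartingale by Fatou; alternatively one follows the paper and uses dominated/monotone convergence through the localising sequence. With the discontinuity at $0$ handled as in the appendix and the boundedness argument substituted for the circular one, your outline closes.
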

With this result, we have solved for the putable step-down and callable step-up default swaps in view of the decompositions by Propositions \ref{prop-V} and \ref{prop-U}.

\section{Numerical Examples} \label{section_numerical}
In this section, we numerically  illustrate the investor's optimal exercise strategy and the credit spread behaviors, where the underlying spectrally negative \lev process is assumed to have \emph{hyperexponential jumps}   of the form
\begin{equation}
  X_t  - X_0=\mu t+\sigma B_t - \sum_{n=1}^{N_t} Z_n, \quad 0\le t <\infty. \label{levy_canonical}
\end{equation}
Here $B=\{B_t; t\ge 0\}$ is a standard Brownian motion, $N=\{N_t; t\ge 0\}$ is a Poisson process with arrival rate $\lambda$, and  $Z = \left\{ Z_n; n = 1,2,\ldots \right\}$ are i.i.d.  hyperexponential  random variables with density function
\begin{align*}
f (z)  = \sum_{i=1}^m \alpha_i \eta_i e^{- \eta_i z}, \quad z > 0,
\end{align*}
for some $0 < \eta_1 < \cdots < \eta_m < \infty$.

As discussed in \cite{Egami_Yamazaki_2010_2}, the scale function of the \lev process of the form \eqref{levy_canonical} admits  analytic form and can approximate that of any spectrally negative \lev process with a completely monotone \lev measure.   For our numerical examples, we consider the process in the form (\ref{levy_canonical}) with $Z$ replaced by Pareto random variables with distribution function $F(t) = 1-(1+5t)^{-1.2}$ for $t \geq 0$.   We use the approximation to its scale function computed in \cite{Egami_Yamazaki_2010_2} where they adopted the hyperexponential fitting algorithm given by \cite{Feldmann_1998}.
We refer to \cite{Egami_Yamazaki_2010_2, Feldmann_1998} for the detailed fitting procedure and  fitted parameters.

For both callable and putable default swaps, we consider the following cases:
\begin{itemize}
\item Step-Down default swap with $\hat{p}/p = \hat{\alpha} / \alpha=0.5$,
\item Step-Up default swap with $\hat{p}/p = \hat{\alpha} / \alpha = 1.5$.
\end{itemize}
Hence, there are in total 4 cases.  The model parameters are $r =0.03$, $\sigma = 0.2$,  $\alpha=1$, $x=1.5$ and $\gamma = 50$bps, unless specified otherwise. We shall adjust the values of $\lambda$ and $\mu$ so that the risk-neutral condition $\psi(1) = r$ holds.

By    symmetry  (see Section \ref{subsection_symmetry}), the optimal stopping problems   for   callable step-down and putable step-up default swaps are equivalent, while callable step-up and putable step-down default swaps are  equivalent.  Figure \ref{figure_threshold} shows the optimal thresholds $B^*$ for  callable step-down/putable step-up default swaps  and $A^*$ for callable step-up/putable step-down   default swaps.      Both $B^*$ and $A^*$ are decreasing in the premium $p$, as is  intuitive. Also,   $A^*$ and $B^*$ rise as default risk $\lambda$ increases.

\begin{figure}[htb]
\begin{center}
\includegraphics[scale=0.55]{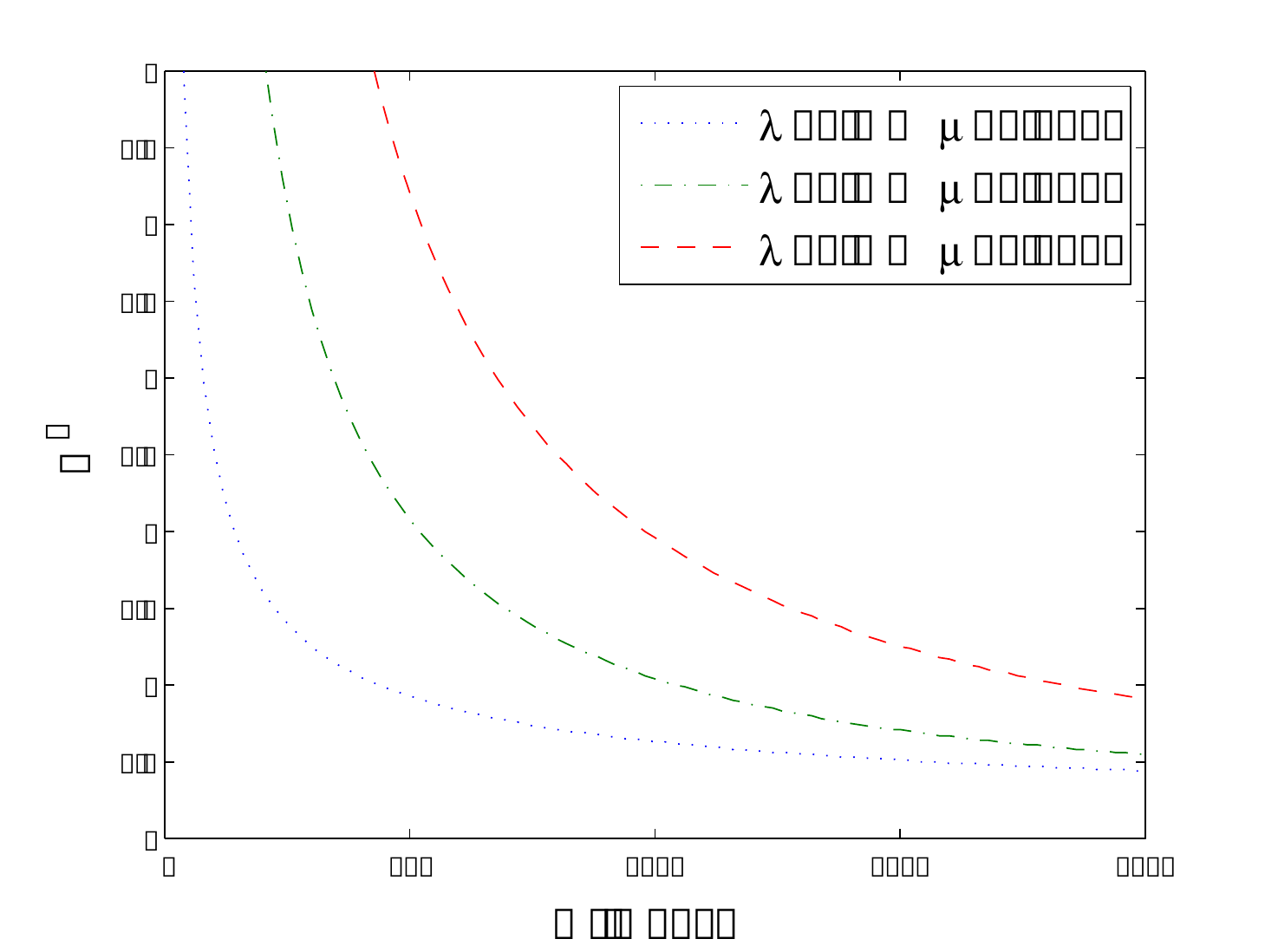} \includegraphics[scale=0.55]{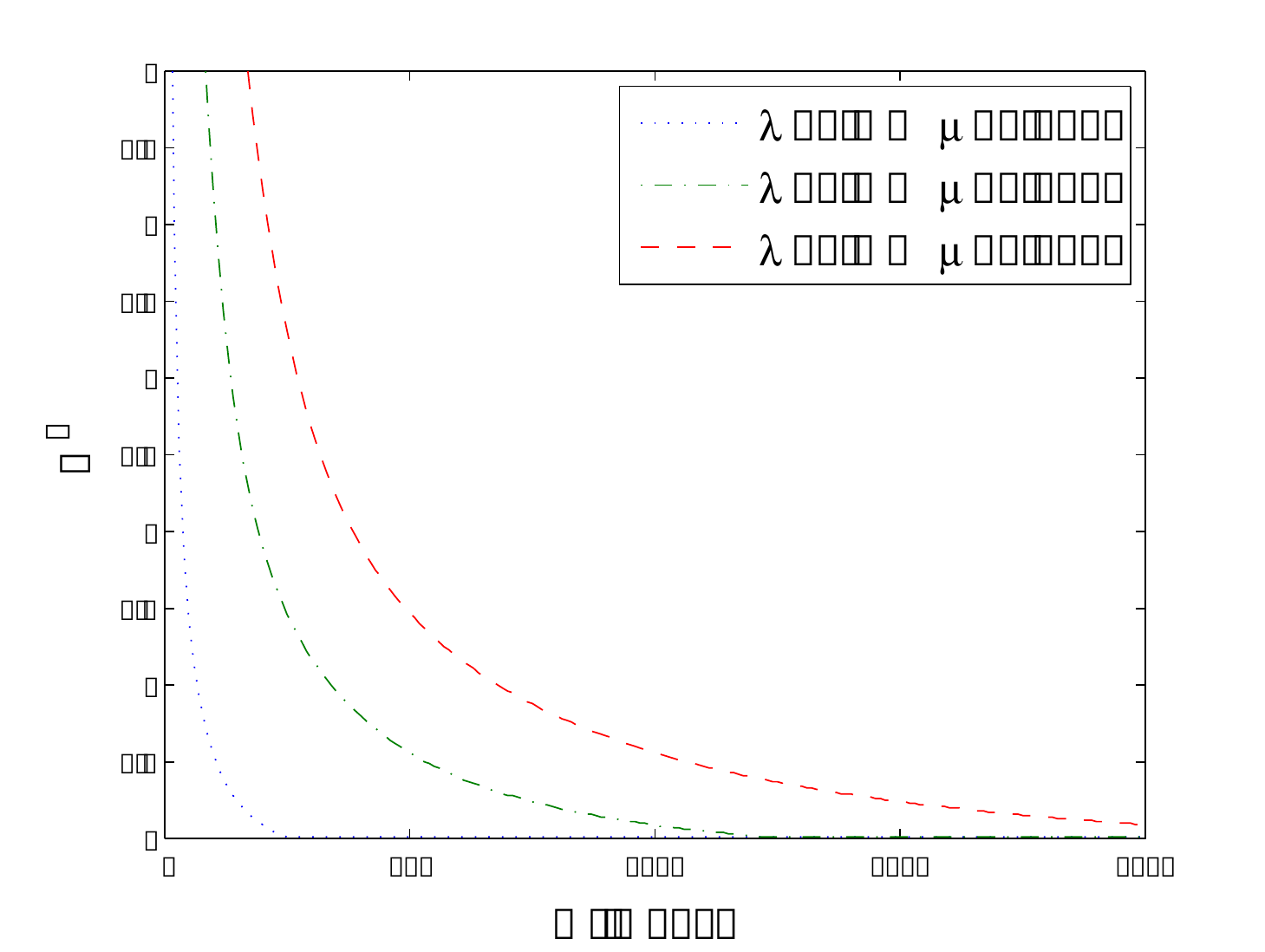}
\caption{\small{(Left): optimal thresholds for the callable step-down/putable step-up default swap. (Right): optimal thresholds for the callable step-up/putable step-down default swap.}} \label{figure_threshold}
\end{center}
\end{figure}

Figure \ref{premium_callable}   shows the credit spread $p^*$ as a function of the distance-to-default $x$ for the callable and putable  step-down default swaps, with  the vanilla CDS  as benchmark (see \eqref{def_p_pertpetual}). We first compute the contract values, which are  monotone in $p$, and then determine $p^*$ by a bisection method. As $x$ increases, meaning lower default risk, the credit spread  $p^*$ reduces.  The callable step-down default swap spreads are naturally higher than the vanilla case due to  the embedded step-down option.  In contrast,   the  putable step-down   spreads are lower than the vanilla case because the buyer is subject to the step-down exercise by the seller.

\begin{figure}[htbp]
\begin{center}
  \includegraphics[scale=0.55]{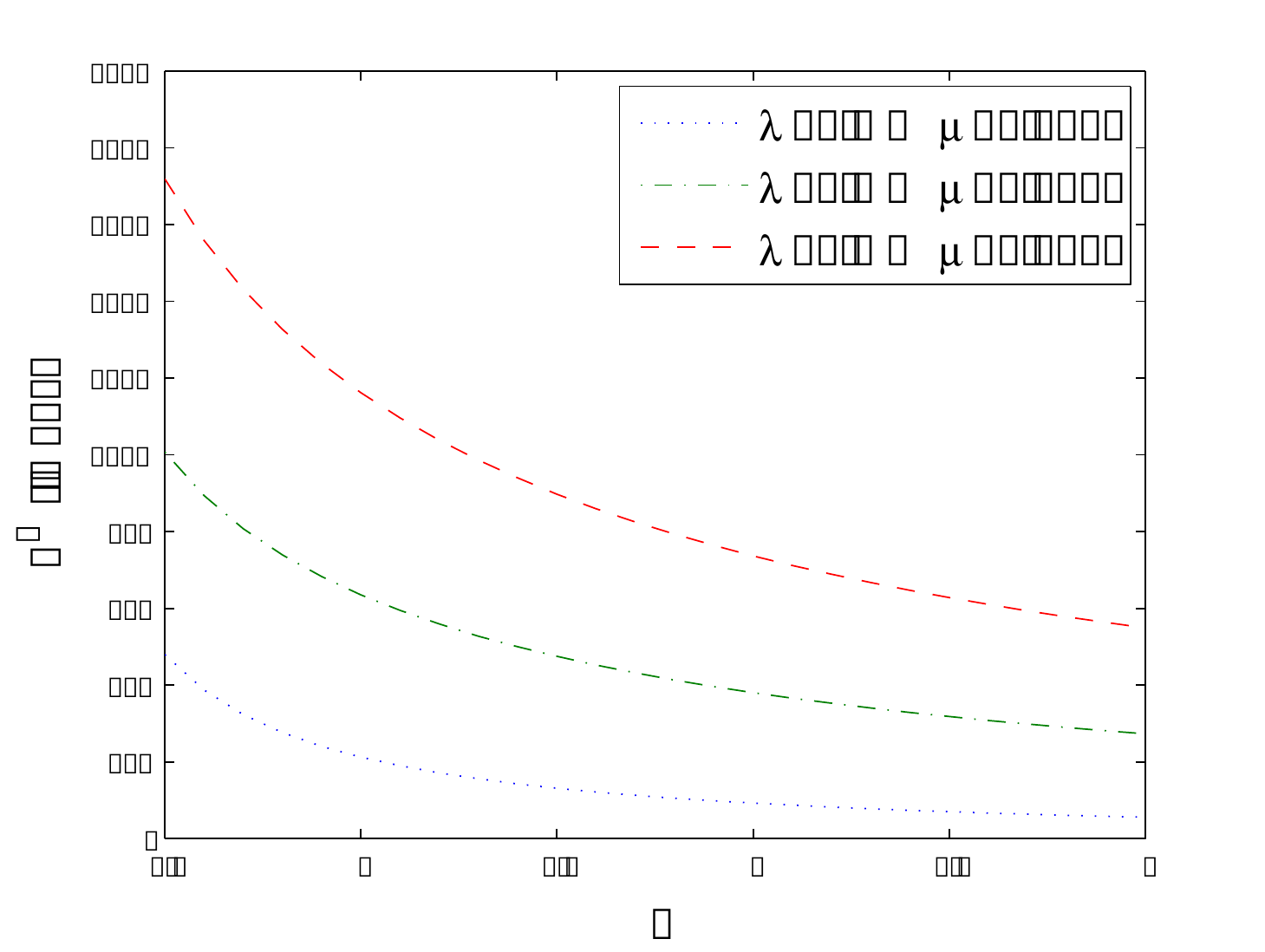}\includegraphics[scale=0.55]{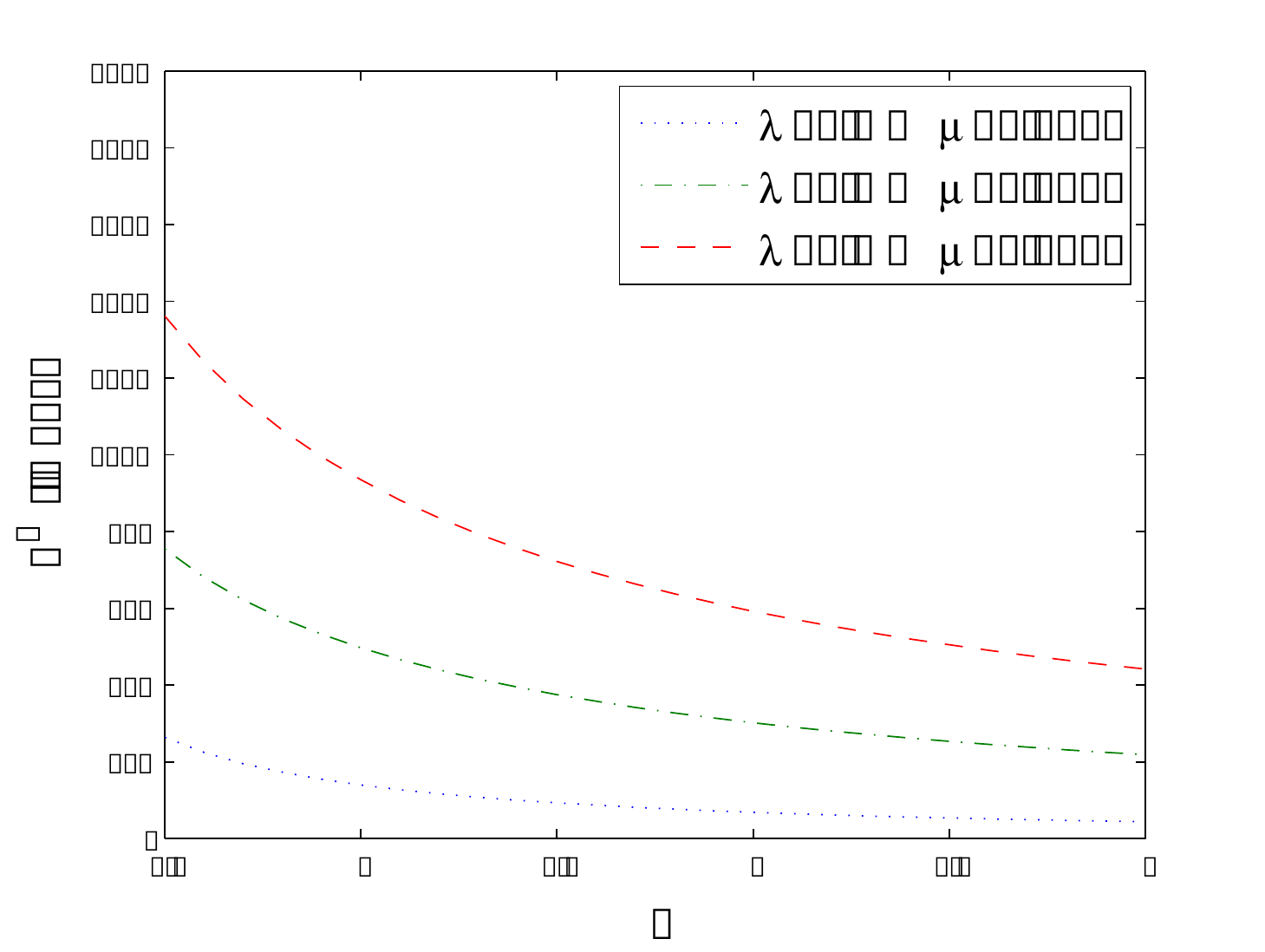}\\
\includegraphics[scale=0.55]{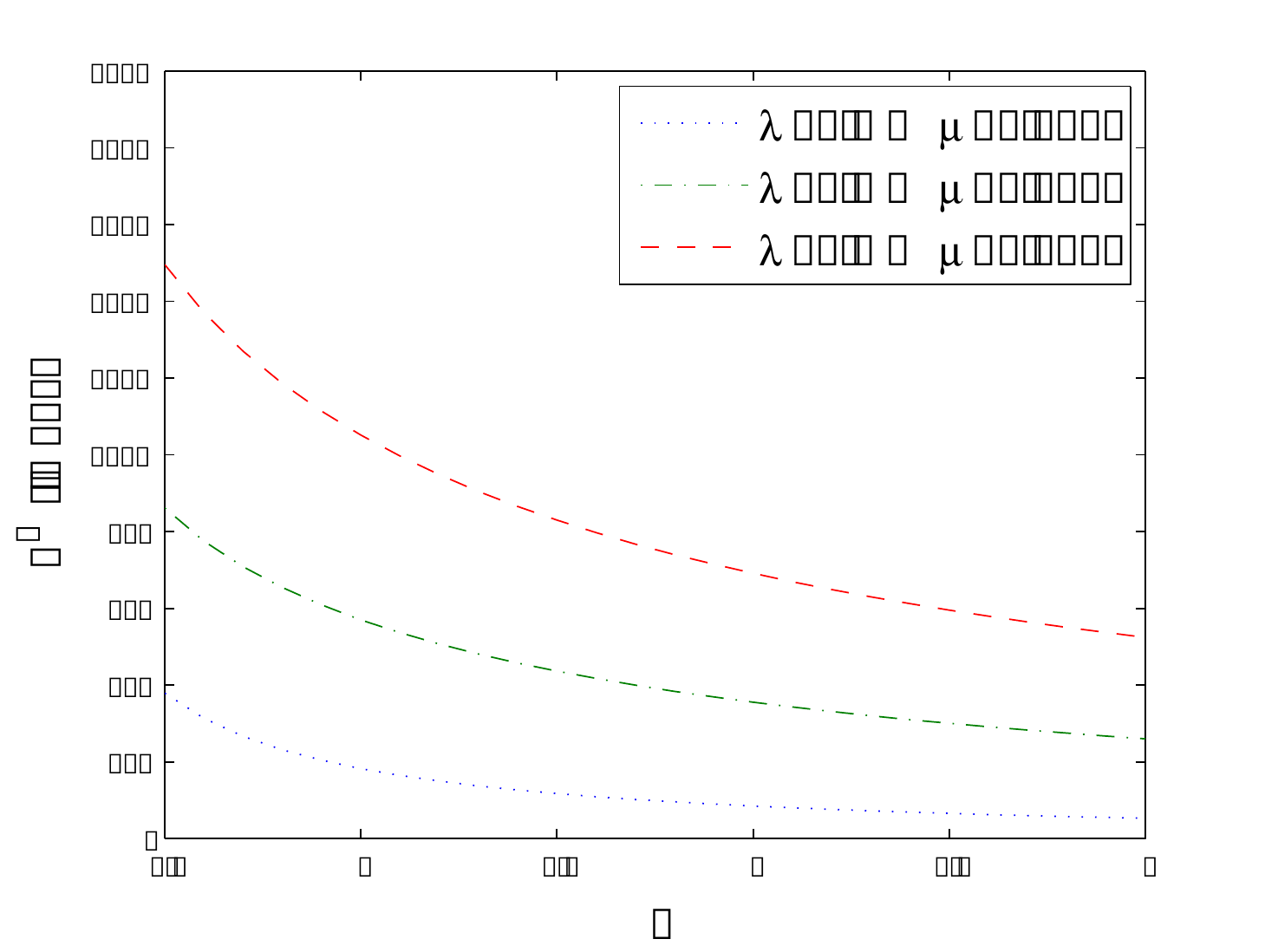}
\caption{\small{Credit spreads vs distance-to-default for the callable step-down (top-left), putable step-down (top-right) and vanilla (bottom) default swaps.}} \label{premium_callable}
\end{center}
\end{figure}

\section{The Finite-maturity case} \label{section_finite_maturity}

We now consider the finite-maturity case, and study how the solutions can be approximated using the results obtained in the previous sections.   We first formulate the finite-maturity American callable/putable step-up/down default swaps by modifying the results in Sections \ref{sect-callableCDS} and \ref{sect-putableCDS}.  We then show that their value functions, optimal strategies, and credit spreads can be efficiently approximated by our analytical results on the perpetual case.

\subsection{Finite-Maturity Formulation}
Let $T\in (0,\infty)$ be a given finite maturity and define
\begin{align}
\S_T : = \left\{ \mathbb{F}\textrm{-stopping time } \tau: \tau \leq \theta \wedge T \; \text{ a.s. }\right\} \label{def_S_T}
\end{align}
be the set of all stopping times smaller than or equal to $\theta \wedge T$.  The buyer's and seller's maximal expected cash flows, respectively,  are given by,
\begin{align*}\bar{V}&(x, T; p,\ph, \alpha, \ah, \gamma)  \\
&:=\sup_{\tau \in \S_T} \E^x \left[ -\int_0^\tau e^{-rt} p\,\diff t-\int_\tau^{\theta \wedge T} e^{-rt}\ph\,\diff t -e^{-r \tau} \gamma  1_{\{\tau <\theta \wedge T\}} + e^{-r \theta} 1_{\{\theta < T\}} (\ah 1_{\{\tau <\theta\}} +\alpha 1_{\{\tau =\theta \}}) \right], \\
\bar{U}&(x, T; p,\ph, \alpha, \ah, \gamma)  \\
&:=\sup_{\tau \in \S_T} \E^x \left[ \int_0^\tau e^{-rt} p\,\diff t + \int_\tau^{\theta \wedge T} e^{-rt}\ph\,\diff t -e^{-r \tau} \gamma  1_{\{\tau <\theta \wedge T\}} - e^{-r \theta} 1_{\{\theta < T\}} (\ah 1_{\{\tau <\theta\}} +\alpha 1_{\{\tau =\theta \}}) \right].
\end{align*}
Here, the contract is terminated at default $\theta$ or maturity $T$, whichever comes first. For the callable case, the buyer can exercise anytime (strictly) before the contract termination.   Since $\PP\{\theta = T\}=0$ for any $x > 0$,  we may interpret $\{ \tau = T\}$ as the event that the option expires without being exercised. When default happens before maturity (on $\{ \theta < T\}$), the default payment is $\ah$ if the buyer has exercised ($\tau < \theta$) and is $\alpha$ otherwise ($\tau = \theta$).  The interpretation for the putable case is similar.

\subsection{Symmetry and Decomposition}
As we shall describe below, the symmetry and decomposition we attained in Section \ref{sect-overview} for the perpetual case can be extended to the finite-maturity case.

To see this, we can decompose the buyer's maximal expected cash flow as
\begin{align} \label{V_x_T_decomposition}
\begin{split}
\bar{V}(x, T)
&= \sup_{\tau \in \S_T} \E^x \left[\int_\tau^{\theta \wedge T} e^{-rt} \pcheck\, \diff t - \int_0^{\theta \wedge T} e^{-rt} p\, \diff t - e^{-r \tau} \gamma 1_{\{\tau < \theta \wedge T\}} + e^{-r \theta} 1_{\{\theta < T\}} (-  \acheck 1_{\{\tau < \theta \}}  +\alpha ) \right] \\
  &= \underbrace{\sup_{\tau \in \S_T} \E^x \left[ \int_\tau^{\theta \wedge T} e^{-rt}\pcheck \,\diff t - e^{-r \tau} \gamma  1_{\{\tau <\theta \wedge T \}} -  e^{-r \theta} \acheck  1_{\{\tau <\theta < T\}}  \right]}_{=:f(x,T)} + \bar{C}(x, T;p,\alpha),
\end{split}
\end{align}
where $\pcheck$ and $\acheck$ are the same as in \eqref{acheck_pcheck} and $\bar{C}$ is the value of a standard finite-maturity CDS defined as in \eqref{def_C_bar}. By   $\{ \tau  < \theta \wedge T\} =\{ X_\tau >  0, \tau < T \}$ for every $\tau \in \mathcal{S}_T$,
 \begin{align*}
{f(x,T)}= \sup_{\tau \in \S_T}\E^x \left[ e^{-r \tau} h(X_\tau, T - \tau)\right],
\end{align*}
with
\begin{align*}h(x, t) := 1_{\{x > 0, t > 0 \}}\left( \E^x \left[ \int_0^{\theta \wedge t} e^{-rs} \pcheck \, \diff s\,  - e^{-r \theta} \acheck 1_{\{ \theta < t \}}\, \right] - \gamma \right)= 1_{\{x > 0, t > 0 \}} \big(\bar{C}(x, t;-\pcheck,-\acheck)- \gamma \big).\end{align*}
The case for the seller is similar.
Consequently, Propositions \ref{prop-V} and \ref{prop-U} can be extended to the finite-maturity case.

\begin{proposition}For all $x, T > 0$, we have the decomposition
\begin{align} \label{value_function_decomposition_finite}
\begin{split}
\bar{V}(x, T; p,\ph, \alpha, \ah, \gamma) &= \bar{C}(x, T;p,\alpha)+\bar{v}(x, T;-\pcheck, -\acheck,\gamma), \\
\bar{U}(x, T; p,\ph, \alpha, \ah, \gamma) &= - \bar{C}(x, T;p,\alpha)+\bar{u}(x, T;-\pcheck,- \acheck,\gamma),
\end{split} \end{align}
where
 \begin{align*}\bar{v}(x, T;\kappa,a, K) &:=\sup_{\tau \in \S_T}\E^x \left[ e^{-r \tau}  \left(\bar{C}(X_{\tau}, T - \tau; \kappa,a) -K\right)^+  1_{\{ \tau < \theta \wedge T\}}  \right], \quad \text{ and }\\
\bar{u}(x, T;\kappa,a,K) &:=\sup_{\tau \in \S_T}\E^x \left[ e^{-r \tau}  \left(-\bar{C}(X_{\tau}, T-\tau; \kappa,a)-K\right)^+  1_{\{ \tau < \theta \wedge T \}}  \right].
\end{align*}
\end{proposition}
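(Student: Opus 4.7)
The plan is to follow the same steps used in the proofs of Propositions \ref{prop-V} and \ref{prop-U}, substituting the perpetual objects by their finite-maturity analogues and handling the random residual horizon via the strong Markov property of $X$. The algebraic rearrangement displayed in \eqref{V_x_T_decomposition} already isolates the vanilla finite-maturity CDS value $\bar{C}(x, T; p, \alpha)$, so the task reduces to identifying $f(x, T)$ with the American swaption value $\bar{v}(x, T; -\pcheck, -\acheck, \gamma)$, and then running the argument again with signs reversed to obtain the decomposition of $\bar{U}$.

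First I would apply the strong Markov property at a stopping time $\tau \in \S_T$. On the event $\{\tau < \theta \wedge T\}$, which for $\tau \in \S_T$ coincides with $\{X_\tau > 0\} \cap \{T-\tau > 0\}$, the shifted process $(X_{\tau+s})_{s\ge 0}$ is under $\p^{X_\tau}$ a copy of $X$ independent of $\F_\tau$ with first passage time distributed as $\theta$. Taking the $\F_\tau$-conditional expectation of each term inside $f(x,T)$ and collecting them against the definition of $\bar{C}$ in \eqref{def_C_bar} yields
\begin{align*}
f(x,T) \;=\; \sup_{\tau \in \S_T} \E^x\!\left[\, e^{-r\tau}\bigl(\bar{C}(X_\tau,T-\tau;-\pcheck,-\acheck)-\gamma\bigr)\, 1_{\{\tau < \theta \wedge T\}}\,\right].
\end{align*}
The identification of events noted above lets me rewrite the integrand as $e^{-r\tau} h(X_\tau, T-\tau)$, with $h(x,t)$ as defined in the excerpt, since the factor $1_{\{x > 0,\; t > 0\}}$ exactly captures $1_{\{\tau < \theta \wedge T\}}$.

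Next, I would show that replacing $h$ by $h^+$ does not alter the supremum, which produces the first identity in \eqref{value_function_decomposition_finite}. Since $\theta \wedge T \in \S_T$ and $h(X_{\theta \wedge T}, T-(\theta\wedge T))=0$ (one of $X_{\theta\wedge T}\le 0$ or $T-(\theta\wedge T)=0$ must hold), we obtain $f(x,T)\ge 0$. For any candidate $\tau \in \S_T$ the modified stopping time $\tilde\tau := \tau \, 1_A + (\theta\wedge T)\, 1_{A^c}$, with $A := \{h(X_\tau, T-\tau) > 0\} \in \F_\tau$, again lies in $\S_T$ and delivers a payoff at least as large, so the supremum coincides with that of $e^{-r\tau} h^+(X_\tau, T-\tau)$, which is exactly $\bar{v}(x,T;-\pcheck,-\acheck,\gamma)$. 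The seller's identity for $\bar{U}$ follows from the analogous rearrangement, invoking the counterpart of \eqref{equal_u_v} that relates $\bar{u}$ and $\bar{v}$ through a sign flip.

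The main obstacle I foresee is the careful application of the strong Markov property at $\tau$ with the random residual horizon $T-\tau$: one must treat $(X_\tau, \tau)$ as $\F_\tau$-measurable initial data and verify that $\bar{C}(X_\tau,T-\tau;-\pcheck,-\acheck)$ is indeed the correct version of the conditional expectation. Everything else --- measurability of $\tilde\tau$, the rearrangement of integrals already anticipated in \eqref{V_x_T_decomposition}, and the sign flip for the seller --- is routine.
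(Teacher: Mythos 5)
Your proposal is correct and follows essentially the same route as the paper: the rearrangement in \eqref{V_x_T_decomposition}, the strong Markov property at $\tau$ with residual horizon $T-\tau$ to identify $f(x,T)$ with $\sup_{\tau\in\S_T}\E^x[e^{-r\tau}h(X_\tau,T-\tau)]$, and the replacement of $h$ by $h^+$ (which the paper only asserts via the perpetual-case argument, while you make it explicit with the modified stopping time $\tilde\tau$). The seller's case by sign reversal is likewise how the paper concludes.
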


Clearly,
\begin{align}
\bar{v}(x, T;\kappa,a, K) = \bar{u}(x, T;-\kappa,-a, K), \label{identity_finite_maturity}
\end{align}
and Table \ref{table:4kinds} holds for the finite-maturity case as well.  Moreover,
``put-call parity" and symmetry identities also hold simply by  replacing $C(x)$ with $\bar{C}(x,T)$. To summarize,  \begin{align*}\bar{V}(x, T;p, \ph, \alpha, \ah, \gamma ) - \bar{U}(x, T;p, 2p-\ph, \alpha, 2\alpha-\ah, \gamma ) &= 2\, \bar{C}(x, T;p, \alpha),\\
\bar{V}(x, T;p, \ph, \alpha, \ah, \gamma ) + \bar{U}(x, T;p, 2p-\ph, \alpha, 2 \alpha-\ah, \gamma )&= 2\,\bar{v}(x, T;\ph-p, \ah-\alpha, \gamma ).
\end{align*}

  While these identifications are analogous to the perpetual case,  the computation of the value function for the finite-maturity case \eqref{value_function_decomposition_finite} is significantly more difficult.  Whereas  $\bar{C}$ can be computed using standard techniques such as Laplace inversion and simulation, the computation of $\bar{v}$ and $\bar{u}$  must involve a free-boundary problem of PIDE.  Moreover, in light of the non-standard nature of our problem such as the discontinuity of the payoff function and early termination due to default, it is not clear if any standard numerical method can achieve reasonable accuracy.  It is also noted that one needs to focus on a certain type of \lev process (and its infinitesimal generator), and the results are significantly limited compared to our results on the perpetual case, which are applicable to a general spectrally negative \lev process.  For this reason, we take an analytical approach by utilizing greatly our analytical solutions for the perpetual case.

\subsection{Analytical Bounds and Asymptotic Optimality} In view of the computational challenges involving $\bar{v}(x,T)$ and $\bar{u}(x,T)$, here we discuss  how these can be approximated using the analytical value functions from the perpetual case, namely,  $v(x)$ and $u(x)$ (see \eqref{U-smaller_than_b_star}, \eqref{u_tilde_a} and \eqref{def_w2}).

As in the perpetual case, in order to compute the value function for all four cases (callable/putable and step-up/down), it suffices to obtain  $\bar{v}(x,T) := \bar{v}(x, T; -\pcheck, -\acheck, \gamma)$ and $\bar{u}(x, T) := \bar{u}(x, T; -\pcheck, -\acheck, \gamma)$ for $\pcheck, \acheck > 0$ thanks to \eqref{identity_finite_maturity}.   As in \eqref{V_x_T_decomposition}, we have the identities:
\begin{align}
\bar{v}(x, T) &= \sup_{\tau \in \S_T} \E^x \left[ \int_\tau^{\theta \wedge T} e^{-rt}\pcheck \,\diff t - e^{-r \tau} \gamma  1_{\{\tau <\theta \wedge T \}} -  e^{-r \theta} \acheck  1_{\{\tau <\theta < T\}}  \right],  \label{v_bar_identity} \\
\bar{u}(x, T) &= \sup_{\tau \in \S_T}  \E^x \left[ -\int_\tau^{\theta \wedge T} e^{-rt}\pcheck \,\diff t - e^{-r \tau} \gamma  1_{\{\tau <\theta \wedge T \}} +  e^{-r \theta} \acheck  1_{\{\tau <\theta < T\}}  \right]. \label{u_bar_identity}
\end{align}

We shall first show that the function
\begin{align*}
\tilde{v}(x, T) := v(x) - \E^x \left[ 1_{\{\theta \geq T \}}\int_T^\theta e^{-rt}\pcheck \,\diff t \right]
\end{align*}
can approximate $\bar{v}(x, T)$, for $x, T > 0$, with some suitable analytical bounds.
\begin{lemma} \label{lemma_bound_v_tilde}
For $\pcheck, \acheck > 0$ and $\gamma \geq 0$, we have
\begin{align*}
\tilde{v}(x,T) \leq  \bar{v}(x,T)  \leq \tilde{v}(x,T) +\E^x \left[ 1_{\{\theta \geq T\}} \left( e^{-r T} \gamma    + e^{-r \theta} \acheck  \right) \right],
\end{align*}
for  $x , T > 0$.
\end{lemma}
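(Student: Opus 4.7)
The plan is to bracket $\bar{v}(x,T)$ by pointwise comparison, for each $\tau\in\S_T$, of the rearranged perpetual integrand
\[
\Psi(\tau) := \int_\tau^\theta e^{-rt}\pcheck\,\diff t - e^{-r\tau}\gamma\, 1_{\{\tau < \theta\}} - e^{-r\theta}\acheck\, 1_{\{\tau < \theta\}}
\]
(so that $v(x)=\sup_{\tau\in\S}\E^x[\Psi(\tau)]$ via the reformulation in the proof of Proposition \ref{prop-V}) with its finite-maturity analogue
\[
\bar{\Psi}(\tau) := \int_\tau^{\theta\wedge T} e^{-rt}\pcheck\,\diff t - e^{-r\tau}\gamma\, 1_{\{\tau < \theta\wedge T\}} - e^{-r\theta}\acheck\, 1_{\{\tau < \theta < T\}}
\]
from \eqref{v_bar_identity}. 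A direct case split on $\{\theta\leq T\}$ versus $\{\theta > T\}$, using $\tau\leq\theta\wedge T$ on $\S_T$ and the fact that $\p^x\{\theta=T\}=0$ for $x>0$, would yield the a.s.\ identity
\[
\Psi(\tau)-\bar{\Psi}(\tau) \;=\; 1_{\{\theta>T\}}\left(\int_T^\theta e^{-rt}\pcheck\,\diff t - e^{-r\tau}\gamma\,1_{\{\tau=T\}} - e^{-r\theta}\acheck\right), \qquad \tau\in\S_T.
\]

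For the upper bound, I would use the crude estimate $0\le e^{-r\tau}\gamma\,1_{\{\tau=T\}}\le e^{-rT}\gamma$ to get
\[
\bar{\Psi}(\tau)\;\le\; \Psi(\tau) - 1_{\{\theta>T\}}\int_T^\theta e^{-rt}\pcheck\,\diff t + 1_{\{\theta>T\}}\left(e^{-rT}\gamma + e^{-r\theta}\acheck\right).
\]
Since $\tau\in\S_T\subseteq\S$ forces $\E^x[\Psi(\tau)]\le v(x)$, taking expectations and then the supremum over $\tau\in\S_T$, and replacing $\{\theta>T\}$ by $\{\theta\geq T\}$ (which costs nothing by $\p^x\{\theta=T\}=0$), would deliver the claimed upper bound $\bar{v}(x,T)\le\tilde{v}(x,T)+\E^x[1_{\{\theta\ge T\}}(e^{-rT}\gamma+e^{-r\theta}\acheck)]$.

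For the lower bound, I would take the optimal stopping time $\tau^*=\tau_{B^*}^+\in\S$ for $v(x)$ from Theorem \ref{optimality_buyer} and test $\bar{v}(x,T)$ against the admissible truncation $\tau^{**}:=\tau^*\wedge T$, which lies in $\S_T$ because $\tau^*\le\theta$. A case analysis on $\{\tau^*\le T\}$ versus $\{\tau^*>T\}$ combined with the non-negativity of $\pcheck,\acheck,\gamma$ should give the pointwise bound
\[
\Psi(\tau^*)-\bar{\Psi}(\tau^{**})\;\le\; 1_{\{\theta>T\}}\int_T^\theta e^{-rt}\pcheck\,\diff t \qquad \text{a.s.},
\]
where on $\{\tau^*\leq T\}$ one has $\tau^{**}=\tau^*$ and the previous identity applies (the $\gamma$- and $\acheck$-subtractions being non-positive), while on $\{\tau^*>T\}\subseteq\{\theta>T\}$ one has $\bar{\Psi}(T)=0$ and $\Psi(\tau^*)\le\int_{\tau^*}^\theta e^{-rt}\pcheck\,\diff t\le\int_T^\theta e^{-rt}\pcheck\,\diff t$. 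Taking expectations and invoking $v(x)=\E^x[\Psi(\tau^*)]$ together with $\bar{v}(x,T)\ge\E^x[\bar{\Psi}(\tau^{**})]$ then yields $\bar{v}(x,T)\ge\tilde{v}(x,T)$. The main obstacle is the book-keeping in the case analysis---in particular, correctly tracking the indicator mismatches among $1_{\{\tau<\theta\}}$, $1_{\{\tau<\theta\wedge T\}}$ and $1_{\{\tau<\theta<T\}}$ on the various events, and confirming that in the ``no-exercise before $T$'' regime the excess of $\Psi(\tau^*)$ over $\bar{\Psi}(T)=0$ is bounded by the pure integral overshoot, since the compensating penalty terms in $\Psi$ only help rather than hurt.
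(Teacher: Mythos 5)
Your argument is correct and follows essentially the same route as the paper: the upper bound is obtained exactly as in the text, by comparing the perpetual and finite-horizon integrands term by term (your a.s.\ identity for $\Psi(\tau)-\bar{\Psi}(\tau)$ is just the exact version of the paper's inequalities \eqref{inequality_gamma_term} and $1_{\{\tau<\theta<T\}}\geq 1_{\{\tau<\theta\}}-1_{\{\theta\geq T\}}$) and then invoking $\S_T\subset\S$. For the lower bound the paper argues instead that, after weakening the $\gamma$- and $\acheck$-indicators, the supremum of the modified perpetual payoff over $\S$ is attained in $\S_T$ (``it is optimal to exercise at $T$''), whereas you truncate a maximizer of $v$ at $T$ and compare pointwise; both work, and your case analysis on $\{\tau^*\le T\}$ versus $\{\tau^*>T\}$ is sound. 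The only caveat is that your version appeals to Theorem \ref{optimality_buyer} for the existence of the optimizer $\tau_{B^*}^+$, which ties the lemma to the spectrally negative setting; this is harmless here and in any case avoidable, since your pointwise bound $\Psi(\tau)-\bar{\Psi}(\tau\wedge T)\le 1_{\{\theta>T\}}\int_T^\theta e^{-rt}\pcheck\,\diff t$ holds for every $\tau\in\S$, so one may take the supremum over $\tau\in\S$ directly without assuming attainment.
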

\begin{proof}  Using  \eqref{v_bar_identity} and  that $\tau \leq T$ a.s.\ for any $\tau \in \S_T$ by \eqref{def_S_T}, we write
\begin{align*}
\bar{v}(x,T) &= \sup_{\tau \in \S_T} \E^x \left[ \int_\tau^{\theta} e^{-rt}\pcheck \,\diff t - 1_{\{ \theta \geq T \}} \int_T^\theta e^{-rt}\pcheck \,\diff t  - e^{-r \tau} \gamma  1_{\{\tau <\theta \wedge T \}} -  e^{-r \theta} \acheck  1_{\{\tau <\theta < T\}}  \right].
\end{align*}
Observing that, for any $\tau \in \S_T$,
\begin{align}
e^{-r \tau}1_{\{ \tau < \theta \wedge T\}} = e^{-r \tau} 1_{\{ \tau < \theta\}} - e^{-r \tau} 1_{\{ T = \tau < \theta \}} =  e^{-r \tau} 1_{\{ \tau < \theta\}} - e^{-r T} 1_{\{ T = \tau < \theta \}} \geq  e^{-r \tau} 1_{\{ \tau < \theta\}} -  e^{-r T}1_{\{ \theta \geq T\}}, \label{inequality_gamma_term}
\end{align}
and $  1_{\{\tau <\theta < T\}} \geq   1_{\{\tau <\theta \}} -   1_{\{\theta \geq T\}}$, we obtain the inequality
\begin{align}
\begin{split} \bar{v}(x,T)
&\leq \sup_{\tau \in \S_T} \E^x \left[ \int_\tau^{\theta} e^{-rt}\pcheck \,\diff t - e^{-r \tau} \gamma  1_{\{\tau <\theta\}}  -  e^{-r \theta} \acheck  1_{\{\tau <\theta \}}  \right]  \\ &+\E^x \left[ 1_{\{ \theta \geq T \}} \left( - \int_T^\theta e^{-rt}\pcheck \,\diff t   + e^{-r T} \gamma    + e^{-r \theta} \acheck  \right) \right].
\end{split} \label{v_bar_X_t_bound}
\end{align}
Recall from \eqref{V_x_decomposition} that
\begin{align}
v(x) &= \sup_{\tau \in \S} \E^x \left[ 1_{\{ \tau < \infty \}} \left( \int_\tau^{\theta} e^{-rt}\pcheck \,\diff t - e^{-r \tau} \gamma  1_{\{\tau <\theta\}}  -  e^{-r \theta} \acheck  1_{\{\tau <\theta \}} \right)    \right]  \label{v_x_second_def}
\end{align}
and  $\S_T \subset \S$, $v(x)$ dominates the first expectation of the right hand side in \eqref{v_bar_X_t_bound}.  Hence we have the desired upper bound.

For the lower bound, by \eqref{v_x_second_def},
\begin{align*}
v(x)
&\leq \sup_{\tau \in \S} \E^x \left[ 1_{\{ \tau < \infty \}}\int_\tau^{\theta} e^{-rt}\pcheck \,\diff t - e^{-r \tau} \gamma  1_{\{\tau <\theta \wedge T\}}  -  e^{-r \theta} \acheck  1_{\{\tau <\theta < T\}}  \right].
\end{align*}
In view of the integrand on the right hand side, in the event the contract has not been terminated until $T$, it is optimal to exercise at $T$ because waiting further would simply reduce the cash flow at rate $\pcheck$.  Therefore, the optimal stopping time must be in  $\S_T$ and
\begin{align*}
v(x) &\leq \sup_{\tau \in \S_T} \E^x \left[ \int_\tau^{\theta} e^{-rt}\pcheck \,\diff t - e^{-r \tau} \gamma  1_{\{\tau <\theta \wedge T\}}  -  e^{-r \theta} \acheck  1_{\{\tau <\theta < T\}}  \right] \\
&= \sup_{\tau \in \S_T} \E^x \left[ \int_\tau^{\theta \wedge T} e^{-rt}\pcheck \,\diff t - e^{-r \tau} \gamma  1_{\{\tau <\theta \wedge T\}}  -  e^{-r \theta} \acheck  1_{\{\tau <\theta < T\}}  \right] +  \E^x \left[ \int_{\theta \wedge T}^\theta e^{-rt}\pcheck \,\diff t \right] \\
&= \bar{v}(x,T) +  \E^x \left[ \int_{\theta \wedge T}^\theta e^{-rt}\pcheck \,\diff t \right],
\end{align*}
which gives the lower bound.
\end{proof}

Similarly, to approximate $\bar{u}(x, T)$,  we define
\begin{align*}
\tilde{u}(x, T) := u(x) + \E^x \left[ 1_{\{ \theta \geq T\}}\int_{T}^\theta e^{-rt}\pcheck \,\diff t \right].
\end{align*}

\begin{lemma} \label{lemma_bound_u_tilde}
For $\pcheck, \acheck > 0$ and $\gamma \geq 0$,
\begin{align*}
\tilde{u}(x, T) - \E^x \left[ 1_{\{ \theta \geq T\}} \left( \int_{T}^\theta e^{-rt}\pcheck \,\diff t  + e^{-r \theta} \acheck \right) \right] \leq \bar{u}(x,T)   \leq \tilde{u}(x, T)  + \E^x \left[  1_{\{\theta \geq T \}} e^{-r T} \gamma  \right],
\end{align*}
for every $x > 0$ and $T > 0$.
\end{lemma}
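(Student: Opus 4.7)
The plan is to mirror the two-sided argument in the proof of Lemma \ref{lemma_bound_v_tilde}, exploiting the sign-flipped symmetry between the integrands in \eqref{u_bar_identity} and \eqref{v_bar_identity}.

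For the upper bound, I would first rewrite
\[
-\int_\tau^{\theta \wedge T}e^{-rt}\pcheck\,\diff t = -\int_\tau^\theta e^{-rt}\pcheck\,\diff t + 1_{\{\theta \geq T\}}\int_T^\theta e^{-rt}\pcheck\,\diff t
\]
so that the residual integral matches the one defining $u(x)$. Multiplying \eqref{inequality_gamma_term} by $-\gamma \leq 0$ converts $1_{\{\tau<\theta\wedge T\}}$ into $1_{\{\tau<\theta\}}$ at the cost of the explicit term $\gamma e^{-rT}1_{\{\theta \geq T\}}$, while the $\acheck$-indicator can be handled by the trivial bound $1_{\{\tau<\theta<T\}}\leq 1_{\{\tau<\theta\}}$ (its coefficient is favorable in $\bar u$, so no correction is required). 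Since $\S_T \subset \S$, extending the supremum then gives $\bar{u}(x,T) \leq u(x) + \E^x[1_{\{\theta \geq T\}}\int_T^\theta e^{-rt}\pcheck\,\diff t] + \gamma \E^x[e^{-rT}1_{\{\theta \geq T\}}]$, which is the stated upper bound after rearrangement.

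For the lower bound, the shortcut used for $v$ — that exercising at $T$ is forced by suboptimality of further waiting — does \emph{not} apply here, since delaying exercise in $u(x)$ postpones a negative $\pcheck$-cash flow rather than a positive one. Instead, for each $\tau \in \S$ I would substitute the truncated stopping time $\tau \wedge T \in \S_T$ into $\bar{u}$ and establish the pathwise comparison
\[
(\bar{u}\text{-integrand evaluated at }\tau \wedge T) + e^{-r\theta}\acheck\, 1_{\{\theta \geq T\}} \;\geq\; (u\text{-integrand evaluated at }\tau).
\]
This reduces to a short three-case check on $\{\theta<T\}$, $\{\theta\ge T,\,\tau<T\}$, and $\{\theta\ge T,\,\tau\ge T\}$: in the first, both sides agree because $\tau \wedge T = \tau$; in the second, the difference collapses to $\int_T^\theta e^{-rt}\pcheck\,\diff t \geq 0$; and in the third, the $\bar{u}$-integrand vanishes while positivity of $\pcheck$ and $\gamma$ ensures the added $e^{-r\theta}\acheck$ dominates the $u$-integrand. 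Taking expectation, supremum over $\tau \in \S$, and recalling $\tilde u(x,T) = u(x) + \E^x[1_{\{\theta \geq T\}}\int_T^\theta e^{-rt}\pcheck\,\diff t]$ then yields the lower bound.

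The main delicate point is the pathwise case analysis for the lower bound; everything else is routine bookkeeping analogous to \eqref{v_bar_X_t_bound}. No additional analytic machinery beyond what was used for Lemma \ref{lemma_bound_v_tilde} is needed, and the argument goes through for arbitrary $x>0$ and $T>0$ under the standing assumption $\pcheck,\acheck>0$, $\gamma\ge 0$.
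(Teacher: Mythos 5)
Your proposal is correct and follows essentially the same route as the paper: the upper bound is identical (rewrite the premium integral, apply \eqref{inequality_gamma_term} to the $\gamma$-term, use $1_{\{\tau<\theta<T\}}\leq 1_{\{\tau<\theta\}}$ for the $\acheck$-term, and enlarge $\S_T$ to $\S$), and your lower bound produces exactly the paper's correction $\E^x[e^{-r\theta}\acheck\,1_{\{\theta\geq T\}}]$. The only cosmetic difference is in the lower bound, where you substitute the truncated time $\tau\wedge T\in\S_T$ and verify a pathwise three-case inequality, whereas the paper bounds the $u$-integrand termwise via the inclusion $\{\tau<\theta\}\subset\{\tau<\theta<T\}\cup\{\theta\geq T\}$ and then observes that the supremum of the truncated functional over $\S$ may be restricted to $\S_T$ because the payoff after $T$ vanishes; your version makes that last step more explicit but rests on the same idea.
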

\begin{proof}
By \eqref{u_bar_identity} and   $\tau \leq T$ a.s.\ for any $\tau \in \S_T$ by \eqref{def_S_T},
\begin{align*}
\bar{u}(x,T)
&= \sup_{\tau \in \S_T} \E^x \left[ -\int_\tau^{\theta} e^{-rt}\pcheck \,\diff t + 1_{\{ \theta \geq T \}}\int_T^\theta e^{-rt}\pcheck \,\diff t  - e^{-r \tau} \gamma  1_{\{\tau <\theta \wedge T \}} +  e^{-r \theta} \acheck  1_{\{\tau <\theta < T\}}  \right].
\end{align*}
By \eqref{inequality_gamma_term} and that $\S_T \subset \S$,
\begin{align*}
\bar{u}(x,T)
&\leq \sup_{\tau \in \S_T} \E^x \left[ -\int_\tau^{\theta} e^{-rt}\pcheck \,\diff t- e^{-r \tau} \gamma  1_{\{\tau <\theta\}} +  e^{-r \theta} \acheck  1_{\{\tau <\theta \}}  \right]   + \E^x \left[  1_{\{\theta \geq T \}}  \left( \int_{T}^\theta e^{-rt}\pcheck \,\diff t    + e^{-r T} \gamma \right)\right] \\
&\leq \sup_{\tau \in \S} \E^x \left[ 1_{\{ \tau < \infty \}} \left( -\int_\tau^{\theta} e^{-rt}\pcheck \,\diff t- e^{-r \tau} \gamma  1_{\{\tau <\theta\}} +  e^{-r \theta} \acheck  1_{\{\tau <\theta \}}  \right) \right]   + \E^x \left[  1_{\{\theta \geq T \}}  \left( \int_{T}^\theta e^{-rt}\pcheck \,\diff t    + e^{-r T} \gamma \right)\right] ,
\end{align*}
which is the desired upper bound.

On the other hand, because $\{ \tau < \theta \} \subset \{ \tau < \theta < T \} \cup \{ \theta \geq T \}$,
\begin{align}
u(x) &= \sup_{\tau \in \S} \E^x \left[ 1_{\{ \tau < \infty \}} \left( -\int_\tau^{\theta} e^{-rt}\pcheck \,\diff t- e^{-r \tau} \gamma  1_{\{\tau <\theta\}} +  e^{-r \theta} \acheck  1_{\{\tau <\theta \}} \right)  \right]  \nonumber \\
&\leq \sup_{\tau \in \S} \E^x \left[ -\int_\tau^{\theta \wedge T} e^{-rt}\pcheck \,\diff t- e^{-r \tau} \gamma  1_{\{\tau <\theta \wedge T\}} +  e^{-r \theta} \acheck  1_{\{\tau <\theta < T \}} \right]    +  \E^x \left[ e^{-r \theta} \acheck  1_{\{\theta \geq T \}} \right] . \label{gap_u_bound}
\end{align}
Because in the integrand on the right-hand side, the payoff after $T$ is uniformly zero, we can replace $\S$ with $\S_T$ and this gives the lower bound.\end{proof}

Define the error functions
\begin{align*}
\epsilon_v (x,T) &:= \bar{v} (x,T) - \tilde{v}(x,T), \\
\epsilon_u (x,T) &:= \bar{u} (x,T) - \tilde{u}(x,T),
\end{align*}
for every $x > 0$ and $T > 0$.
By Lemmas \ref{lemma_bound_v_tilde} and \ref{lemma_bound_u_tilde}, these have bounds:
\begin{align}
0 \leq  &\epsilon_v (x,T)  \leq \E^x \left[ 1_{\{\theta \geq T \}}  \left( e^{-r T} \gamma    + e^{-r \theta} \acheck   \right) \right],  \label{bound_epsilon_v}\\
- \E^x \left[ 1_{\{ \theta \geq T\}} \left( \int_{T}^\theta e^{-rt}\pcheck \,\diff t  + e^{-r \theta} \acheck \right) \right] \leq &\epsilon_u (x,T) \leq \E^x \left[  1_{\{\theta \geq T \}} e^{-r T} \gamma  \right],  \label{bound_epsilon_u}
\end{align}
for every $x > 0$ and $T > 0$.
If the exercise fee $\gamma$ is set zero or sufficiently small (recall that the exercise fee $\gamma$ is supposed to be much smaller than the change in default payment $\acheck$), the bound \eqref{bound_epsilon_v} is expected to be small.  Indeed, in light of the expectation $\E^x \left[ 1_{\{\theta \geq T \}} e^{-r \theta}  \right]$, the discount $e^{-r \theta}$ and  the indicator function $1_{\{ \theta \geq T\}}$ keep its value small when $\theta$ is large and when it is small, respectively.  The bound \eqref{bound_epsilon_u} is in comparison less tight because of the expectation $\E^x [ 1_{\{ \theta \geq T\}} \int_{T}^\theta e^{-rt}\pcheck \,\diff t  ]$ on the lower bound.  In the analytical proof of Lemma \ref{lemma_bound_u_tilde} (in particular \eqref{gap_u_bound}), this term cannot be removed, but $\epsilon_u (x,T)$ is expected to be closer to its upper bound than to its lower bound in \eqref{bound_epsilon_u}.

We shall show that the error functions vanish in the limit as $T \uparrow \infty$ which also implies that $\bar{v}(x,T)$ converges to the perpetual value function $v(x)$.  We further show upon some suitable conditions that these error functions also converge to zero as $x \uparrow \infty$ and $x \downarrow 0$.

The first convergence result as $T \uparrow \infty$ is immediate because
\begin{align*}
\lim_{T \uparrow \infty}\E^x \left[ 1_{\{T \leq \theta\}}\int_T^\theta e^{-rt}\pcheck \,\diff t \right] =\lim_{T \uparrow \infty} \E^x \left[ e^{-r T} \gamma  1_{\{\theta \geq T \}}  \right]=
\lim_{T \uparrow \infty}\left[ e^{-r \theta} \acheck  1_{\{\theta \geq T\}} \right] = 0.
\end{align*}
\begin{proposition} \label{convergence_u_v_bar}
For every fixed $x > 0$, as $T \uparrow \infty$,
\begin{enumerate}
\item $\epsilon_v (x, T) \rightarrow 0$ and $\epsilon_u (x, T) \rightarrow 0$,
\item $\bar{v} (x, T) \rightarrow v(x)$ and $\tilde{v} (x, T) \rightarrow v(x)$,
\item $\bar{u} (x, T) \rightarrow u(x)$ and $\tilde{u} (x, T) \rightarrow u(x)$.
\end{enumerate}
\end{proposition}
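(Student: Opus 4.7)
The plan is to exploit the sandwich bounds \eqref{bound_epsilon_v}--\eqref{bound_epsilon_u} already established in Lemmas \ref{lemma_bound_v_tilde} and \ref{lemma_bound_u_tilde}, together with the definitions of $\tilde v$ and $\tilde u$, and reduce everything to showing that three specific $T$-indexed expectations vanish as $T \uparrow \infty$. All three convergences follow from routine applications of the dominated convergence theorem.

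First I would handle the three auxiliary limits. The quantity $\E^x[1_{\{\theta \geq T\}} e^{-rT}\gamma]$ is dominated by the deterministic sequence $\gamma e^{-rT}\to 0$, so it vanishes trivially. For $\E^x[1_{\{\theta \geq T\}} e^{-r\theta}\acheck]$, the integrand is dominated by the integrable random variable $\acheck e^{-r\theta}1_{\{\theta<\infty\}}$, and the indicator $1_{\{\theta \geq T\}}$ converges pointwise to zero on $\{\theta<\infty\}$ (and multiplies a zero factor on $\{\theta=\infty\}$), so DCT gives convergence to $0$. For $\E^x[1_{\{\theta\geq T\}}\int_T^\theta e^{-rt}\pcheck\,\diff t]$, a deterministic bound suffices: $\int_T^\theta e^{-rt}\pcheck\,\diff t \leq \pcheck e^{-rT}/r \to 0$.

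Next, parts (2) and (3) split cleanly into two steps each. The convergence $\tilde v(x,T)\to v(x)$ is immediate from the definition $\tilde v(x,T)=v(x)-\E^x[1_{\{\theta\geq T\}}\int_T^\theta e^{-rt}\pcheck\,\diff t]$, and likewise $\tilde u(x,T)\to u(x)$ from the analogous definition. For the error functions, the bound \eqref{bound_epsilon_v} forces $0\leq \epsilon_v(x,T)\to 0$, and the two-sided bound \eqref{bound_epsilon_u} likewise sandwiches $\epsilon_u(x,T)$ between two quantities of the vanishing types handled above, giving $\epsilon_u(x,T)\to 0$. This establishes part (1). Finally, writing $\bar v(x,T)=\tilde v(x,T)+\epsilon_v(x,T)$ and $\bar u(x,T)=\tilde u(x,T)+\epsilon_u(x,T)$ and combining gives $\bar v(x,T)\to v(x)$ and $\bar u(x,T)\to u(x)$, completing parts (2) and (3).

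There is no substantive obstacle here; the proposition is essentially a bookkeeping corollary of the two preceding lemmas once one verifies the three DCT applications. The only point to be careful about is that the event $\{\theta=\infty\}$ has positive probability under $\p^x$ in general, so one must organize the dominations (as above) so they involve either the deterministic factor $e^{-rT}$ or the integrable factor $e^{-r\theta}1_{\{\theta<\infty\}}$ rather than relying on $1_{\{\theta\geq T\}}\to 0$ pointwise, which fails on $\{\theta=\infty\}$.
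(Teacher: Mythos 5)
Your proposal is correct and follows essentially the same route as the paper: the paper also derives the proposition directly from the sandwich bounds \eqref{bound_epsilon_v}--\eqref{bound_epsilon_u} and the observation that the three $T$-indexed expectations $\E^x[1_{\{\theta\geq T\}}\int_T^\theta e^{-rt}\pcheck\,\diff t]$, $\E^x[e^{-rT}\gamma 1_{\{\theta\geq T\}}]$ and $\E^x[e^{-r\theta}\acheck 1_{\{\theta\geq T\}}]$ vanish as $T\uparrow\infty$. Your extra care about the dominations on the event $\{\theta=\infty\}$ is a sound refinement of what the paper declares ``immediate.''
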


This proposition shows, for each of the callable/putable step-up/down cases,  the asymptotic optimality  as $T \uparrow \infty$ of the value functions $V(x)$ and $U(x)$ of the perpetual case; namely,
\begin{align}
V(x,T) &\rightarrow V(x)  \label{asymp_optimality_buyer}, \\
U(x,T) &\rightarrow U(x), \label{asymp_optimality_seller}
\end{align}
as $T \uparrow \infty$ for every fixed $x > 0$.  This also implies the effectiveness of the approximation using the strategies $\tau_{B^*}^+ \wedge T \in \S_T$ and $\tau_{A^*}^- \wedge T \in \S_T$.   To this end, let $\bar{v}(x, T, \tau_{B^*}^+ \wedge T)$,  $\bar{u}(x, T, \tau_{A^*}^- \wedge T)$, $\bar{V}(x, T, \tau_{B^*}^+ \wedge T)$ and  $\bar{U}(x, T, \tau_{A^*}^- \wedge T)$ be the expected values corresponding to the strategy $\tau_{B^*}^+ \wedge T $ and $\tau_{A^*}^- \wedge T $.
Observing that  $\{\tau_{B^*}^+ \wedge T <\theta \wedge T \} = \{\tau_{B^*}^+ <\theta \wedge T \} \cup \{T <\theta \wedge T \} = \{\tau_{B^*}^+ <\theta \wedge T \}$ and $\{\tau_{B^*}^+ \wedge T <\theta < T\} = \{\tau_{B^*}^+  <\theta < T\}$ (since $\tau_{B^*}^+ \leq \theta$ implies $\tau_{B^*}^+ < T$ on $\{ \theta < T\}$), we can write
\begin{align} \label{v_bar_tau_B}
\begin{split}
\bar{v}(x, T, \tau_{B^*}^+ \wedge T) &= \E^x \left[ \int_{\tau_{B^*}^+ \wedge T}^{\theta \wedge T} e^{-rt}\pcheck \,\diff t - e^{-r (\tau_{B^*}^+ \wedge T)} \gamma  1_{\{\tau_{B^*}^+ \wedge T <\theta \wedge T \}} -  e^{-r \theta} \acheck  1_{\{\tau_{B^*}^+ \wedge T <\theta < T\}}  \right] \\
 &= \E^x \left[ \int_{\tau_{B^*}^+ \wedge T}^{\theta \wedge T} e^{-rt}\pcheck \,\diff t - e^{-r (\tau_{B^*}^+ \wedge T)} \gamma  1_{\{\tau_{B^*}^+  <\theta \wedge T \}} -  e^{-r \theta} \acheck  1_{\{\tau_{B^*}^+ <\theta < T\}}  \right].
\end{split}
\end{align}
Similarly, we conclude that
\begin{align*}
\bar{u}(x, T, \tau_{A^*}^- \wedge T)
 &= \E^x \left[ -\int_{\tau_{A^*}^- \wedge T}^{\theta \wedge T} e^{-rt}\pcheck \,\diff t - e^{-r (\tau_{A^*}^- \wedge T)} \gamma  1_{\{\tau_{A^*}^-  <\theta \wedge T \}} + e^{-r \theta} \acheck  1_{\{\tau_{A^*}^- <\theta < T\}}  \right].
\end{align*}
By \eqref{value_function_decomposition_finite} and \eqref{identity_finite_maturity}, $\bar{V}(x,T,\tau_{B^*}^+ \wedge T)$ and ${\bar{U}(x,T,\tau_{A^*}^- \wedge T)}$ can be obtained by adding/subtracting $\bar{C}(x,T)$.
\begin{proposition}
The asymptotic optimality of $\tau_{B^*}^+ \wedge T$ and $\tau_{A^*}^- \wedge T$ holds. That is, for every $x > 0$,
\begin{enumerate}
\item $ {\bar{v}(x,T,\tau_{B^*}^+ \wedge T)}/ {\bar{v}(x,T)} \rightarrow 1$,
\item $ {\bar{u}(x,T,\tau_{A^*}^- \wedge T)} /{\bar{u}(x,T)} \rightarrow 1$,
\item ${\bar{V}(x,T,\tau_{B^*}^+ \wedge T)} / {\bar{V}(x,T)} \rightarrow 1$,
\item  ${\bar{U}(x,T,\tau_{A^*}^- \wedge T)} /{\bar{U}(x,T)} \rightarrow 1$,
\end{enumerate}
as $T \uparrow \infty$.
\end{proposition}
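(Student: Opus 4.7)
The plan is to reduce everything to parts (1) and (2) for $\bar v$ and $\bar u$, and then read off (3) and (4) via the decompositions \eqref{value_function_decomposition_finite}. Applying the same decomposition with the specific strategy plugged in gives
\[
\bar V(x,T,\tau_{B^*}^+\wedge T) = \bar C(x,T;p,\alpha) + \bar v(x,T,\tau_{B^*}^+\wedge T;-\pcheck,-\acheck,\gamma),
\]
and symmetrically for $\bar U$. A direct dominated convergence argument shows $\bar C(x,T;p,\alpha) \to C(x;p,\alpha)$ as $T\uparrow\infty$, so once (1) is established, the numerator $\bar C(x,T) + \bar v(x,T,\tau_{B^*}^+\wedge T)$ and the denominator $\bar C(x,T) + \bar v(x,T)$ of the ratio in (3) both converge to $V(x) = C(x) + v(x)$, yielding (3); the passage from (2) to (4) is identical.

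For (1), I would first observe that $\tau_{B^*}^+\wedge T \in \S_T$ gives the trivial upper bound $\bar v(x,T,\tau_{B^*}^+\wedge T) \leq \bar v(x,T)$, and Proposition \ref{convergence_u_v_bar}(2) already yields $\bar v(x,T) \to v(x)$; so it suffices to prove the matching lower bound $\liminf_{T\uparrow\infty}\bar v(x,T,\tau_{B^*}^+\wedge T) \geq v(x)$. The plan is to work pointwise from the explicit formula \eqref{v_bar_tau_B} and compare it to the perpetual representation $v(x) = \E^x[1_{\{\tau_{B^*}^+<\theta\}}(\int_{\tau_{B^*}^+}^{\theta} e^{-rt}\pcheck \diff t - e^{-r\tau_{B^*}^+}\gamma - e^{-r\theta}\acheck)]$ obtained by plugging the optimal $\tau_{B^*}^+$ into \eqref{v_x_second_def}. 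On $\{\tau_{B^*}^+<\theta\}$, for every $T>\tau_{B^*}^+$ one has $\tau_{B^*}^+\wedge T = \tau_{B^*}^+$, $\theta\wedge T \uparrow \theta$, and the indicators $1_{\{\tau_{B^*}^+<\theta\wedge T\}}$, $1_{\{\tau_{B^*}^+<\theta<T\}}$ converge to the corresponding indicators of $\{\tau_{B^*}^+<\theta\}$ (with the boundary case $\theta=\infty$ causing no loss since $e^{-r\theta}\acheck = 0$ there). On $\{\tau_{B^*}^+=\theta\}$ both integrands are identically zero. The whole integrand is dominated by the integrable envelope $\pcheck/r + \gamma + \acheck$, so dominated convergence yields $\bar v(x,T,\tau_{B^*}^+\wedge T) \to v(x)$ and hence (1). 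The argument for (2) is strictly parallel, starting from \eqref{u_bar_identity} and the analogous closed form for $\bar u(x,T,\tau_{A^*}^-\wedge T)$, with downward-passage times replacing upward ones and $g$ replacing $h$.

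The main obstacle I anticipate is not any deep estimate but rather the careful bookkeeping of the three slightly different indicator events $\{\tau<\theta\wedge T\}$, $\{\tau<\theta<T\}$, and $\{\tau\wedge T<\theta\wedge T\}$ as $T\uparrow\infty$, together with the degenerate subcases $\tau_{B^*}^+=\infty$ (which forces $\theta=\infty$) and $\theta=\infty$ alone; each piece of the payoff must be verified to have a pointwise limit that assembles into the perpetual integrand. A final minor point is that the ratios in the statement are only meaningful when the denominators are eventually nonzero, which is the case here: under the non-triviality conditions \eqref{assumption_basic} and \eqref{assumption_a} the perpetual values $v(x)$, $u(x)$, and hence $V(x) = C(x)+v(x)$ and $U(x) = -C(x)+u(x)$, are generically nonzero at the initial state $x>0$, and by Proposition \ref{convergence_u_v_bar} the finite-maturity quantities inherit this for all sufficiently large $T$.
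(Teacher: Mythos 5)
Your proposal is correct and follows essentially the same route as the paper: the trivial upper bound from $\tau_{B^*}^+\wedge T\in\S_T$ combined with $\bar v(x,T)\to v(x)$, a matching lower bound obtained by passing to the limit in the explicit expression \eqref{v_bar_tau_B}, and then (3), (4) read off from the decomposition together with $\bar C(x,T)\to C(x)$. The only (inessential) difference is that you invoke dominated convergence with the envelope $\pcheck/r+\gamma+\acheck$ where the paper uses Fatou's lemma; both are valid here since the integrand is uniformly bounded.
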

\begin{proof}
By the definition of $\bar{v}(x,T)$ in \eqref{v_bar_identity} as a supremum and because $\tau_{B^*}^+ \wedge T \in \S_T$, we have $\bar{v}(x,T,\tau_{B^*}^+ \wedge T) \leq \bar{v}(x,T)$ for any $T > 0$.  By this and Proposition \ref{convergence_u_v_bar},
\begin{align*}
\limsup_{T \uparrow \infty}\bar{v}(x,T,\tau_{B^*}^+ \wedge T) \leq \lim_{T \uparrow \infty}\bar{v}(x,T) = v(x).
\end{align*}
On the other hand, applying Fatou's lemma in \eqref{v_bar_tau_B} and because $\tau_{B^*}^+ \wedge T \rightarrow \tau_{B^*}^+$  on $\{\tau_{B^*}^+ < \infty \}$ and $\theta \wedge T \rightarrow \theta$  on $\{\theta < \infty \}$  as $T \uparrow \infty$,
\begin{align*}
\liminf_{T \uparrow \infty}\bar{v}(x, T, \tau_{B^*}^+ \wedge T) &\geq \E^x \left[ \liminf_{T \uparrow \infty} \left( \int_{\tau_{B^*}^+ \wedge T}^{\theta \wedge T} e^{-rt}\pcheck \,\diff t - e^{-r (\tau_{B^*}^+ \wedge T)} \gamma  1_{\{\tau_{B^*}^+  <\theta \wedge T \}} -  e^{-r \theta} \acheck  1_{\{\tau_{B^*}^+ <\theta < T\}} \right) \right] \\
&= v(x).
\end{align*}
Hence we have (1).  The proof of (2) is similar.  Since $\bar{C}(x, T) \rightarrow C(x)$ as $T \uparrow \infty$, (3) and (4) are also immediate.
\end{proof}

We now analyze the asymptotic behavior of the error functions in terms of $x$ for every fixed $T > 0$.  In view of \eqref{bound_epsilon_v} and \eqref{bound_epsilon_u}, each error bound contains the indicator function $1_{\{\theta \geq T \}}$ and this tends to decrease as $\theta$ decreases (or equivalently, as $x$ decreases).  In particular, if $x$ is sufficiently small and $X$ fluctuates rapidly, these tend to vanish. The following result is immediate by the regularity of a \lev process of unbounded variation; we refer the reader to page 142 of \cite{Kyprianou_2006}.
\begin{proposition}
Fix $T > 0$ and suppose $X$ is of unbounded variation.  Then $\epsilon_v (x, T) \rightarrow 0$ and $\epsilon_u (x, T) \rightarrow 0$ as $x \downarrow 0$.
\end{proposition}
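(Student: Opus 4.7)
The plan is to show that $\PP^x\{\theta \geq T\} \to 0$ as $x \downarrow 0$, and then use the bounds in \eqref{bound_epsilon_v} and \eqref{bound_epsilon_u} to deduce the convergence of the error functions. The key input is the regularity of the upper half-line for the point $0$: when $X$ is a spectrally negative \lev process of unbounded variation, the origin is regular for $(-\infty,0)$, so $\PP^0\{\inf\{t > 0 : X_t < 0\} = 0\} = 1$ (see p.\ 142 of \cite{Kyprianou_2006}).

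First I would translate the law of $\theta$ under $\PP^x$ to a first passage time of $X$ started from $0$: under $\PP^0$, let $\tau_x := \inf\{t \geq 0 : X_t \leq -x\}$, so that $\theta$ under $\PP^x$ has the same law as $\tau_x$ under $\PP^0$. The family $\{\tau_x\}_{x > 0}$ is non-decreasing as $x$ increases, hence $\tau_{0+} := \lim_{x \downarrow 0}\tau_x$ exists $\PP^0$-a.s. I would argue that $\tau_{0+} = 0$ $\PP^0$-a.s.\ as follows: fix a path $\omega$ in the regularity event, so that for every $\varepsilon > 0$ there exists $t(\omega) \in (0,\varepsilon)$ with $X_{t(\omega)}(\omega) < 0$; then for all $x$ sufficiently small (depending on $\omega$), $X_{t(\omega)}(\omega) \leq -x$, hence $\tau_x(\omega) \leq t(\omega) < \varepsilon$. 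Letting $\varepsilon \downarrow 0$ gives $\tau_{0+}(\omega) = 0$. Consequently $\tau_x \to 0$ $\PP^0$-a.s., equivalently $\theta \to 0$ in probability (indeed a.s.\ in the coupling above) as $x \downarrow 0$.

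Next I would apply this to the bounds already established. From \eqref{bound_epsilon_v},
\begin{equation*}
0 \leq \epsilon_v(x,T) \leq \E^x\bigl[1_{\{\theta \geq T\}}(e^{-rT}\gamma + e^{-r\theta}\acheck)\bigr] \leq (\gamma + \acheck)\,\PP^x\{\theta \geq T\},
\end{equation*}
and similarly from \eqref{bound_epsilon_u}, using that $\int_T^\theta e^{-rt}\pcheck \,\diff t \leq \pcheck/r$,
\begin{equation*}
|\epsilon_u(x,T)| \leq \left(\frac{\pcheck}{r} + \acheck + \gamma\right) \PP^x\{\theta \geq T\}.
\end{equation*}
Since $\theta \to 0$ in probability as $x \downarrow 0$ and $T > 0$ is fixed, $\PP^x\{\theta \geq T\} \to 0$, which yields $\epsilon_v(x,T) \to 0$ and $\epsilon_u(x,T) \to 0$.

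The main obstacle is the transfer from the regularity statement $\PP^0\{\inf\{t>0: X_t < 0\}=0\}=1$ to the convergence $\PP^x\{\theta \geq T\} \to 0$. The coupling/translation argument above makes this routine once one is careful about monotonicity of $\tau_x$ in $x$ and that regularity is strong enough to guarantee $X$ dips below any small negative level arbitrarily soon. Beyond that, the rest amounts to plugging into the already-established analytical bounds, so no further work is needed.
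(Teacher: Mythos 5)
Your proposal is correct and follows exactly the route the paper intends: the paper dispenses with this proposition in one line, calling it ``immediate by the regularity of a L\'evy process of unbounded variation'' (citing p.\ 142 of \cite{Kyprianou_2006}), and your argument simply spells out that regularity step --- the spatial-homogeneity coupling $\theta \stackrel{d}{=} \tau_x$ under $\PP^0$, the monotone limit $\tau_x \downarrow 0$ a.s., hence $\PP^x\{\theta \geq T\} \to 0$ --- before plugging into the bounds \eqref{bound_epsilon_v} and \eqref{bound_epsilon_u}. Nothing is missing; you have merely made explicit what the authors left implicit.
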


We now consider the limit as $x \uparrow \infty$ as an approximation for the case the maturity $T$ is small in comparison to the default time $\theta$.  In \eqref{bound_epsilon_v} and  \eqref{bound_epsilon_u}, while the error bounds do not converge to zero when $\gamma > 0$, we can obtain the convergence for the case $\gamma = 0$.
By  \eqref{laplace_theta} and Exercise 8.5(i) of \cite{Kyprianou_2006},
\begin{align*}
\lim_{x \rightarrow \infty}\E^x \left[  e^{-r \theta} \acheck  1_{\{\theta \geq T\}} \right] \leq \acheck   \lim_{x \rightarrow \infty}\E^x \left[  e^{-r \theta} \right] = \acheck   \lim_{x \rightarrow \infty} \left( Z^{(r)}(x) - \frac r {\lapinv} W^{(r)}(x) \right) = 0.
\end{align*}  Hence, we can conclude the following limits.
\begin{proposition}
Suppose $\gamma = 0$ and fix $T > 0$.  Then,
\begin{enumerate}
\item $\epsilon_v(x,T) \rightarrow 0$  as $x \uparrow \infty$,
\item $\limsup_{x \uparrow \infty}\epsilon_u(x,T) \leq 0$.
\end{enumerate}
\end{proposition}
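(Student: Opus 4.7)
The plan is to read off both claims as immediate consequences of the bounds \eqref{bound_epsilon_v} and \eqref{bound_epsilon_u} established in Lemmas \ref{lemma_bound_v_tilde} and \ref{lemma_bound_u_tilde}, specialized to $\gamma=0$, combined with the asymptotic vanishing of $\zeta(x)=\E^x[e^{-r\theta}]$ as $x\uparrow\infty$ that the authors already invoke (via \eqref{laplace_theta} and Exercise 8.5(i) of \cite{Kyprianou_2006}). There is no genuine obstacle here; the content is to match each piece of the error bound to either an explicitly computable scale-function quantity or to a term that trivially vanishes when $\gamma=0$.

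For part (1), setting $\gamma=0$ in \eqref{bound_epsilon_v} leaves
\[
0 \;\leq\; \epsilon_v(x,T) \;\leq\; \acheck\,\E^x\!\left[1_{\{\theta\geq T\}}\, e^{-r\theta}\right] \;\leq\; \acheck\,\E^x[e^{-r\theta}] \;=\; \acheck\,\zeta(x).
\]
I would then cite \eqref{laplace_theta} to rewrite $\zeta(x)=Z^{(r)}(x)-(r/\Phi(r))W^{(r)}(x)$ and invoke the standard fact (Exercise 8.5(i) of \cite{Kyprianou_2006}, which the authors already use just above the statement) that this combination tends to $0$ as $x\uparrow\infty$. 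Sandwiching gives $\epsilon_v(x,T)\to 0$.

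For part (2), setting $\gamma=0$ in the upper bound of \eqref{bound_epsilon_u} collapses it to $\E^x[1_{\{\theta\geq T\}}e^{-rT}\gamma]=0$, so in fact $\epsilon_u(x,T)\leq 0$ for \emph{every} $x>0$ and $T>0$, whence $\limsup_{x\uparrow\infty}\epsilon_u(x,T)\leq 0$ is immediate. It is worth remarking on why only a one-sided conclusion is natural: the lower bound in \eqref{bound_epsilon_u} contains the term $\E^x[1_{\{\theta\geq T\}}\int_T^\theta e^{-rt}\pcheck\,\diff t]$, whose $\pcheck$-integrand does not carry an $e^{-r\theta}$ factor and hence need not vanish as $x\uparrow\infty$; this is exactly the gap flagged after \eqref{gap_u_bound} in the proof of Lemma \ref{lemma_bound_u_tilde}, and it explains why one cannot upgrade the conclusion to $\epsilon_u(x,T)\to 0$ in the same manner as for $\epsilon_v$.
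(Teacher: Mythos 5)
Your proposal is correct and follows essentially the same route as the paper: the authors likewise specialize the bounds \eqref{bound_epsilon_v} and \eqref{bound_epsilon_u} to $\gamma=0$, bound the surviving term by $\acheck\,\E^x[e^{-r\theta}]=\acheck\,\zeta(x)$, and invoke \eqref{laplace_theta} together with Exercise 8.5(i) of \cite{Kyprianou_2006} to conclude that this vanishes as $x\uparrow\infty$, while part (2) is immediate from the vanishing upper bound. Your closing remark about why the lower bound in \eqref{bound_epsilon_u} blocks a two-sided conclusion matches the paper's own discussion following \eqref{gap_u_bound}.
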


\subsection{Limits as $T \downarrow 0$ and Approximation of Stopping Boundaries} \label{subsection_stopping_boundary}

The analytical bounds of the value function obtained above are certainly practical and are indeed tight depending on the chosen parameters.  Next, we consider taking $T \downarrow 0$ and also analyze the stopping boundary as a function of $T$.

We first consider the case $\gamma = 0$, or there is no exercise fee.  Consider the callable case with $\pcheck, \acheck > 0$:
\begin{align*}
\bar{v}(x,\delta) &:=\sup_{\tau \in \S_\delta} \E^x \left[ \int_\tau^{\theta \wedge \delta} e^{-rt}\pcheck \,\diff t -  e^{-r \theta} \acheck  1_{\{\tau <\theta < \delta\}}  \right]
\end{align*}
for any small $\delta > 0$.
Intuitively, for sufficiently small $\delta$, because the movement of the process $X$ until the contract termination can be made arbitrarily small by choosing $\delta$ sufficient small (see \eqref{expectation_delta} below), the investor's strategy is constant; either waiting until contract termination ($\tau = \delta \wedge \theta$) or exercising immediately ($\tau = 0$).   The former gives a zero value. As for the latter case, we observe that
\begin{align}
 \E^x \left[ \int_0^{\theta \wedge \delta} e^{-rt}\pcheck \,\diff t -  e^{-r \theta} \acheck  1_{\{\theta < \delta\}}  \right] = \delta (\pcheck - \acheck \Pi (x, \infty)) + o(\delta)\quad \textrm{as } \delta \downarrow 0 \label{expectation_delta}
\end{align}
because, as in page 247 of Hilberink and Rogers \cite{Hilberink_Rogers_2002},
\begin{align}
 \E^x \left[ e^{-r \theta} 1_{\{\theta < \delta\}}\right] =  \p^x \left\{ \theta < \delta \right\} + o(\delta) = \delta \Pi(x,\infty) + o(\delta) , \quad \textrm{as } \delta \downarrow 0,  \label{rogers_results}
\end{align}
and
\begin{align*}
\E^x \left[ \int_0^{\theta \wedge \delta} e^{-rt} \pcheck \diff t\right] = \E^x \left[ \int_0^{\delta} e^{-rt} \pcheck  \diff t\right]  - \E^x \left[ 1_{\{ \theta \leq \delta \}}\int_{\theta }^\delta e^{-rt} \pcheck  \diff t\right] =  \pcheck  \delta + o(\delta), \quad \textrm{as } \delta \downarrow 0,
\end{align*}
where the last relation holds by \eqref{rogers_results}.  Consequently, $\tau = 0$ if and only if \eqref{expectation_delta} is positive. More precisely, let $C^*$ be such that
\begin{align}
\pcheck - \acheck \Pi(C^*, \infty) = 0, \label{C_star}
\end{align}
if it exists; otherwise, we  set $C^*$ to be zero.
Then the strategy to stop if and only if $x > C^*$ is asymptotically optimal as $\delta$ goes to zero.  By symmetry, for $\bar{u}(x,\delta)$, the strategy to stop if and only if $x < C^*$ is asymptotically optimal.  As a trivial example, suppose $X$ does not have jumps (i.e.\ Brownian motion with drift),  then $C^* = 0$ and exercising immediately is optimal for $\bar{v}$ while waiting until contract termination is optimal for $\bar{u}$ at a time close to maturity.

The asymptotic behavior as $T \downarrow 0$ for the case $\gamma > 0$, on the other hand, is  trivial.  In this case, for both parties, it is never optimal to exercise at a time sufficiently close to the maturity.  This is because as in \eqref{expectation_delta} the premium payment until maturity and the default probability both converge to zero, while the exercise fee is strictly positive.

Using the asymptotic results obtained above, we now analyze the stopping boundary as a function of $T$.  The Markov property of $(t, X_t)$ suggests that the optimal stopping times of $\bar{v}$ and $\bar{u}$ admit the forms
\begin{align*}
\inf \left\{ t \geq 0: X_t \geq B^*(T-t)\right\} \quad \textrm{and} \quad \inf \left\{ t \geq 0: X_t \leq A^*(T-t)\right\},
\end{align*}
for some deterministic functions $A^*(t)$ and $B^*(t)$ that map from $[0,T)$ to $[0,\infty)$.  Thanks to the asymptotic analysis as $T \uparrow \infty$ and $T \downarrow 0$ obtained above, we can actually obtain the asymptotics of $B^*(T)$ and $A^*(T)$ as $T \uparrow \infty$ and $T \downarrow 0$.  As $T \uparrow \infty$, the optimal strategies for the perpetual case, $\tau^+_{B^*} \wedge T$ and  $\tau^-_{A^*} \wedge T$,  are asymptotically optimal for $\bar{v}$ and $\bar{u}$, respectively.  Regarding the asymptotics as $T \downarrow 0$, the asymptotically optimal strategies depend on whether $\gamma = 0$ or $\gamma > 0$.  When $\gamma = 0$, $\tau^+_{C^*} \wedge T$  and $\tau^-_{C^*} \wedge T$   are asymptotically optimal for $\bar{v}$ and $\bar{u}$, respectively.  When $\gamma > 0$, $\theta \wedge T$ is asymptotically optimal for both $\bar{v}$ and $\bar{u}$.  In summary,
\begin{enumerate}
\item $A^*(T) \rightarrow A^*$ and $B^*(T) \rightarrow B^*$ as $T \uparrow \infty$;
\item $A^*(T) \rightarrow C^*$ and $B^*(T) \rightarrow C^*$ as $T \downarrow 0$ when $\gamma = 0$;
\item $A^*(T) \rightarrow 0$ and $B^*(T) \rightarrow \infty$ as $T \downarrow 0$ when $\gamma > 0$.
\end{enumerate}

\begin{figure}[htb]
\begin{center}
\includegraphics[scale=0.6]{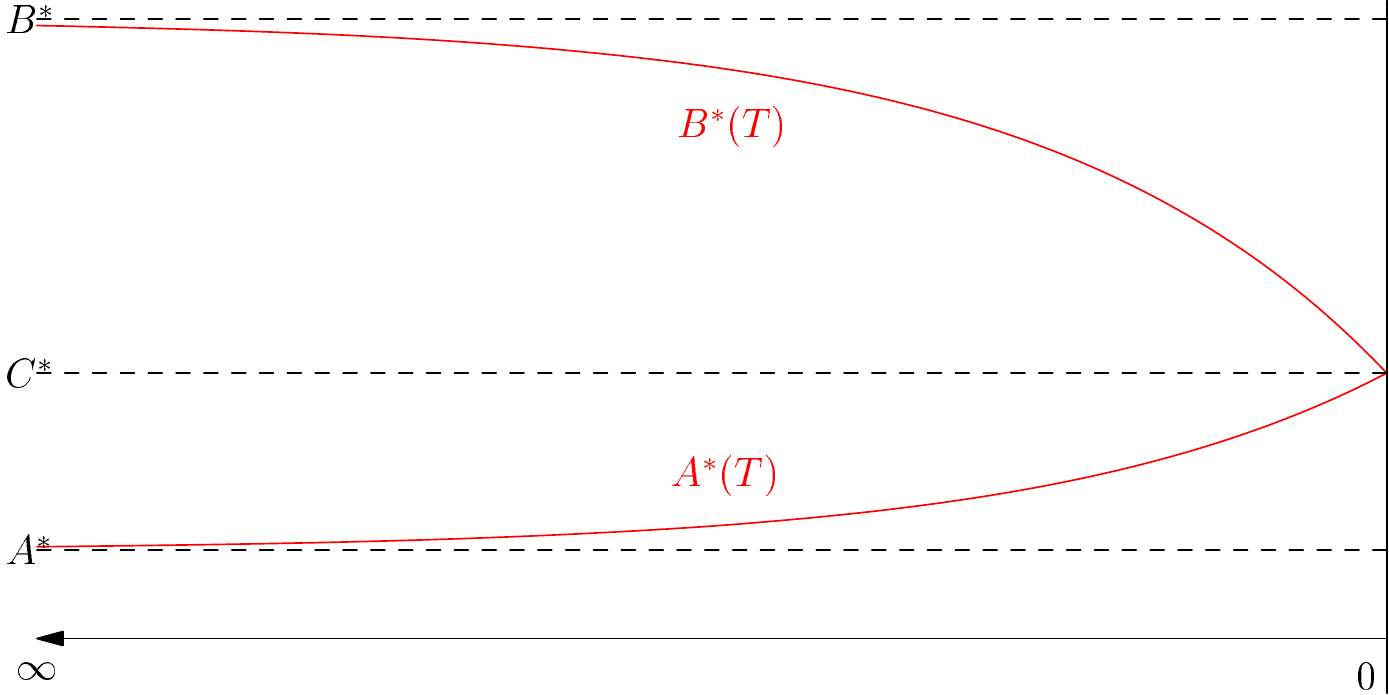} \includegraphics[scale=0.6]{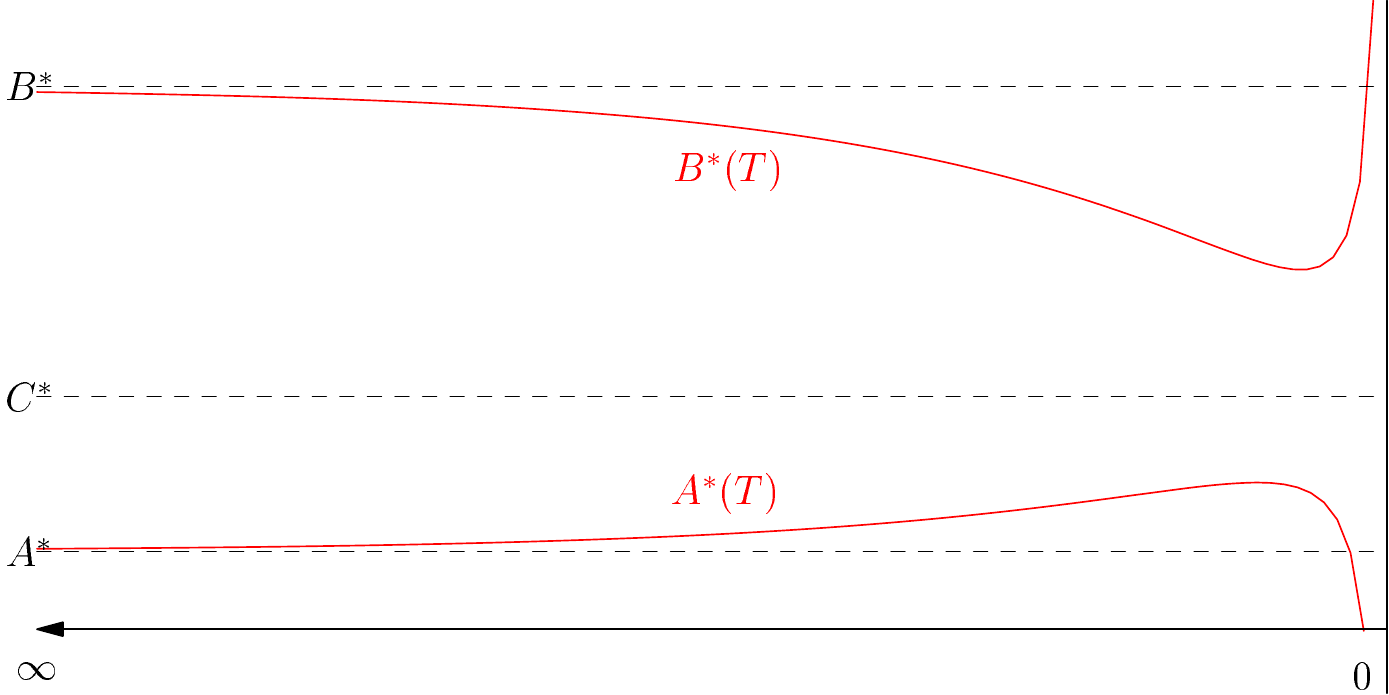} \caption{\small{Illustration of $B^*(T)$ and $A^*(T)$: (left) no exercise fee ($\gamma =0$), and (right) with exercise fee ($\gamma >0$)}.}  \label{figure_finite_horizon}
\end{center}
\end{figure}

In light of these asymptotic behaviors, Figure \ref{figure_finite_horizon} illustrates the shapes of the stopping boundaries for the cases $\gamma =0$ and $\gamma > 0$.  When $\gamma = 0$, $B^*(T)$ tends to increase as $T$ (or the time until maturity) increases while $A^*(T)$ tends to decrease as $T$ increases.  This is commonly observed in finite horizon optimal stopping problems; see Shiryaev and Peskir \cite{Peskir_2006} for the cases of American options, sequential detection and sequential hypothesis testing.  Intuitively speaking, if one has more time till maturity, he does not have to rush to exercise, and hence continuation region tends to increase (decrease) in width as the remaining time increases (decreases).  We can confirm this monotonicity because by the definitions of $A^*$ and $C^*$ as in \eqref{smooth_fit_seller} and \eqref{C_star}
\begin{align*}
C^* > A^*.
\end{align*}
On the other hand, these monotonicities fail once an exercise fee is introduced ($\gamma > 0$).

This observation is particular useful.  Notice that a numerical lower bound can be attained by choosing any feasible strategy and computing its corresponding expected value via simulation.  With this and the analytical upper bounds obtained above, one may obtain a tighter error bound. In particular, for the case $\gamma = 0$, we can focus on the set of monotone functions $B^*(\cdot)$ and $A^*(\cdot)$ connecting $B^*$/$A^*$ and $C^*$.  It is possible to approximate their shapes parametrically or non-parametrically.  For a related technique where non-parametric regression is applied to approximate convex stopping regions,
we refer the reader to, e.g., Section 6 of Dayanik et al.\ \cite{Dayanik_2008}.  Regarding the continuity, monotonicity and smooth/continuous fit for the finite-horizon optimal stopping problem, we refer the reader to \cite{Peskir_2006}.

\subsection{Term Structure of Credit Spreads}  We now consider the credit spread $p^*(x,T)$ that makes $V(x,T) = 0$.
We assume here that $\hat{p}$ is always proportional to $p$ and set as we did in our numerical examples $\hat{p} = q p$ for some constant $q \geq 0$.
 Because the payoff function is monotone in $p$, it is expected that there exists a unique value of credit spread as we have confirmed in our numerical results in Section \ref{section_numerical} for the perpetual case.  The function $p^*(x,T)$ is potentially highly nonlinear both in terms of $x$ and $T$. Nonetheless, we can obtain some asymptotic behaviors as $T \downarrow 0$ and $T \uparrow \infty$ for every fixed $x > 0$ as we describe below.

We first consider the asymptotic behavior of $p^*(x, T)$ as $T \uparrow \infty$ as an approximation to the credit spread for a long maturity.  While the convergence is not guaranteed, the credit spread $p^*(x)$ of the perpetual case can be used as an approximation.  The following holds immediately because both \eqref{asymp_optimality_buyer} and \eqref{asymp_optimality_seller} hold for the case $p = p^*(x)$ and $\hat{p} = q p^*(x)$.

\begin{proposition}For every $x > 0$,
\begin{align*}
\bar{V}(x,T; p^*(x), qp^*(x), \alpha, \ah, \gamma) \rightarrow 0, \\
\bar{U}(x,T; p^*(x), qp^*(x), \alpha, \ah, \gamma) \rightarrow 0,
\end{align*}
as $T \uparrow \infty$.
\end{proposition}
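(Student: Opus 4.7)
The plan is to observe that this statement follows almost immediately from the asymptotic optimality relations \eqref{asymp_optimality_buyer} and \eqref{asymp_optimality_seller}, once we unpack what the perpetual credit spread $p^{*}(x)$ means for each of the two contract families. Concretely, I would first recall from Sections \ref{sect-callableCDS} and \ref{sect-putableCDS} that $p^{*}(x)$ is defined case-dependently by
\[
V\!\bigl(x;p^{*}(x),qp^{*}(x),\alpha,\hat{\alpha},\gamma\bigr)=0 \quad\text{and}\quad U\!\bigl(x;p^{*}(x),qp^{*}(x),\alpha,\hat{\alpha},\gamma\bigr)=0
\]
for the callable and putable case, respectively, so the target value is literally the limit value, not merely an approximation of it.

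Next, I would invoke the pointwise convergences $\bar{V}(x,T)\to V(x)$ and $\bar{U}(x,T)\to U(x)$ recorded in \eqref{asymp_optimality_buyer}--\eqref{asymp_optimality_seller}, which are themselves consequences of Proposition \ref{convergence_u_v_bar} combined with the decomposition \eqref{value_function_decomposition_finite} and the elementary fact $\bar{C}(x,T)\to C(x)$ as $T\uparrow\infty$ (bounded convergence for the premium integral plus monotone convergence for the default leg). Specializing these convergences to the parameter tuple $(p,\hat{p})=(p^{*}(x),qp^{*}(x))$ and substituting into the equalities of the previous display yields the two claimed limits.

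The only point that requires a sentence of justification is that plugging $p=p^{*}(x)$ into the convergence results is legitimate. The asymptotic optimality in Proposition \ref{convergence_u_v_bar} is stated for arbitrary admissible parameters, and the discussion following Lemmas \ref{lemma_bound_v_tilde}--\ref{lemma_bound_u_tilde} requires only $\pcheck,\acheck>0$ and $\gamma\geq 0$; these hold for the step-down/step-up conventions with $q\neq 1$ and $p^{*}(x)>0$, while the boundary case $q=1$ is a vanilla CDS handled directly by $\bar{C}$. Hence the substitution is valid and the result follows.

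There is essentially no hard step here: the work has already been done in establishing \eqref{asymp_optimality_buyer} and \eqref{asymp_optimality_seller}. The one thing worth flagging is that the proposition does not (and cannot) claim that the finite-maturity credit spread $p^{*}(x,T)$ itself converges to $p^{*}(x)$; it only asserts vanishing of $\bar{V}$ and $\bar{U}$ at the perpetual spread, which is the weaker and immediate statement. A genuine convergence $p^{*}(x,T)\to p^{*}(x)$ would require monotonicity and continuity of $\bar{V}(x,T;p,\dots)$ in $p$ together with a uniform-in-$T$ Lipschitz-type estimate, and the authors deliberately stop short of asserting this.
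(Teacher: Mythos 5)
Your proposal is correct and follows essentially the same route as the paper: the authors also derive the result immediately by specializing the convergences \eqref{asymp_optimality_buyer} and \eqref{asymp_optimality_seller} (themselves consequences of Proposition \ref{convergence_u_v_bar}, the decomposition \eqref{value_function_decomposition_finite}, and $\bar{C}(x,T)\to C(x)$) to the parameters $p=p^{*}(x)$, $\hat{p}=qp^{*}(x)$, at which the perpetual values vanish by definition of the credit spread. Your closing caveat that this does not assert $p^{*}(x,T)\to p^{*}(x)$ matches the paper's own remark that such convergence is not guaranteed.
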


We now consider the asymptotic behavior as $T \downarrow 0$. Suppose $\gamma > 0$, then it is optimal never to exercise close to the maturity.  Therefore, the asymptotic credit spread for all cases (callable/putable and step-up/down) is that of the standard CDS $\lim_{T \downarrow 0}\bar{p}(x, T;\alpha)$.
As in \eqref{def_p_bar}, the credit spread of the standard CDS with finite-maturity is given by
\begin{align*}
\bar{p}(x, T;\alpha)= \frac{\alpha \, r\,\lap_T(x) }{1-\lap_T(x) - e^{-rT} \p^x\{\theta >T\} },\quad \,\text{where}\quad \lap_T(x) := \E^x \left[  e^{-r\theta} 1_{\{\theta \leq T\}}\right].
 \end{align*}
As in page 247 of \cite{Hilberink_Rogers_2002}, we obtain
\begin{align*}
\bar{p}(x, T;\alpha) \rightarrow \alpha \Pi (x,\infty), \quad \textrm{as } ~T \downarrow 0,
\end{align*}
and this is the limit of the credit spread of our default swap as the maturity goes to zero.

The case $\gamma = 0$ is harder to analyze because the boundary $C^*$ depends on $\pcheck = (1-q)p$ which also depends on how the premium $p$ is chosen.  However, because for any $p$ and $\ph$ either stopping immediately or never exercising is asymptotically optimal as $T \downarrow 0$ as we discussed in Section \ref{subsection_stopping_boundary}, the credit spread is expected to converge to either  $\lim_{T \downarrow 0} \bar{p}(x, T; \ah) = \ah \Pi(x, \infty)$ or $\lim_{T \downarrow 0} \bar{p}(x, T; \alpha) = \alpha \Pi(x, \infty)$.

\section{Concluding Remarks} \label{section_conclusion}In summary, the incorporation of American step-up and step-down options give default swap investors the additional flexibility to manage and trade credit risks.  The valuation of these contracts  requires solving for the optimal timing to step-up/down for the protection buyer/seller. The perpetual nature of the contract allows us to compute analytically the investor's optimal exercise threshold under quite general \lev credit risk models. Using the symmetry properties between step-up and step-down contracts, we gain better intuition on various contract specifications, and drastically simplify the procedure to determine the credit spreads. The approximation for the finite-maturity case can be efficiently conducted using our analytical solutions on the perpetual case.

There are a number of avenues for future research. For instance, it would be interesting to value a default swap  where both the protection buyer and seller can terminate the contract early. Then, the valuation problem can be formulated as a modified game option as introduced by Kifer \cite{kifer2000}. In this case, we conjecture that threshold strategies will again be optimal for both parties and constitute Nash or Stackelberg equilibrium \cite{Peskir_2008,Peskir_2009}. Another direction for future research is to consider  derivatives with multiple early exercisable step-up/down options. This is related to some optimal multiple stopping problems arising in other financial applications, such as swing options \cite{Carmona2008} and employee stock options  \cite{LeungSircarESO_MF09}.

\appendix
\section{Proofs}

\subsection{Proof of Lemma \ref{lemma_delta_b}}
Applying the definitions of $\Delta_B(x)$ and $h(x)$ (see (\ref{hx_inproof})) and   noting that $\theta = \infty$  whenever $\tau_B^+ = \infty$, we obtain, for every $x \in (0,B)$,
\begin{align*}
&\Delta_B(x)\\
&=\E^x\left[ 1_{\{\tau_B^+ < \infty\}} \left(\int_{\tau_B^+}^{\theta}
e^{-rt}\pcheck \,\diff t- e^{-r \tau_B^+} \gamma  1_{\{\tau_B^+ <\theta \}}  - e^{-r \theta}\acheck 1_{\{\tau_B^+ <\theta\}} \right) \right] - \E^x \left[ \int_0^\theta e^{-rt} \pcheck \, \diff t\,  - e^{-r \theta} \acheck \, \right] + \gamma \\
&= \E^x\left[ 1_{\{\tau_B^+ < \infty\}} \left(-\int_0^{\tau_B^+}
e^{-rt}\pcheck \,\diff t- e^{-r \tau_B^+} \gamma  1_{\{\tau_B^+ <\theta \}}  + e^{-r \theta}\acheck 1_{\{\tau_B^+ = \theta\}} \right) - 1_{\{\tau_B^+ = \infty\}} \left( \int_0^\theta e^{-rt} \pcheck \, \diff t\,  - e^{-r \theta} \acheck \right) \, \right] + \gamma \\
&= \E^x\left[ 1_{\{\tau_B^+ < \infty\}} e^{-r \tau_B^+}\left( \acheck 1_{\{\tau_B^+ = \theta\}} - \gamma  1_{\{\tau_B^+ <\theta \}}  \right) - \left( \int_0^{\tau_B^+} e^{-rt} \pcheck \, \diff t\, \right) \, \right] + \gamma \\
&= \E^x\left[ 1_{\{\tau_B^+ < \infty\}} e^{-r \tau_B^+}\left( \left(\frac \pcheck r + \acheck \right) 1_{\{\tau_B^+ = \theta\}} + \left( \frac \pcheck r - \gamma  \right) 1_{\{\tau_B^+ <\theta \}}  \right) \right] + \gamma - \frac {\pcheck} r.
\end{align*}

\subsection{Proof of Lemma \ref{lemma_delta_A}}The proof follows from the same arguments for Lemma \ref{lemma_delta_b}, and is thus omitted.

\subsection{Proof of Lemma \ref{lemma_b_star_zero}} When $X$ is of unbounded variation, we have $\varrho(0+) = -\infty$ because $W^{(r)'}(0+) > 0$ and $W^{(r)}(0) = 0$ by Lemma \ref{lemma_zero}.  This implies that it must be of bounded variation for $B^*=0$.  Then, again by Lemma \ref{lemma_zero}, we have after some algebra
\begin{align*}
\varrho(0+) = \frac 1 {\mu} \left( \pcheck - r \gamma -  (\acheck + \gamma)\Pi(0,\infty)\right).
\end{align*}
Since $\varrho(\cdot)$ is increasing, the second condition is equivalent to $\varrho(\cdot)\geq 0$ (i.e. scenario (a)) or $B^*=0$.

\subsection{Proof of Lemma \ref{generator_non_positive}}
As discussed in p.228-229 of \cite{Kyprianou_2006}, for every fixed $0 < x \leq B < \infty$, the stochastic processes
$ \left\{ e^{-r(t \wedge  \tau_{B}^+)} W^{(r)} (X_{t \wedge  \tau_{B}^+}); t \geq 0 \right\}$ and $\left\{e^{-r(t \wedge  \tau_{B}^+)} Z^{(r)} (X_{t \wedge \tau_{B}^+}); t \geq 0 \right\}$ are $\p^x$-martingales.  Consequently, we have
\begin{align}
(\mathcal{L}-r) W^{(r)}(x) = (\mathcal{L}-r) Z^{(r)}(x) = 0, \quad x > 0. \label{generator_zero}
\end{align}
 With this and the definition of $v_{B^*}$ in \eqref{U-smaller_than_b_star},  if $0 < B^* < \infty$, it follows that
\begin{align}
(\mathcal{L} - r) v_{B^*} (x) = 0 \quad x\in(0, B^*). \label{eqn_generator_equal}
\end{align}

It remains to show that $(\mathcal{L}-r) v_{B^*}(x) \leq 0$ for every $x > B^*$. We do so via Lemmas \ref{generator_less_b_star_zero}-\ref{lemma_delta_decreasing} below.
\begin{lemma} \label{generator_less_b_star_zero}
If $B^* = 0$, then we have $(\mathcal{L}-r) v_{B^*}(0+) = (\mathcal{L}-r) h(0+) \leq 0$.
\end{lemma}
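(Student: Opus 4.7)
\medskip

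The plan is first to exploit Lemma \ref{lemma_b_star_zero} to restrict to the bounded-variation regime (so $\sigma = 0$) and, moreover, to observe that the inequality $\pcheck - r\gamma - (\acheck+\gamma)\Pi(0,\infty)\geq 0$ forces $\Pi(0,\infty) < \infty$ because $\acheck+\gamma>0$. Thus we are automatically in the compound Poisson plus drift case, where the generator $\mathcal{L} f(x) = \mu f'(x)+\int_0^\infty[f(x-z)-f(x)]\,\Pi(\diff z)$ is well-defined pointwise even on functions with a jump at the origin, and where Lemma \ref{lemma_zero} supplies $W^{(r)}(0)=1/\mu$ and $W^{(r)\prime}(0+)=(r+\Pi(0,\infty))/\mu^2$.

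Next, since $B^*=0$ yields $v_{B^*}(x)=h(x)$ for $x\geq 0$ and $v_{B^*}(x)=0$ for $x<0$, evaluating the generator at $x=0+$ gives
\begin{align*}
\mathcal{L} v_{B^*}(0+)
&= \mu\, h'(0+) + \int_0^\infty \bigl[v_{B^*}(-z) - v_{B^*}(0+)\bigr]\,\Pi(\diff z) \\
&= \mu\, h'(0+) - h(0+)\,\Pi(0,\infty).
\end{align*}
The rest of the proof is a direct computation. Using \eqref{hx_scale} rewritten as $h(x) = -(\pcheck/r+\acheck)Z^{(r)}(x) + (\pcheck+r\acheck)/\lapinv\cdot W^{(r)}(x) + \pcheck/r - \gamma$ on $x>0$, together with $Z^{(r)\prime}(x)=rW^{(r)}(x)$, I would compute
\begin{align*}
h(0+) &= \frac{\pcheck+r\acheck}{\lapinv\,\mu} - (\acheck+\gamma),\\
\mu\, h'(0+) &= (\pcheck+r\acheck)\!\left(\frac{r+\Pi(0,\infty)}{\lapinv\,\mu}-1\right).
\end{align*}
Substituting into $(\mathcal{L}-r)v_{B^*}(0+) = \mu h'(0+) - (r+\Pi(0,\infty))h(0+)$, the two terms involving $(\pcheck+r\acheck)/(\lapinv\,\mu)$ cancel, leaving
\begin{align*}
(\mathcal{L}-r)v_{B^*}(0+) = -\bigl[\pcheck - r\gamma - (\acheck+\gamma)\Pi(0,\infty)\bigr],
\end{align*}
which is non-positive by the characterization of $B^*=0$ in Lemma \ref{lemma_b_star_zero}.

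The only delicate point is justifying that the generator may indeed be applied to the discontinuous function $v_{B^*}$ at $x=0+$. This is where the reduction to the compound Poisson case is essential: $\Pi$ is a finite measure, so the integral $\int_0^\infty [v_{B^*}(-z)-v_{B^*}(0+)]\,\Pi(\diff z)=-h(0+)\Pi(0,\infty)$ is finite, and there is no need to interpret the generator in a distributional sense. Once this is noted, the remaining steps are routine algebra, and the elegant cancellation of the scale-function terms recovers precisely the defining inequality for $B^*=0$.
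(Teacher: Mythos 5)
Your proof is correct, and the algebra checks out: with $W^{(r)}(0)=1/\mu$ and $W^{(r)\prime}(0+)=(r+\Pi(0,\infty))/\mu^{2}$ the computation does collapse to $(\mathcal{L}-r)v_{B^*}(0+)=-\bigl[\pcheck-r\gamma-(\acheck+\gamma)\Pi(0,\infty)\bigr]$, which is exactly the quantity the paper obtains, and your preliminary observation that $B^*=0$ forces $\sigma=0$ and $\Pi(0,\infty)<\infty$ (so that the generator acts pointwise on the discontinuous payoff) is the same reduction the paper makes. The only real difference is the bookkeeping. The paper invokes the harmonicity $(\mathcal{L}-r)W^{(r)}=(\mathcal{L}-r)Z^{(r)}=0$ on $(0,\infty)$ (its \eqref{generator_zero}) to strip the scale-function part off $h$ and reduce everything to the step function $L(x)=(\pcheck/r-\gamma)+(\acheck+\gamma)1_{\{x\leq 0\}}$, whose generator is evaluated in one line without ever needing $W^{(r)\prime}(0+)$. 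You instead carry the scale-function terms through explicitly and verify by hand that they cancel; this requires the extra input of Lemma \ref{lemma_zero}'s value of $W^{(r)\prime}(0+)$ in the compound Poisson case, and the cancellation you observe is precisely the harmonicity identity in disguise. Your route is slightly more computational but has the minor virtue of exhibiting $h(0+)$ and $h'(0+)$ explicitly; both arguments land on the defining inequality of Lemma \ref{lemma_b_star_zero} and are equally rigorous.
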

\begin{proof}
Using \eqref{generator_zero}, we have $(\mathcal{L}-r) h(0+) = (\mathcal{L}-r) L(0+)$, where
$L(x) := ( \frac \pcheck r - \gamma ) + \left(\acheck + \gamma \right) 1_{\{x \leq 0\}}$, $x \in \R$. Now, with $L'(0+)=0$, plus $\sigma = 0$ and $\Pi(0,\infty) < \infty$ by Lemma \ref{lemma_b_star_zero}, we obtain
\begin{align*}
 (\mathcal{L}-r) L(0+) &=  \int_0^\infty \left( L(0-z) - L(0) \right) \Pi (\diff z) - r \left( \frac \pcheck r - \gamma \right) =  (\acheck+\gamma) \Pi(0,\infty) - \left( \pcheck - \gamma r \right).
\end{align*}
Recalling Lemma \ref{lemma_b_star_zero}, we conclude that  $(\mathcal{L}-r) h(0+) \le 0$.
\end{proof}

\begin{lemma} \label{discont_generator}
If $\sigma > 0$ and $0 < B^* < \infty$, then we have $\Delta''(B^*-) \geq 0$.
\end{lemma}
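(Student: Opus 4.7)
My plan is to identify $\Delta_{B^*}''(B^*-)$ with $\varrho'(B^*)$ via the smooth-fit identity, and then invoke the monotonicity property \eqref{assumeW} of the ratio $W^{(r)'}/W^{(r)}$ to conclude non-negativity. Concretely, differentiating \eqref{delta_scale} twice in $x$ on $(0,B)$ produces
\begin{equation*}
\Delta_B''(x) = (\pcheck + \acheck r)\, W^{(r)'}(x) - \frac{W^{(r)''}(x)}{W^{(r)}(B)}\, G^{(r)}(B).
\end{equation*}
Setting $B = B^*$, letting $x \uparrow B^*$, and using the smooth-fit identity $\varrho(B^*)=0$ from \eqref{definition_b_star} to substitute $G^{(r)}(B^*) = (\pcheck + \acheck r)\,[W^{(r)}(B^*)]^2/W^{(r)'}(B^*)$, a short algebraic simplification yields
\begin{equation*}
\Delta_{B^*}''(B^*-) = \frac{G^{(r)}(B^*)}{[W^{(r)}(B^*)]^2}\Bigl([W^{(r)'}(B^*)]^2 - W^{(r)''}(B^*-)\,W^{(r)}(B^*)\Bigr),
\end{equation*}
which is precisely $\varrho'(B^*)$ as computed in \eqref{varrho2}.

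To sign this quantity, I would invoke \eqref{assumeW}, which asserts that $x \mapsto W^{(r)'}(x)/W^{(r)}(x)$ is non-increasing on $(0,\infty)$. Wherever this ratio is differentiable, its derivative must then be non-positive, and this rearranges exactly to $[W^{(r)'}(B^*)]^2 \geq W^{(r)''}(B^*-)\,W^{(r)}(B^*)$. Since $G^{(r)}(B^*) \geq \acheck + \gamma > 0$ by \eqref{GrB}, the desired inequality $\Delta_{B^*}''(B^*-) \geq 0$ follows at once.

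The only step that genuinely requires the hypothesis $\sigma > 0$ is the existence of $W^{(r)''}(B^*-)$, equivalently, the (left) differentiability of the ratio $W^{(r)'}/W^{(r)}$ at $B^*$; this is the one point that requires care. For spectrally negative \lev processes with a non-trivial Gaussian component, $W^{(r)}$ is in fact $C^2$ on $(0,\infty)$, sharpening the $C^1$ regularity mentioned in the discussion around \eqref{assumeW}, so the second-order quantities above are all well-defined. Everything else is routine algebra together with the monotonicity property \eqref{assumeW}.
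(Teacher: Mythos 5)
Your proposal is correct and follows essentially the same route as the paper: differentiate \eqref{delta_scale} twice, use the smooth-fit identity $\varrho(B^*)=0$ to rewrite the second derivative as (a positive multiple of) $-\frac{\partial}{\partial B}\bigl(W^{(r)\prime}(B)/W^{(r)}(B)\bigr)\big|_{B=B^*}$, and conclude from the log-concavity property \eqref{assumeW}, with $\sigma>0$ supplying the $C^2$ regularity of $W^{(r)}$. The explicit identification of $\Delta_{B^*}''(B^*-)$ with $\varrho'(B^*)$ from \eqref{varrho2} is a nice touch but only a cosmetic difference from the paper's substitution.
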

\begin{proof}
By (\ref{definition_b_star}) and direct computation, we get
\begin{align*}
\Delta''(B^*-)= \left( r \acheck + \pcheck \right) W^{(r)'} (B^*)  - \frac {W^{(r)''}(B^*)} {W^{(r)}(B^*)} G^{(r)}(B^*)
=  - \left( r \acheck + \pcheck \right) \frac {(W^{(r)} (B^*))^2} {W^{(r)'} (B^*)} \frac \partial {\partial B}\frac {W^{(r)'}(B)} {W^{(r)}(B)},
\end{align*}
which is non-negative by \eqref{assumeW}. Here $W^{(r)}$ is twice differentiable since $\sigma > 0$.
\end{proof}

Suppose $0 < B^* < \infty$. Since $v_{B^*}(\cdot)$ and $v_{B^*}'(\cdot)$ are continuous at $B^*$ by the continuous and smooth fit conditions, it follows from Lemma \ref{discont_generator}   that  $(\mathcal{L}-r) v_{B^*}(B^*-) \geq (\mathcal{L}-r) v_{B^*}(B^*+)$, and then by \eqref{eqn_generator_equal} that $(\mathcal{L}-r) v_{B^*}(B^*+) \leq 0$.
Suppose $B^* = 0$, we have $(\mathcal{L}-r) v_{B^*}(B^*+) \leq 0$ by Lemma \ref{generator_less_b_star_zero}. Now, in order to show $(\mathcal{L}-r) v_{B^*}(x) \leq 0$ on $(B^*,\infty)$, it is sufficient to prove that $(\mathcal{L}-r) v_{B^*}(x)$ is decreasing on $(B^*,\infty)$.
To this end, we rewrite $v_{B^*}$ as
\begin{align*}
v_{B^*}(x) = \left( \widetilde{h}(x) + \widetilde{\Delta}(x) \right) 1_{\{x \neq 0\}}, \quad x \in \mathbb{R}
\end{align*}
where
\begin{align}
\notag \widetilde{h}(x) &= - \left( \frac \pcheck r + \acheck \right) \left( Z^{(r)}(x) - \frac r {\lapinv} W^{(r)}(x) \right), \quad x \in \mathbb{R},\\
\label{tildeDel}\widetilde{\Delta}(x) &= \left\{ \begin{array}{ll}\displaystyle \left( \frac \pcheck r + \acheck \right) Z^{(r)}(x) - \frac {W^{(r)}(x)} {W^{(r)}(B^*)} G^{(r)}(B^*), &  x \in (-\infty, B^*),\\
\displaystyle \frac \pcheck r - \gamma, & x \in [B^*, \infty).
\end{array} \right.
\end{align}
By \eqref{generator_zero}, $(\mathcal{L} - r) \widetilde{h}(x)=0$ for every $x > 0$. Furthermore, $\widetilde{\Delta}'(x)=\widetilde{\Delta}''(x)=0$ on $x > B^*$,
and hence
\begin{align*}
(\mathcal{L} - r) v_{B^*}(x) = (\mathcal{L} - r) \widetilde{\Delta} (x) = \int_0^\infty \left( \widetilde{\Delta} (x-z) - \widetilde{\Delta} (x) \right) \Pi(\diff z) - \left( \pcheck - r \gamma \right), \quad x > B^*.
\end{align*}
 In order to show that this is decreasing in $x$ on $(B^*,\infty)$, it is sufficient to show that the integrand in the right-hand side is decreasing in $x$ or equivalently $\widetilde{\Delta}(x-z)$ is decreasing in $x$ for every fixed $z$ by noting that  $\widetilde{\Delta}(x)$ is a constant on $(B^*,\infty)$.

\begin{lemma} \label{lemma_delta_decreasing}
 The function $\widetilde{\Delta}(\cdot)$ is decreasing on $\mathbb{R}$ and is uniformly bounded  below  by $ \pcheck / r - \gamma > 0$.
\end{lemma}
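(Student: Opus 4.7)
The plan is to analyze $\widetilde{\Delta}$ piece by piece on the three intervals $(-\infty,0]$, $(0,B^*)$, and $[B^*,\infty)$, and show that across each piece (and at the junction points) the function is non-increasing with infimum $\pcheck/r-\gamma$. Throughout we may assume $B^*>0$; the case $B^*=0$ is trivial since $\widetilde{\Delta}$ is piecewise constant with the jump going downward at $0$.

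First, on $(-\infty,0]$ I would use $W^{(r)}(x)=0$ and $Z^{(r)}(x)=1$ to read off from \eqref{tildeDel} that $\widetilde{\Delta}\equiv \pcheck/r+\acheck$, which is strictly larger than $\pcheck/r-\gamma$ since $\acheck+\gamma>0$. Next, on $[B^*,\infty)$ the function is constant at $\pcheck/r-\gamma>0$ by \eqref{assumption_basic}. To see that the two branches match continuously at $B^*$, I would substitute $x=B^*$ into the first line of \eqref{tildeDel} and use the definition \eqref{GrB} of $G^{(r)}(B^*)$ to cancel the $Z^{(r)}(B^*)$ and $\acheck$ terms, leaving exactly $\pcheck/r-\gamma$.

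The main work is on $(0,B^*)$. Differentiating the first branch of \eqref{tildeDel} and using $Z^{(r)\prime}=rW^{(r)}$,
\begin{align*}
\widetilde{\Delta}'(x)=(\pcheck+r\acheck)W^{(r)}(x)-\frac{W^{(r)\prime}(x)}{W^{(r)}(B^*)}G^{(r)}(B^*).
\end{align*}
Using the smooth-fit identity $\varrho(B^*)=0$ from \eqref{definition_b_star} (that is, $\pcheck+r\acheck=\frac{W^{(r)\prime}(B^*)}{W^{(r)}(B^*)}\cdot\frac{G^{(r)}(B^*)}{W^{(r)}(B^*)}$), I would rewrite the derivative as
\begin{align*}
\widetilde{\Delta}'(x)=\frac{G^{(r)}(B^*)\,W^{(r)}(x)}{W^{(r)}(B^*)}\left(\frac{W^{(r)\prime}(B^*)}{W^{(r)}(B^*)}-\frac{W^{(r)\prime}(x)}{W^{(r)}(x)}\right).
\end{align*}
Since $G^{(r)}(B^*)\geq \acheck+\gamma>0$ and $W^{(r)}(x)>0$ on $(0,B^*)$, while \eqref{assumeW} says $W^{(r)\prime}/W^{(r)}$ is non-increasing, the parenthesized factor is $\leq 0$ for $x<B^*$. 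Thus $\widetilde{\Delta}'\leq 0$ on $(0,B^*)$, and combined with the continuous fit at $B^*$ we get $\widetilde{\Delta}(x)\geq \pcheck/r-\gamma$ on this interval.

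It remains to handle the junction at $0$. In the unbounded variation case $W^{(r)}(0)=0$ by Lemma \ref{lemma_zero}, so $\widetilde{\Delta}$ is continuous at $0$ with value $\pcheck/r+\acheck$, and the monotonicity just established extends across $0$. In the bounded variation case $W^{(r)}(0)=1/\mu>0$, so $\widetilde{\Delta}(0+)=\pcheck/r+\acheck-(1/\mu)G^{(r)}(B^*)/W^{(r)}(B^*)\leq\pcheck/r+\acheck=\widetilde{\Delta}(0-)$ (the inequality being simply that we are subtracting a non-negative quantity), which is a downward jump and therefore preserves monotonicity. The global lower bound $\pcheck/r-\gamma>0$ follows from \eqref{assumption_basic}. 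The mildly delicate point is keeping track of the possible jump at $0$ in the bounded-variation case, but as shown this goes in the favorable direction, so no extra work is needed.
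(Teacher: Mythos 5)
Your proof is correct and follows essentially the same route as the paper: differentiate on $(0,B^*)$, eliminate $G^{(r)}(B^*)/W^{(r)}(B^*)$ via the smooth-fit identity $\varrho(B^*)=0$, invoke the monotonicity of $W^{(r)\prime}/W^{(r)}$ from \eqref{assumeW} to get $\widetilde{\Delta}'\le 0$, verify the downward jump at $0$ in the bounded-variation case, and use continuous fit at $B^*$ for the lower bound $\pcheck/r-\gamma>0$. The only cosmetic difference is that you factor the derivative as a difference of the ratios $W^{(r)\prime}/W^{(r)}$ whereas the paper uses the reciprocal ratios $W^{(r)}/W^{(r)\prime}$; these are equivalent.
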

\begin{proof}
It is clear that $\widetilde{\Delta}(\cdot)$ in \eqref{tildeDel} is
monotonically decreasing when $B^*=0$ because $\widetilde{\Delta}(0-)
= \frac \pcheck r + \acheck > \frac \pcheck r - \gamma =
\widetilde{\Delta}(0+)$. Suppose $0 < B^* < \infty$. By differentiating
$\widetilde{\Delta}(\cdot)$, for $ 0 < x < B^*$, we get
\begin{align*}
\widetilde{\Delta}'(x) &= \left( \pcheck + \acheck r \right) W^{(r)} (x) - \frac {W^{(r)'}(x)} {W^{(r)}(B^*)} G^{(r)}(B^*) = W^{(r)'}(x) \left\{ \left( r \acheck + \pcheck \right) \left[ \frac {W^{(r)} (x)} {W^{(r)'}(x)} - \frac {W^{(r)} (B^*)} {W^{(r)'} (B^*)} \right] \right\}.
\end{align*}

Note that $\widetilde{\Delta}'(\cdot)$ is non-positive because \eqref{assumeW} implies that
\begin{align*}
\frac {W^{(r)'}(x)} {W^{(r)}(x)} \geq \frac {W^{(r)'}(B^*)} {W^{(r)}(B^*)} \Longleftrightarrow \frac {W^{(r)}(x)} {W^{(r)'}(x)} \leq \frac {W^{(r)}(B^*)} {W^{(r)'}(B^*)}, \quad x \leq B^*.
\end{align*}
Furthermore, at zero, we have $\widetilde{\Delta} (0+) - \widetilde{\Delta} (0-) = -\frac {W^{(r)}(0)} {W^{(r)}(B^*)} G^{(r)}(B^*) \leq 0.$
As for uniform (lower) boundedness,  continuous fit implies that $\widetilde{\Delta}(B^*-)=\widetilde{\Delta}(B^*+)=\frac \pcheck r -\gamma > 0$ (see \eqref{assumption_basic}). Since $\widetilde{\Delta}$ is constant on $(-\infty,0) \cup (B^*,\infty)$, we conclude.
\end{proof}

We therefore have $(\mathcal{L}-r) v_{B^*} (x) \leq 0$ for every $x > B^*$, with $0 \leq B^* < \infty$. Along with \eqref{eqn_generator_equal}, the proof of Lemma \ref{generator_non_positive}  is complete.

\subsection{Proof of Theorem \ref{optimality_buyer}}
Due to the potential discontinuity and non-smoothness of the value function at zero, we need to proceed carefully.
Before showing the main result, we first prove that   $0 = v_{B^*}(0) \leq v_{B^*}(0+)$ as illustrated in Figure \ref{plot_v_h}.

Suppose $0 < B^* < \infty$. Both $h(x)$ and $v_{B^*}(x)$ are both increasing in $x$ (see \eqref{hx} and \eqref{U-smaller_than_b_star}). Since $W^{(r)}$ is increasing and non-negative, we must have $\frac {\pcheck + \acheck r} {\Phi(r)}  - \frac {G^{(r)}(B^*)} {W^{(r)}(B^*)} \geq 0$ and hence $v_{B^*}(0+) = W^{(r)}(0) \left( \frac {\pcheck + \acheck r} {\Phi(r)}  - \frac {G^{(r)}(B^*)} {W^{(r)}(B^*)} \right) \geq 0$,
and this is strictly positive if and only if $X$ has paths of bounded variation by Lemma \ref{lemma_zero}. On the other hand, if  $B^*=0$ (which implies $X$ is of bounded variation by Lemma \ref{lemma_b_star_zero}), we also have $v_{B^*}(0+) \geq 0$ because \eqref{vb_derv} implies that, for any $\varepsilon > 0$, $\partial v_B(\varepsilon) /  {\partial B} \leq 0$ for every $B > \varepsilon$ and $\lim_{B \rightarrow \infty}v_B(\varepsilon) = 0$.

We focus on the case it is discontinuous at $0$ (or $X$ is of bounded variation) and then address how the proof can be modified for the other case.
We first construct a sequence of functions $v_n(\cdot)$ such that (1) it is continuous everywhere, (2) $v_n(x) = v_{B^*}(x)$ on $x \in (0,\infty)$ and (3) $v_n(x) \downarrow v_{B^*}(x)$ pointwise for every fixed $x \in (-\infty,0)$.
 This implies, by noting that $v_{B^*}(x)=v_n(x)$ and $v_{B^*}'(x)=v'_n(x)$ on $(0,\infty)$, that $(\mathcal{L}-r) (v_n - v_{B^*})(x)$ decreases monotonically in $n$ to zero  for every fixed $x \in (0,\infty) $ by the monotone convergence theorem.  Notice that $v_{B^*}(\cdot)$ is uniformly bounded  because $h(\cdot)$ is.  Hence, we can choose so that $v_n$ is also uniformly bounded for every fixed $n \geq 1$.

Suppose $K < \infty$ is the maximum difference between $v_{B^*}$ and $v_n$.  Using $N$ as the Poisson random measure for $-X$, we have by compensation formula (see, e.g., Theorem 4.4 in \cite{Kyprianou_2006}), for every $\nu \in \S$,

\begin{multline}
\E^x \left[ \int_0^{\nu} e^{-rs} |(\mathcal{L}-r) (v_n-v_{B^*}) (X_{s-})| \diff s\right]  \leq K \E^x \left[ \int_0^{\theta} e^{-rs} \Pi(X_{s-},\infty) \diff s\right]  \\ = K \E^x \left[ \int_0^\infty \int_0^\infty e^{-rs} 1_{\{\theta \geq s, \; u > X_{s-}\}} \Pi(\diff u) \diff s \right]  = K \E^x \left[ \int_0^\infty \int_0^\infty e^{-rs} 1_{\{\theta \geq s, \; u > X_{s-}\}} N (\diff u \times \diff s) \right] \\ = K \E^x \left[ e^{-r \theta} 1_{\{X_\theta < 0, \, \theta < \infty \}}\right] < \infty. \label{upper_bound_generator_v_n}
\end{multline}

 Fix $\epsilon > 0$.  By applying Ito's formula to $\left\{ e^{-r {(t \wedge  \tau_\epsilon^-)}} v_n(X_{t \wedge \tau_\epsilon^-}); t \geq 0 \right\}$, we see that
\begin{align}
\left\{ e^{-r {(t  \wedge \tau_\epsilon^-)}} v_n(X_{t \wedge \tau_\epsilon^-}) - \int_0^{t  \wedge \tau_\epsilon^-} e^{-rs}\left( (\mathcal{L} - r) v_n (X_{s-})  \right) \diff s; \quad t \geq 0 \right\} \label{local_martingale}
\end{align}
is a local martingale.  Suppose $\left\{\sigma_k; k \geq 1 \right\}$ is the corresponding localizing sequence, namely, for given $k \geq 1$,
\begin{multline*}
\E^x \left[ e^{-r {(t  \wedge \tau_\epsilon^- \wedge \sigma_k)}} v_n(X_{t  \wedge \tau_\epsilon^-\wedge \sigma_k}) \right] = v_n(x) + \E^x \left[  \int_0^{t  \wedge \tau_\epsilon^-\wedge \sigma_k} e^{-rs}\left( (\mathcal{L} - r) v_n (X_{s-})  \right) \diff s \right]  \\
= v_n(x) + \E^x \left[  \int_0^{t  \wedge \tau_\epsilon^-\wedge \sigma_k} e^{-rs}\left( (\mathcal{L} - r) (v_n - v_{B^*}) (X_{s-})  \right) \diff s \right] + \E^x \left[  \int_0^{t  \wedge \tau_\epsilon^-\wedge \sigma_k} e^{-rs}\left( (\mathcal{L} - r) v_{B^*} (X_{s-})  \right) \diff s \right],
\end{multline*}
where the second line makes sense by \eqref{upper_bound_generator_v_n}.  Applying the dominated convergence theorem on the left-hand side and the monotone convergence convergence theorem on the right-hand side (here the integrands in the two expectations are positive and negative, respectively, by Lemma \ref{generator_non_positive}), we obtain again by \eqref{upper_bound_generator_v_n}
\begin{align*}
\E^x \left[ e^{-r {(t  \wedge \tau_\epsilon^-)}} v_n(X_{t \wedge
\tau_\epsilon^-}) \right] = v_n(x) + \E^x \left[  \int_0^{t  \wedge \tau_\epsilon^-} e^{-rs}\left( (\mathcal{L} - r) v_n (X_{s-})  \right) \diff s \right].
\end{align*}
Hence,  \eqref{local_martingale} is in fact a martingale.

Now fix $\nu \in \S$. By the optional sampling theorem, we have, for any $M \geq 0$, that
\begin{multline*}
\E^x \left[ e^{-r {(\nu \wedge \tau_\epsilon^- \wedge M)}} v_n(X_{\nu \wedge \tau_\epsilon^- \wedge M}) \right] = v_n(x) + \E^x \left[  \int_0^{\nu \wedge \tau_\epsilon^- \wedge M} e^{-rs}\left( (\mathcal{L} - r) v_n (X_{s-})  \right) \diff s \right] \\
\leq v_n(x) + \E^x \left[  \int_0^{\nu \wedge \tau_\epsilon^- \wedge M} e^{-rs}\left( (\mathcal{L} - r) (v_n-v_{B^*}) (X_{s-})  \right) \diff s \right]
\end{multline*}
where the last equality holds by Lemma \ref{generator_non_positive}.  Applying dominated convergence via \eqref{upper_bound_generator_v_n} and because $\nu \wedge \tau_\epsilon^- \xrightarrow{a.s.} \nu$ as $\epsilon \downarrow 0$, upon taking limits as $M \uparrow \infty$ and  $\epsilon \downarrow 0$,
\begin{align}
\E^x \left[ e^{-r \nu} v_n(X_{\nu}) 1_{\{\nu < \infty \}}\right] \leq v_n(x) + \E^x \left[ \int_0^{\nu} e^{-rs} ((\mathcal{L}-r) (v_n -v_{B^*})(X_{s-})) \diff s\right]. \label{supermtg_proof}
\end{align}
Furthermore, the monotone convergence theorem yields  the followings:
\begin{align*}
&\lim_{n \rightarrow \infty} \E^x \left[ e^{-r \nu} v_n(X_\nu) 1_{\{\nu < \infty \}} \right] =  \E^x \left[ e^{-r \nu} \left( v_{B^*}(X_{\nu }) + v_{B^*}(0+) 1_{\{X_\nu = 0, \nu < \infty\}} \right) \right] \geq \E^x \left[ e^{-r \nu} v_{B^*}(X_{\nu }) 1_{\{\nu < \infty \}} \right],\\
&\lim_{n \rightarrow \infty} \E^x \left[ \int_0^{\nu} e^{-rs} ((\mathcal{L}-r) (v_n -v_{B^*})(X_{s-})) \diff s\right] = 0.
\end{align*}
Therefore, by taking $n \rightarrow \infty$ on both sides of \eqref{supermtg_proof} (note $v_{B^*}(x) = v_n(x)$), we have
\begin{align*}
v_{B^*}(x)\geq \E^x \left[ e^{-r \nu} v_{B^*}(X_{\nu}) 1_{\{\nu < \infty \}} \right] \geq  \E^x \left[ e^{-r \nu}h(X_{\nu}) 1_{\{\nu < \infty \}} \right], \quad \nu \in \S,
\end{align*}
where the last inequality follows from  \eqref{v_greater_than_h}. This together with the fact that the stopping time $\tau_{B^*}^+$ corresponds to the value function $v_{B^*}$ completes the proof for the case $v_{B^*}$ is discontinuous at $0$.  For the case $v_{B^*}$ is continuous (unbounded variation case), the proof is simpler; the approximation via $v_n$ is no longer needed but the localization via $\tau_\epsilon^-$ is still needed because the value function fails to be $C^1$ at zero.

\subsection{Proof of Lemma \ref{u_minimum_seller}}
Fix $A > 0$. Using   \eqref{laplace_theta}  and Lemma \ref{remark_gamma_x_a}, we differentiate to get
\begin{align*}
\frac \partial {\partial A} \zeta(x-A) &= - r W^{(r)}(x-A) + \frac r {\lapinv} W^{(r)'}(x-A), \\
\frac \partial {\partial A} \Gamma(x;A)
&= - \frac {\rho(A)} {\Phi(r)} W^{(r)'} (x-A) +  W^{(r)} (x-A) \int_A^\infty \Pi(\diff u) \left( 1 - e^{-\Phi(r) (u-A)}\right) \\
&=  \left( W^{(r)}(x-A) - \frac 1 {\lapinv} W^{(r)'}(x-A)\right) \rho(A).
\end{align*}
Therefore, it follows from (\ref{delta_seller}) that
\begin{align}
\frac \partial {\partial A} \Delta_A(x) &= \left( W^{(r)}(x-A) - \frac 1 {\lapinv} W^{(r)'}(x-A)\right) \left((\gamma r + \pcheck) - (\acheck-\gamma) \rho(A) \right)\notag\\
&= \frac{e^{\lapinv(x-A)}} {\lapinv} W'_{\lapinv}(x-A) \left( (\acheck-\gamma) \rho(A) - (\gamma r + \pcheck)\right),\label{App_sign}
\end{align}
where $W_{\lapinv} (x):=e^{-\lapinv x} W^{(r)} (x) $. Since $W_{\lapinv}(x)$ is increasing in $x$ (see  p.219 of \cite{Kyprianou_2006}) and   $\rho(A)$ is decreasing, we can infer the sign of $\frac \partial {\partial A} \Delta_A(x)$ from \eqref{signofderiv}.

\subsection{Proof of Lemma \ref{lemma_generator_zero_seller}}
(1) We first show  $(\mathcal{L}-r) u_{A^*}(x) = 0$ for every $x > A^*$.  Define a function $\widetilde{u}$  by \eqref{u_tilde_a} and (\ref{def_w2}) but extended to  the whole real line. This way, for $x < 0$, $u_{A^*}(x) = 0$  while $\widetilde{u}(x)$  does not vanish. Now fix $A^* > 0$. Using (\ref{u_tilde_a}) and recalling that $(\mathcal{L}-r) Z^{(r)}(y)=(\mathcal{L}-r) W^{(r)}(y) = 0$ for every $y > 0$, we have
\begin{align}
(\mathcal{L}-r)\widetilde{u}(x) &= (\acheck - \gamma) \left[  \frac 1 r \int_{A^*}^\infty \Pi (\diff u)   \left[ (\mathcal{L}-r)Z^{(r)} (x-A^*) - (\mathcal{L}-r)Z^{(r)}(x-u) \right]  \right]\notag\\
&- \left( \gamma + \frac \pcheck r \right)  (\mathcal{L}-r)Z^{(r)}(x-A^*) + \left(\frac \pcheck r + \acheck \right) \left[ (\mathcal{L}-r)Z^{(r)}(x) - \frac r {\lapinv}(\mathcal{L}-r) W^{(r)}(x) \right] \notag\\
&= - \frac {\acheck - \gamma} r \int_x^\infty \Pi (\diff u)   \left[ (\mathcal{L}-r)Z^{(r)}(x-u) \right]   = (\acheck - \gamma) \Pi(x,\infty). \label{Lruu1}
\end{align}
Here the generator can go into the integral because, by Lemma \ref{lemma_zero}, $Z^{(r)}(\cdot)$ is continuous when $\sigma = 0$ and is continuously differentiable when $\sigma > 0$.
On the other hand, if $A^*=0$, then from (\ref{def_w2}) we derive that
$(\mathcal{L}-r)\widetilde{u}(x)  = (\acheck - \gamma) \Pi(x,\infty)$
as above. Furthermore, it is straightforward to show that $(\mathcal{L}-r)(u_{A^*}(x)-\widetilde{u}(x)) = -\Pi(x,\infty) (\acheck-\gamma)$.  Combining this with \eqref{Lruu1}, we get $(\mathcal{L}-r) u_{A^*}(x) = 0$.

(2) We now show for the case $A^* > 0$ that we have $(\mathcal{L}-r) u_{A^*}(x) > 0$ for every $0 < x < A^*$. Notice that $u_{A^*}(x) = g(x)$ on $0 < x < A^*$.  By \eqref{def_G},
\begin{align*}
(\mathcal{L}-r) u_{A^*}(x) = (\mathcal{L}-r) L(x) \quad \textrm{where} \quad
L(x) := - \left(\gamma + \frac \pcheck r \right) - (\acheck - \gamma) 1_{\{x \leq 0\}}.
\end{align*}
Therefore, we have
\begin{multline*}
(\mathcal{L}-r) u_{A^*}(x) = (\mathcal{L}-r) L(x) = \int_0^\infty \left( L(x-u) - L(x) \right) \Pi(\diff u)  - r L(x) \\ =  - \left( \acheck - \gamma \right) \Pi(x,\infty)  + (r \gamma + \pcheck) \leq - \left( \acheck - \gamma \right) \rho(A^*) + (r \gamma + \pcheck)  = 0.
\end{multline*}
Here the inequality holds by (\ref{varrho_A}) and because $x < A^*$. The last equality holds by \eqref{smooth_fit_seller}.

\linespread{1.1}
\bibliographystyle{abbrv}
\bibliographystyle{agsm}
\begin{small}
 \bibliography{Leung_Yamazaki_bib}
\end{small}
 \end{document}